\newtheorem{theorem}{Theorem}
\newtheorem{lemma}[theorem]{Lemma}
\newtheorem{lemma*}[theorem]{Lemma}
\newtheorem{proposition*}[theorem]{Proposition}
\newtheorem{claim}[theorem]{Claim}
\newtheorem{example}[theorem]{Example}
\newtheorem{proposition}[theorem]{Proposition}
\newtheorem{corollary}[theorem]{Corollary}
\newenvironment{proofsketch}{\noindent{\textbf{Proof ($\star$) }}}{\hfill \qed}
\newenvironment{proofthm}{\noindent{\textbf{Proof of Theorem~\ref{thm:vc-lower-bound}. }}}{\hfill \qed}
\newtheorem{reduction rule}{Reduction Rule}
\DeclareMathOperator{\VCD}{VCD}
\DeclareMathOperator{\NCTD}{NCTD}
\DeclareMathOperator{\NCTDp}{NCTD^+}
\DeclareMathOperator{\NCTM}{NCTM}
\DeclareMathOperator{\NCTMp}{NCTM^+}
\DeclareMathOperator{\B}{\mathcal{B}}
\DeclareMathOperator{\CC}{\mathcal{C}}
\DeclareMathOperator{\T}{\mathfrak{T}}
\DeclareMathOperator{\III}{\mathfrak{I}}
\DeclareMathOperator{\CCC}{\mathfrak{C}}
\DeclareMathOperator{\KKK}{\mathfrak{K}}
\DeclareMathOperator{\HHH}{\mathfrak{H}}
\DeclareMathOperator{\diam}{diam}
\newcommand{\covectors}{\ensuremath{\mathcal{L}}}
\newcommand{\srmfull}{\textsc{B-NCTD$^+$}\xspace}
\newcommand{\rmfull}{\textsc{B-NCTD}\xspace}
\newcommand{\setcover}{\textsc{Set Cover}\xspace}
\newcommand{\partsat}{\textsc{3-Partitioned-3-SAT}\xspace}
\newcommand{\vc}{\mathtt{vc}}
\newcommand{\tw}{\mathtt{tw}}
\newcommand{\ETH}{\textsf{ETH}}
\newcommand{\setrep}{\textsf{set-rep}}
\newcommand{\defproblem}[3]{%
      \noindent
      \begin{center}
      \begin{tcolorbox}[title=#1,left=0mm,top=0mm,bottom=0mm,right=0mm,boxsep=0.5mm]
        \begin{tabular}{p{.13\textwidth}p{.8\textwidth}}
            \textbf{Input:} & \parbox[t]{.8\textwidth}{#2}\\
            \addlinespace
            \textbf{Question:} & \parbox[t]{.8\textwidth}{#3}\\
        \end{tabular}
      \end{tcolorbox}
      \end{center}
      }
\begin{document}

\title{Non-Clashing Teaching Maps for Balls in Graphs\thanks{This is a preprint of~\cite{CCMR24}, a work published in the proceedings of COLT 2024.}}

\author[1]{Jérémie~Chalopin}

\author[1]{Victor~Chepoi}

\author[2]{Fionn~{Mc~Inerney}}

\author[1]{Sébastien~Ratel}

\affil[1]{{\small Aix-Marseille Universit\'e, Universit\'e de Toulon, CNRS, LIS, Marseille, France}}

\affil[2]{{\small Algorithms and Complexity Group, TU Wien, Vienna, Austria}}

\date{}

\maketitle

\begin{abstract}%
Recently, Kirkpatrick et al.~[ALT 2019] and  Fallat et al.~[JMLR 2023] introduced non-clashing teaching and showed it
to be the most efficient machine teaching model satisfying the benchmark for collusion-avoidance set by Goldman and Mathias.
A teaching map $T$ for a concept class $\mathcal{C}$ assigns a
(teaching) set $T(C)$ of examples to each concept
$C \in \mathcal{C}$. A teaching map is non-clashing if no pair of
concepts are consistent with the union of their
teaching sets. The size of a non-clashing teaching map (NCTM) $T$ is
the maximum size of a teaching set $T(C)$, $C \in \mathcal{C}$.
The non-clashing teaching dimension NCTD$(\mathcal{C})$ of
$\mathcal{C}$ is the minimum size of an NCTM for $\mathcal{C}$.
NCTM$^+$ and NCTD$^+(\mathcal{C})$ are defined analogously,
except the teacher may only use positive examples.

We study NCTMs and NCTM$^+$s for the concept class $\mathcal{B}(G)$ consisting of all balls of a graph $G$.
We show that the associated decision problem \textsc{B-NCTD$^+$} for NCTD$^+$ is \NP-complete in split, co-bipartite, and bipartite graphs.
Surprisingly, we even prove that, unless the \ETH\ fails, \textsc{B-NCTD$^+$} does not admit an algorithm running in time $2^{2^{o(\mathtt{vc})}}\cdot n^{\mathcal{O}(1)}$, nor a kernelization algorithm outputting a kernel with $2^{o(\mathtt{vc})}$ vertices, where $\mathtt{vc}$ is the vertex cover number of $G$.
We complement these lower bounds with matching upper bounds.
These are extremely rare results: it is only the second problem in \NP\ to admit such a tight double-exponential lower bound parameterized by $\mathtt{vc}$, and only one of very few problems to admit such an \ETH-based conditional lower bound on the number of vertices in a kernel.
For trees, interval graphs, cycles, and trees of cycles, we derive NCTM$^+$s or NCTMs for $\mathcal{B}(G)$ of size proportional to its VC-dimension.
For Gromov-hyperbolic graphs, we design an approximate NCTM$^+$ for $\mathcal{B}(G)$ of size $2$, in which only pairs of balls with Hausdorff distance larger than some constant must satisfy the non-clashing condition.
\end{abstract}

{\keywords{Non-clashing teaching, VC-dimension, balls in graphs, parameterized complexity, vertex cover, kernelization, double-exponential lower bounds, \ETH\ lower bounds, hyperbolic~graphs}}

\section{Introduction}

Machine teaching is a core paradigm in computational learning theory that has attracted significant attention due to its applications in diverse areas
such as trustworthy AI~\cite{MZ15,ZZW18}, inverse reinforcement learning~\cite{BN19,HLMCA16}, robotics~\cite{ACYT12,TC09}, and education~\cite{Campy18,Zhu15}
(see~\cite{ZSZR18} for an overview).
In machine teaching models, given a concept class $\mathcal{C}$, a teacher presents to a learner a carefully chosen set $T(C)$
of correctly labeled examples from a concept $C\in \CC$ in such a way that the learner can reconstruct $C$ from $T(C)$. This defines the \emph{teaching map (TM)} $T$ and
the \emph{teaching sets} $T(C), C\in \CC$. The goal is to find a TM that minimizes the size of a largest teaching set.
The examples selected in $T(C)$ by the teacher are the most useful to the learner to reconstruct $C$, in contrast to models of
learning (like the classical PAC-learning) where the learner must reconstruct a concept of $\CC$ from randomly chosen examples.

There are a multitude of formal models of machine teaching~\cite{Balbach,GRSZ16,GCS17,GK95,GM96,MCV19,SM91,ZLH11}, which differ by the conditions imposed
on the teacher and learner. Several of these are \emph{batch teaching models}, where the examples proposed by the teacher to the learner are sets.
This is in contrast to \emph{sequential teaching models}, where the examples are not presented all at once, but rather in an order chosen by the teacher.
A key notion in formal models of machine teaching is that the teacher and learner should not \emph{collude}.
The benchmark for preventing this is the \emph{Goldman-Mathias (GM) collusion-avoidance criterion}~\cite{GM96}, which essentially demands a teaching map $T$ to admit a learner that returns the concept $C$ whenever it is shown any set of labeled examples that include $T(C)$ and are consistent with $C$. Recently, a batch teaching model called \emph{non-clashing teaching} (\emph{NC-teaching})
was proposed~\cite{FKS23,KSZ19}.
Given a concept class $\CC$, a TM $T$ on $\CC$ is \emph{non-clashing}  if, for any two distinct concepts $C,C'\in \CC$, either $T(C)$ is
not consistent with $C'$ or $T(C')$ is not consistent with $C$, or both. They proved NC-teaching to be the most efficient model (in terms of the worst-case number of examples required) satisfying
the GM collusion-avoidance criterion. It is common to restrict the teacher to only presenting positive examples (see~\cite{Angluinjcss,Angluininf} for early successes of this approach). These models are vastly studied due to their pertinence in, \textit{e.g.}, grammatical inference~\cite{Denis01,SO94}, computational biology~\cite{WDM06,YJSS08}, and recommendation systems~\cite{SPK00}. For these reasons, \cite{FKS23,KSZ19}  also introduced and studied \emph{positive NC-teaching}, in which the teacher may only use positive examples.

As with PAC-learning, where the VC-dimension $\VCD(\CC)$ of  $\CC$ drives the number of randomly chosen examples that are sufficient to learn the concepts of $\CC$, various models of machine teaching lead to different notions of teaching dimension which bound the teaching set sizes. The definitive teaching dimension (DTD)~\cite{GK95,SM91} is a prototypical one.
A \emph{definitive teaching set (DTS)} of a concept $C\in \CC$ is a $\CC$-sample (a sample consistent with a concept of $\CC$)
for which $C$ is the only consistent concept in $\CC$.
$\text{DTD}(\CC)$ is the maximum size of a DTS over all $C\in \CC$.
Note also the important recursive teaching dimension (RTD)~\cite{ZLHZ08,ZLH11}.
NC-teaching and positive NC-teaching also have dimension parameters.
The size of a TM $T$ on $\CC$ is the maximum size of $T(C)$ over all $C\in \CC$. The \emph{non-clashing teaching dimension} $\NCTD(\CC)$ (\emph{positive non-clashing teaching dimension} $\NCTDp(\CC)$, resp.) is the
minimum size of a non-clashing TM (NCTM) for $\CC$ (positive NCTM ($\NCTMp$) for $\CC$, resp.)~\cite{FKS23,KSZ19}.
An important research direction for various notions of teaching dimension is their relationship with the VC-dimension.
For NC-teaching, the authors of \cite{FKS23,KSZ19} say that ``The most fundamental
open question resulting from our paper is probably whether NCTD is upper-bounded by VCD in general'', and this open question is also mentioned in~\cite{Si}.

NCTMs are signed versions of representation maps (RMs) for concept
classes. The authors of~\cite{KuWa} introduced RMs to design
unlabeled sample compression schemes (SCSs) for maximum concept
classes. They proved that the existence of RMs of size $d=\VCD(\CC)$
for a maximum class $\CC$ is equivalent to the existence of unlabeled
SCSs of size $d$ for $\CC$. This
equivalence was generalized to ample classes in~\cite{ChChMoWa}, and they also constructed
RMs of size $d$ for maximum classes. The main difference between NCTMs
and RMs is that RMs assign to each concept
$C \in \mathcal{C}$ a set shattered by $\mathcal{C}$.
The authors of~\cite{LiWa} introduced SCSs, which have
been vastly studied due to their importance in computational machine learning.

In this paper, we consider NCTMs and $\NCTMp$s for the concept
class $\B(G)$ consisting of \emph{all balls of a graph $G$}.
For several graph classes, we derive NCTMs for $\B(G)$ of size proportional to $\VCD(\B(G))$.
Further, we study the computational complexity of the following decision problems:

\defproblem{\textsc{NCTD for Balls in Graphs} (\rmfull)}{A graph $G$ on $n$ vertices and a positive integer $k$.}{Is $\NCTD(\B(G))\leq k$?}

\defproblem{\textsc{NCTD$^+$ for Balls in Graphs} (\srmfull)}{A graph $G$ on $n$ vertices and a positive integer $k$.}{Is $\NCTDp(\B(G))\leq k$?}

\paragraph{Motivation for balls in graphs.} The \emph{combinatorial} and \emph{geometric} aspects of balls in graphs motivate \rmfull\ and \srmfull.
The combinatorial one ensures that the study of balls in graphs is as general as that of arbitrary concept classes.  Notably, to any set-family $\mathcal{C}\subseteq 2^V$, one can
associate a set of balls of a graph $G$ as follows. $V(G)=V\cup\{x_C: C\in \mathcal{C}\}$, the vertices of $\{x_C: C\in \mathcal{C}\}$ form a clique,
and $x_C$ and $v\in V$ are adjacent if and only if $v\in C$. For any $C\in \mathcal{\CC}$, $B_1(x_C)=C\cup \{x_{C'}: C'\in \mathcal{C}\}$.
On the other hand, one may hope that for graphs $G$ with a rich metric
structure, the geometric structure of $\B(G)$ may allow to efficiently construct NCTMs, which our results confirm. Further, in light of the open question of~\cite{FKS23,KSZ19,Si}, $\B(G)$ may provide graph classes where $\NCTD(\B(G))>\VCD(\B(G))$, \textit{e.g.}, we prove that, for trees of cycles, $\NCTD(\B(G))\le 4$, while $\VCD(\B(G))\leq 3$. Trees of cycles are planar, and,
for planar graphs, $\VCD(\B(G))\le 4$~\cite{BouTh,ChEsVa}, but it is unclear how to bound $\NCTD(\B(G))$ by a small constant (it is at most $615$ as, for any set-family $\CC$ with $\VCD(\CC)=d$, $\NCTD(\CC)\leq \text{RTD}(\CC)\leq 39.3752d^2 - 3.633d$~\cite{FKS23,HWLW17}). Lastly, SCSs for balls in graphs were studied in~\cite{ChChMc}.


\paragraph{Our Results.}
Our focus is twofold: we show that 1) \srmfull is computationally hard and exhibits rare properties from parameterized complexity, and 2) for several  graph classes, we derive $\NCTM$s of size proportional to $\VCD(\B(G))$.
We begin with the first direction, proving that:

\begin{enumerate}[leftmargin=*,topsep=1pt]
\item \srmfull is \NP-complete in split and co-bipartite graphs with a universal vertex, and bipartite graphs of diameter $3$.
\end{enumerate}

Note that~\cite{KSZ19} proved that it is \NP-hard to decide, for a concept class $\CC$, whether $\NCTD(\CC)=k$ or $\NCTDp(\CC)=k$, even if $k=1$, but their results do not apply to $\B(G)$ as they rely on the fact that deciding whether $\NCTDp(\CC)=1$ is \NP-hard.\footnote{$\NCTDp(\B(G))=1$ if and only if $G$ is edgeless. Let $uv\in E(G)$. $T(B_1(v))$ must contain a vertex in $N(v)$ as $T(B_0(v))\subseteq \{v\}$, say $u\in T(B_1(v))$. Then, $T(B_1(v))$ must contain another vertex in $N[v]$ as $T(B_0(u))\subseteq \{u\}$.}

\srmfull\ being \NP-hard in these graph classes motivates studying its parameterized complexity, as was done for other problems in learning theory~\cite{BGS23,DF93,GK21,LL18}. This leads to our first \emph{main result}, which exhibits the extreme computational complexity of the problem. Recently, the authors of~\cite{FGKLMST} developed a technique to prove double-exponential dependence on the treewidth ($\tw$) and/or the vertex cover number ($\vc$) of the graph in the running time of \FPT\ algorithms for problems in~\NP. For these classic structural parameters, they proved that, unless the~\ETH\ fails,\footnote{Roughly, the Exponential Time Hypothesis (\ETH) states that $n$-variable $3$-SAT cannot be solved in time~$2^{o(n)}$.} even on bounded-diameter graphs \textsc{Metric Dimension} and \textsc{Geodetic Set} do not admit $2^{2^{o(\tw)}}\cdot n^{\mathcal{O}(1)}$-time algorithms, \textsc{Strong Metric Dimension} does not admit a $2^{2^{o(\vc)}}\cdot n^{\mathcal{O}(1)}$-time algorithm, and these bounds are tight. Notably, these were the \emph{only} problems in \NP\ known to admit such tight double-exponential lower bounds, \emph{until now}.\footnote{After a preprint of this paper appeared on arXiv, tight double-exponential dependence on the treewidth was also shown for the \NP-complete problems \textsc{Test Cover} and \textsc{Locating-Dominating Set}~\cite{chakraborty2024tight}.} Applying this technique, we obtain our first \emph{main result}:

\begin{enumerate}[leftmargin=*,topsep=1pt]
\setcounter{enumi}{1}
\item Unless the \ETH\ fails, \srmfull\ does not admit a $2^{2^{o(\vc)}}\cdot n^{\mathcal{O}(1)}$-time algorithm, even in diameter-$3$ graphs.
\end{enumerate}

Our lower bound is robust as all the traditional structural parameters like treewidth, pathwidth, and treedepth are at most $\vc+2$. Further, \srmfull\ is only the \emph{second} problem in \NP\ to admit a double-exponential dependence in $\vc$. Thus, \srmfull\ is also incredibly interesting from a purely theoretical perspective, and it is strongly linked with other metric graph problems. The same reduction yields two more results, the first of which is also extremely rare:

\begin{enumerate}[leftmargin=*,topsep=1pt]
\setcounter{enumi}{2}
\item Unless the \ETH\ fails, \srmfull\ does not admit a kernelization algorithm outputting a kernel with $2^{o(\vc)}$ vertices, even in diameter-$3$ graphs.
\item Unless the \ETH\ fails, \srmfull\ does not admit a $2^{o(n)}$-time algorithm, even in diameter-$3$ graphs.
\end{enumerate}

Indeed, such \ETH-based conditional lower bounds on the number of vertices in a kernel are very rare as they are only known for a few other problems~\cite{chakraborty2024tight,CIK16,CPP16,Foucaud24b,FGKLMST,pratik24}. We show that our lower bounds concerning $\vc$ are tight by giving matching upper bounds:

\begin{enumerate}[leftmargin=*,topsep=1pt]
\setcounter{enumi}{4}
\item \srmfull\ admits a $2^{2^{\mathcal{O}(\vc)}}\cdot n^{\mathcal{O}(1)}$-time algorithm.
\item \srmfull\ admits a kernelization algorithm outputting a kernel with $2^{\mathcal{O}(\vc)}$ vertices.
\end{enumerate}

For \srmfull, we also give a $2^{\mathcal{O}(n^2 \cdot \diam)}$-time algorithm, yielding a $2^{\mathcal{O}(n^2)}$-time algorithm in bounded-diameter graphs.
We then focus on our second goal: designing NCTMs for $\B(G)$ that are linear in $\VCD(\B(G))$, when $G$ is restricted to certain graph classes. Proving that any ball in a tree or interval graph can be distinctly represented by two of its ``farthest apart'' vertices, we show that:

\begin{enumerate}[leftmargin=*,topsep=1pt]
\setcounter{enumi}{6}
\item If $G$ is a tree or an interval graph, then $\NCTD(\B(G)) \leq \NCTDp(\B(G))=\VCD(\B(G))\leq 2$.
\end{enumerate}

In contrast to trees and interval graphs, we prove that:

\begin{enumerate}[leftmargin=*,topsep=1pt]
\setcounter{enumi}{7}
\item Cycles do not admit NCTM$^+$s of fixed size for $\B(G)$, but do admit NCTMs of size~$2$.
\end{enumerate}

With this in mind, we search for NCTMs for $\B(G)$ for richer graph classes.
This already proves difficult in trees of cycles, for which, by a technical proof, we get the following:

\begin{enumerate}[leftmargin=*,topsep=1pt]
\setcounter{enumi}{8}
\item If $G$ is a tree of cycles, then $\text{NCTD}(\B(G))\le 4$, while $\VCD(\B(G))\leq 3$.
\end{enumerate}

In analogy to PAC-learning, in approximate NCTM$^+$s, only pairs of balls with Hausdorff distance larger than some constant must satisfy the non-clashing condition. Akin to our method in trees, we show that:

\begin{enumerate}[leftmargin=*,topsep=1pt]
\setcounter{enumi}{9}
\item If $G$ is a $\delta$-hyperbolic graph, then $\B(G)$  admits a $2\delta$-approximate NCTM$^+$ of size~$2$.
\end{enumerate}

\section{Preliminaries}
This section consists of definitions and notation.
For the ball $B_r(x)$, $T(x,r)$ denotes $T(B_r(x))$.
\textit{Omitted proofs of theorems and sketches of proofs (marked by~$\star$) are in the appendix.}

\paragraph{Concept classes and samples.}
In machine learning, a \emph{concept class} on a set $V$ is any collection $\CC$ of subsets of $V$.
The \emph{VC-dimension} $\VCD(\mathcal{C})$ of $\mathcal{C}$ is the size of a largest set
$S\subseteq V$ shattered by $\mathcal{C}$, that is, such that $\{C\cap S: C\in \mathcal{C}\}=2^S$.
A \emph{sample} is a set $X=\{(x_1,y_1),\ldots,(x_m,y_m)\}$, where $x_i\in V$ and $y_i\in\{-1,+1\}$.
A sample $X$ is \emph{realizable by a concept} $C\in \CC$  if $y_i=+1$ when $x_i\in C$, and $y_i=-1$ when $x_i\notin C$.
A sample $X$ is a  \emph{$\CC$-sample} if $X$ is realizable  by some concept $C$ in $\CC$. To encode concepts of a concept class $\CC$ on $V$ and $\CC$-samples, we use the language of  sign vectors
from oriented matroids theory~\cite{BLSWZ}. Let $\covectors$ be a non-empty \textit{set of sign vectors}, \textit{i.e.},
maps from $V$ to $\{\pm 1,0\} := \{-1,0,+1\}$.
For $X \in \covectors$, let $X^+ := \{v\in V: X_v=+1\}$ and $X^-:= \{v\in V:
X_v=-1\}$. The set $\underline{X} := X^+ \cup X^-$
is called the \emph{support} of $X$.
We denote by $\preceq$  the product ordering
on $\{ \pm 1,0\}^V$ relative to the ordering of signs with $0 \preceq -1$  and $0 \preceq +1$.
Any $\CC\subseteq 2^V$ can be viewed as a set of sign vectors of $\{ \pm 1\}^V$:
each concept $C\in \CC$ is encoded by the sign vector $X(C)$, where $X_v(C)=+1$ if $v\in C$ and
$X_v(C)=-1$ if  $v\notin C$. In what follows, we consider $\CC$ simultaneously as a
collection of sets and as a set of  $\{ \pm 1\}$-vectors.
From the definition of a sample $X$, it follows that  $X$ is just a sign vector and that the samples realizable by a concept $C\in \CC$ are all $X\in \{ \pm 1,0\}^V$ such that $X\preceq C$.

\paragraph{NCTMs and NCTD.}
For a concept class $\CC$ on $V$, a TM $T$ associates, to each $C\in \CC$, a realizable sample $T(C)$ for $C$ (the \emph{teaching set} of $C$), \textit{i.e.}, $T(C)\in \{ \pm 1, 0\}^V$ and $T(C)\preceq C$.
Rephrasing the original definitions of NCTMs and RMs, 
a TM $T: \CC\rightarrow \{ \pm 1,0\}^V$ is \emph{non-clashing} if whenever $T(C')\preceq C$ and $T(C)\preceq C'$ for 
$C,C'\in \CC$, then $C=C'$. Equivalently, $T$ is non-clashing
if, for any two distinct concepts $C,C'$ of $\CC$, the \emph{non-clashing condition} holds: for all
$C,C'\in \CC$ with $C\ne C'$, $C|_{(\underline{T}(C)\cup \underline{T}(C'))}\ne C'|_{(\underline{T}(C)\cup \underline{T}(C'))}$.
If $T$ also satisfies
the \emph{inclusion condition}: $\underline{T}(C)=T^+(C)\subseteq C^+$ for any $C\in \mathcal{C}$,
then $T$ is an $\NCTMp$.
The \emph{size} of a TM $T$ for $\mathcal{C}$ is $\max\{ |\underline{T}(C)|: C\in \mathcal{C}\}$.
$\text{NCTD}(\CC)$ ($\text{NCTD}^+(\CC)$, resp.) is the
minimum size of an NCTM for $\CC$ ($\NCTMp$ for $\CC$, resp.).

\paragraph{Graphs.} In this paper, graphs are simple, connected,
and undirected, and logarithms are to the base 2. For a positive
integer $k$, $[k]:=\{1,\ldots,k\}$.  Given a graph $G$, its vertex set is $V(G)$ and its edge set is $E(G)$.
The \emph{distance} $d_G(u,v)$ between two vertices $u,v\in V(G)$ is
the length of a shortest $(u,v)$-path in $G$.  For any
$r\in \mathbb{N}$ and $u\in V(G)$, the \emph{ball of radius $r$
  centered at $u$} is $B_r(u):=\{v: d_G(u,v)\leq r\}$.  For
any $u\in V(G)$, $B_1(u)=N_G[u]$ and $N_G(u):=N_G[u]\setminus \{u\}$.
Two balls are \emph{distinct} if they are distinct as sets.
$\B(G)$ is the set of all distinct balls of $G$.
For $S\subseteq V(G)$, the \emph{diameter} of $S$ is
$\diam(S):=\max_{u,v\in S}d_G(u,v)$, and a \emph{diametral pair} of
$S$ is a pair $u,v\in S$ such that $d_G(u,v)=\diam(S)$.
The \emph{diameter} of $G$ is $\diam(G):=\diam(V(G))$.
For any $u,v\in V(G)$, the \emph{interval} $I_G(u,v)$
between $u$ and $v$ in $G$ is the set of vertices on a
shortest path between $u$ and $v$ in $G$.
The \emph{vertex cover number} $\vc(G)$ of $G$ is the minimum number
of vertices that are incident to all edges of $G$.
When the context is clear, $G$ is omitted from some of these
notations.

\paragraph{Parameterized complexity.}
An instance of a parameterized problem $\pi$ consists of an input $I$ of the non-parameterized problem and a parameter $k\in \mathbb{N}$.
A \emph{kernelization} algorithm for $\pi$ transforms, in polynomial time, an instance $(I,k)$ of $\pi$ into an \emph{equivalent} instance $(I',k')$ of $\pi$ with $|I'|,k'\leq f(k)$, for a computable function $f$.
A reduction rule is \emph{safe} if the input instance is a YES-instance if and only if the output instance is a YES-instance.
See~\cite{parambook15} for a book on the topic.

\paragraph{Examples.} To illustrate the notion, we present NCTMs for some concept classes in graphs.

\begin{example}
  Let $G$ be any graph of diameter 2 such that, for each edge $xy \in E(G)$, $B_1(x) \cup B_1(y) = V(G)$. Examples are the complete bipartite graph $K_{n,m}$ and the \emph{$n$-octahedron}, which is the complete graph $K_{2n}$ on $2n$ vertices minus a perfect matching. We define an NCTM $T$ for $\B(G)$ of size $2$ as follows:
  \begin{itemize}
  \item for all $x \in V(G)$, set $T^+(x,0) := \{x\}$ and
    $T^-(x,0):=\{y\}$ for some neighbor $y$ of $x$;
  \item for all $x \in V(G)$ such that
    $V(G) \setminus B_1(x) \neq \varnothing$, set $T^+(x,1):=\{ x\}$
    and $T^-(x,1):=\{z\}$ for some vertex $z$ at distance 2 from $x$. Also set $T^+(V(G)):=\varnothing$ and $T^-(V(G)):=\varnothing$.
  \end{itemize}
  We show that $T$ is non-clashing. For any $x\in V(G)$
  and $r\in \{0,1\}$, if $B_r(x) \neq V(G)$, then $T^-(x,r)\neq \varnothing$,
  and thus, $T$ is non-clashing for $B_r(x)$ and $V(G)$.
  Consider a ball $B_0(x)$ and let $\underline{T}(x,0) = \{x,y\}$. For
  any ball $B' \neq B_0(x)$ such that $x \in B'$ and $y \notin B'$,
  we have that $B'=B_1(y')$ for some neighbor $y'$ of $x$ distinct from
  $y$. Since $y' \in T^+(y',1) \setminus B_0(x)$, $T$ is non-clashing
  for $B_0(x)$ and any other ball. Consider now two balls $B_1(x)$ and
  $B_1(y)$ such that $B_1(x) \neq V(G)$ and $B_1(y) \neq V(G)$. If
  $d(x,y) = 2$, then $x \in T^+(x,1) \setminus B_1(y)$ and $T$ is
  non-clashing for $B_1(x)$ and $B_1(y)$. Suppose now that $x$ and $y$
  are adjacent and let $\underline{T}(x,1) = \{x,z\}$. Then, $z$ is
  adjacent to $y$, and thus, $T$ is non-clashing for $B_1(x)$ and
  $B_1(y)$.
\end{example}

There is no $\NCTMp$ of constant size for the example above, even for the $n$-dimensional octahedron $G$.
Indeed, for any $x \in V(G)$, there is a
unique $\bar{x} \in V(G) \setminus B_1(x)$. Thus, in order to distinguish
$B_1(x) = V(G) \setminus \{\bar{x}\}$ and $V(G)$, we must have $\bar{x}
\in T^+(V(G))$, and thus, $|T^+(V(G))| = |V(G)|$.

Our second example is the concept class $\CCC_5$ that does not come from the family of balls of a graph. It was given in~\cite{PaTa} as an example of a concept class with VC-dimension~$2$ that does not admit an unlabeled sample compression scheme of size 2.

\begin{example} The ground set of $\CCC_5$ is the vertex set of the 5-cycle $C_5$ (which we will suppose to be oriented counterclockwise). The concepts of $\CCC_5$ are of two types: the sets $\{ u,v\}$ of size 2 such that $u$ and $v$ are not adjacent in $C_5$, and the sets $\{ x,y,z\}$ defining a path of length 2 in $C_5$. If $C= \{ x,y,z\}$ is a path of length 2 in $C_5$ ordered counterclockwise, we call $y$ the \emph{middle vertex} of $C$, and $xy$ the \emph{first edge} of $C$. Consider the following $\NCTMp$ $T$ of size 2: for any concept $C=\{ u,v\}$ of size 2, set $T(C):=C=\{ u,v\}$; for any concept $C=\{ x,y,z\}$ of size 3, set $T(C):=\{ x,y\}$.

We show that $T$ is non-clashing. Any two concepts $C,C'$ of size 2 are clearly distinguished by their traces on $T(C)\cup T(C')$. Analogously, any two concepts $C,C'$ of size 3 have different first edges, and thus, are distinguished by their traces on $T(C)\cup T(C')$. Finally, any concepts $C=\{ u,v\}$ of size 2 and $C'=\{ x,y,z\}$ of size 3 are distinguished by their traces on $T(C)\cup T(C')$ since $T(C)=C=\{ u,v\}\ne \{ x,y\}=T(C')$ as $u$ and $v$ are not adjacent in $C_5$, while $x$ and $y$ are.
\end{example}

\section{\srmfull\ is \NP-complete}
In this section, we prove that \srmfull\ is \NP-complete for split, co-bipartite, and bipartite graphs. We reduce from the well-known \NP-hard \setcover\ problem defined as follows: given a set of elements $X=\{1,\ldots,n\}$, a family $\mathcal{S}=\{S_1,\ldots,S_m\}$ of subsets of $X$ that covers $X$ (\textit{i.e.}, whose union is $X$), and a positive integer $k$, does there exist $\mathcal{S}'\subset \mathcal{S}$ such that $\mathcal{S}'$ covers $X$ and $|\mathcal{S}'|\leq k$?

\begin{restatable}{theorem}{thmsplit}\label{thm:split}
\srmfull\ is \NP-complete in split and co-bipartite graphs with a universal vertex, and bipartite graphs of diameter 3.
\end{restatable}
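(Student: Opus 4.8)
The plan is to give a single reduction from \setcover\ that can be adapted to all three graph classes, following the combinatorial encoding of set-families as balls sketched in the introduction. Given an instance $(X,\mathcal{S},k)$ of \setcover\ with $X=\{1,\dots,n\}$ and $\mathcal{S}=\{S_1,\dots,S_m\}$, I would build a graph $G$ whose vertex set contains one vertex $e_i$ for each element $i\in X$ and one vertex $s_j$ for each set $S_j\in\mathcal{S}$, with $s_j$ adjacent to $e_i$ iff $i\in S_j$, plus a small number of auxiliary gadget vertices (a universal vertex $u$, and perhaps a constant number of ``padding'' vertices to control diameters and to force the shapes of the relevant balls). For the split case, I would make $\{s_1,\dots,s_m\}\cup\{u\}$ a clique and $\{e_1,\dots,e_n\}$ an independent set, so that $B_1(s_j)=S_j\cup(\text{clique})$ and the balls centered at element-vertices, together with $B_0(\cdot)$'s and $B_1(u)=V(G)$, form the concept class $\B(G)$. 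The co-bipartite version is obtained by also turning the element side into a clique (or taking a complement-type construction), and the bipartite diameter-$3$ version by instead subdividing/attaching pendant structure so that no two same-side vertices are adjacent while keeping diameter $3$; the universal vertex is dropped there and replaced by a diameter-controlling gadget.

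Next I would pin down exactly which balls exist and what their teaching sets can look like, and then prove the two directions of the equivalence $\NCTDp(\B(G))\le k' \iff (X,\mathcal{S},k)$ is a YES-instance, where $k'=k+O(1)$. For the forward direction: given a set cover $\mathcal{S}'$ of size $\le k$, I would exhibit an $\NCTMp$ in which each ``hard'' ball (the one whose reconstruction encodes that the cover works — likely $B_1(u)=V(G)$ or a ball that equals $V(G)$ minus one element) is taught using positive examples drawn from $\{s_j:S_j\in\mathcal{S}'\}$ together with a bounded number of gadget vertices, and all other balls (the $B_0(v)$'s, the $B_1(e_i)$'s, $B_1(s_j)$'s) are taught by the trivial/constant-size teaching sets, checking the inclusion condition and the pairwise non-clashing condition case by case. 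For the reverse direction: I would argue that in any $\NCTMp$ of size $\le k'$, the teaching set of the distinguished ball must, in order to be non-clashing against every $B_1(s_j)$ not containing some uncovered element, ``hit'' every element of $X$ via set-vertices, so the set-vertices appearing in that teaching set form a set cover of size $\le k'-O(1)=k$. The footnote's argument (that $\NCTDp(\B(G))=1$ forces $G$ edgeless, and the propagation argument around an edge) is the template for these local forcing steps. \NP-membership is immediate since an $\NCTMp$ of size $\le k$ is a polynomial-size certificate checkable in polynomial time.

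The main obstacle I anticipate is the reverse direction together with getting all three graph classes (and in particular the diameter-$3$ bipartite class) simultaneously from essentially one construction: one must ensure that the only economical way to teach the distinguished ball with positive examples is through set-vertices that genuinely cover $X$, i.e. that gadget vertices cannot be used to ``cheat'' and shortcut the covering requirement, and that the auxiliary balls introduced by the gadget do not themselves blow up the required teaching size or create unintended clashes. Controlling the diameter (exactly $3$ in the bipartite case, while keeping a universal vertex in the split/co-bipartite cases) without altering $\B(G)$ in a way that breaks the encoding is the delicate part; I expect this to require a carefully chosen constant-size gadget and a somewhat tedious but routine case analysis of which pairs of balls could clash. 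The slack $k'=k+O(1)$ absorbs the gadget vertices that must appear in every teaching set.
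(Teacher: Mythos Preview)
Your high-level strategy (reduce from \setcover, engineer a distinguished ``full'' ball whose positive teaching set is forced to encode a cover) matches the paper. But two concrete choices in your plan do not work, and the paper's proof hinges on getting them right.

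First, the adjacency you propose --- $s_j$ adjacent to $e_i$ iff $i\in S_j$ --- is the wrong way around. With your encoding and the clique on $\{s_1,\dots,s_m,u\}$, one has $B_1(s_j)=\{u,s_1,\dots,s_m\}\cup\{e_i:i\in S_j\}$, so distinguishing $B_1(u)=V(G)$ from $B_1(s_j)$ forces some $e_i$ with $i\notin S_j$ into $T(u,1)$; this is a complement-hitting-set condition, not \setcover. The paper instead makes $v_i$ adjacent to $s_j$ iff $i\notin S_j$; then $V(G)\setminus B_1(v_i)=\{s_j:i\in S_j\}$, and distinguishing $B_1(u_{m+1})=V(G)$ from each $B_1(v_i)$ forces $T(u_{m+1},1)\cap S$ to be a set cover.

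Second, the gadget is not $O(1)$ in size, and the slack is not $k'=k+O(1)$. To make the reverse direction tight, the paper adds $\Theta(m)$ extra vertices: a set $U=\{u_1,\dots,u_{m+1}\}$ and $W=\{w_1,\dots,w_m\}$ with $u_j$ adjacent to all of $W$ except $w_j$. This forces $W\subseteq T(u_{m+1},1)$ (since $B_1(u_j)=V(G)\setminus\{w_j\}$), so the budget becomes $k=m+t$ where $t$ is the \setcover\ target. Without such a $\Theta(m)$-sized forcing gadget, there is no way to prevent $T(u,1)$ from absorbing a few stray set-vertices for free and breaking the lower bound. The co-bipartite and bipartite variants also rely on this $U,W$ structure (with further tweaks), not on constant-size padding.
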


\begin{proofsketch}
The problem is in \NP\ as any $\NCTMp$ for $\B(G)$ has a set of at most $n^2$ distinct balls as a domain.
We sketch the proof for split graphs. The proof for co-bipartite graphs is similar, while the one for bipartite graphs is more involved.

Let $\phi$ be an instance of \setcover\ in which each element of $X$ is in at most $m-2$ sets of $\mathcal{S}$.
From $\phi$, we construct the graph $G$ as follows.
Add the sets of vertices $V=\{v_1,\ldots,v_n\}$, $S=\{s_1,\ldots,s_m\}$, $U=\{u_1,\ldots,u_{m+1}\}$, and $W=\{w_1,\ldots,w_m\}$. For all $i\in [n]$ and $j\in [m]$, if $i\notin S_j$ in $\phi$, then add the edge $v_is_j$. For all $j,\ell \in [m]$ with $j\neq \ell$, add the edges $u_jw_{\ell}$ and $u_{m+1}w_{\ell}$. Make each vertex in $U$ adjacent to each vertex in $S$. Make each vertex in $V$ adjacent to each vertex in $W$. Lastly, make the vertices in $U \cup V$ form a clique.
We prove that $\phi$ admits a set cover of size at most $t$ if and only if there is an $\NCTMp$ of size at most $k=m+t$ for $\B(G)$.

Suppose that $\phi$ admits a set cover $\mathcal{S}'\subset \mathcal{S}$ of size at most $t$. Let $S'\subset S$ be such that, for all $j\in [m]$, $s_j\in S'$ if and only if  $S_j\in \mathcal{S}'$ in $\phi$. We define an $\NCTMp$ $T$ of size at most $k$ for $\B(G)$. Note that we only need to define $T$ for balls of $G$ of radius $0$ or $1$ as, for all $x\in V(G)$, $B_2(x)=B_1(u_{m+1})=V(G)$.
For all $x\in V(G)$, set $T(x,0):=\{x\}$.
For all $x\in V$, set $T(x,1):=B_1(x)\cap S$.
For all $x\in W \cup S$, set $T(x,1):=\{x,u_{m+1}\}$.  For
all $x\in U$, set $T(x,1):=S'\cup (B_1(x)\cap W)$.  Clearly, $T$ has size at most $k$ and satisfies the inclusion
condition. One can also check that $T$ is non-clashing. Thus, $T$ is an $\NCTMp$
of size at most $k$ for $\B(G)$.

Now, suppose that $\phi$ does not admit a set cover of size at most $t$.
For all $i\in [n]$ and $j\in [m]$, $B_1(u_{m+1})=B_1(u_j)\cup \{w_j\}$, $B_1(v_i)\subset B_1(u_{m+1})$, and $(B_1(u_{m+1})\setminus B_1(v_i))\subset S$. Hence, for any $\NCTMp$ $T$ for $\B(G)$, $W\subseteq T(u_{m+1},1)$ and $T(u_{m+1},1)\cap S$ corresponds to a set cover.
Consequently, $|T(u_{m+1},1)\cap S|>t$, and thus, $|T(u_{m+1},1)|>k$ for any $\NCTMp$ $T$ for $\B(G)$.
\end{proofsketch}

\section{Tight bounds for parameterizations by the vertex cover number}
In this section, we consider \srmfull\ parameterized by the vertex cover number $\vc$ of $G$.
For any $x\in V(G)$ and $r\in \mathbb{N}$, there are at most $2^n$ possibilities for $T(x,r)$, and there are at most $n \cdot \diam$ unique balls in $G$ (as it is connected). Thus, we get the following algorithm that will be needed later.

\begin{restatable}{proposition*}{propexact}\label{exact-algo}
\srmfull\ and \rmfull\ admit algorithms running in time $2^{\mathcal{O}(n^2 \cdot \diam(G))}$.
\end{restatable}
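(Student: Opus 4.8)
The plan is to brute-force over all candidate teaching maps, observing that a teaching map is determined by a choice of a sign vector $T(x,r)$ for each of the polynomially many balls, and that the number of relevant sign vectors to try is bounded by $3^n$ (or $2^n$ in the positive case). The running time then comes from counting the balls and the choices, and verifying the non-clashing condition in polynomial time.

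\medskip

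\noindent First I would note that since $G$ is connected, every value in $\{0,1,\dots,\diam(G)\}$ occurs as a radius, and for each center $x\in V(G)$ and each radius $r\in\{0,\dots,\diam(G)\}$ the ball $B_r(x)$ is a candidate element of $\B(G)$; distinct balls can be identified as sets in polynomial time, so $|\B(G)|\le n\cdot(\diam(G)+1)$, and we can enumerate $\B(G)$ together with, for each ball $B$, a chosen representation $(x,r)$ with $B=B_r(x)$. Next, for \srmfull\ a teaching map assigns to each $B=B_r(x)\in\B(G)$ a sign vector $T(x,r)\in\{\pm1,0\}^V$ with $\underline{T}(x,r)=T^+(x,r)\subseteq B$ (the inclusion condition); so $T(x,r)$ is really a choice of a subset of $B\subseteq V(G)$, giving at most $2^n$ choices per ball, and at most $\bigl(2^n\bigr)^{|\B(G)|}=2^{\mathcal{O}(n^2\cdot\diam(G))}$ candidate maps overall. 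For \rmfull\ we drop the inclusion condition and instead require only $T(x,r)\preceq B_r(x)$, i.e. $T^+(x,r)\subseteq B_r(x)$ and $T^-(x,r)\subseteq V\setminus B_r(x)$, which is at most $2^n$ choices as well (each vertex is forced to one of $0$ or its consistent sign), so the same bound holds.

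\medskip

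\noindent Then I would iterate over all these candidate maps, and for each one first discard it unless $\max_{B\in\B(G)}|\underline{T}(B)|\le k$, and then check the non-clashing condition: for every ordered or unordered pair of distinct balls $B=B_r(x)$, $B'=B_{r'}(x')$, verify that $B$ and $B'$ differ on $\underline{T}(x,r)\cup\underline{T}(x',r')$ — equivalently, that it is not the case that $T(x,r)\preceq B'$ and $T(x',r')\preceq B$. There are $\mathcal{O}(|\B(G)|^2)=\mathcal{O}(n^2\diam(G)^2)$ pairs and each check is $\mathcal{O}(n)$, so verifying a single candidate map takes polynomial time. Accept iff some candidate passes. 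The total running time is $2^{\mathcal{O}(n^2\cdot\diam(G))}\cdot n^{\mathcal{O}(1)}=2^{\mathcal{O}(n^2\cdot\diam(G))}$, as claimed.

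\medskip

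\noindent There is no real obstacle here; the only point requiring a moment's care is to confirm that a teaching map for $\B(G)$ is genuinely a function on the \emph{set} of balls (not on center–radius pairs): two pairs $(x,r),(x',r')$ with $B_r(x)=B_{r'}(x')$ must receive the same teaching set, but this is automatic once we pick one canonical representative per ball when enumerating $\B(G)$, which is exactly what the construction above does. Hence the algorithm is correct and meets the stated time bound.
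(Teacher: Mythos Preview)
Your proposal is correct and follows essentially the same approach as the paper: enumerate the at most $n\cdot(\diam(G)+1)$ distinct balls, brute-force over the at most $2^n$ realizable teaching sets per ball (using $T(C)\preceq C$ to get two choices per vertex in both the positive and signed cases), and verify the non-clashing condition in polynomial time for each candidate map. Your write-up is slightly more detailed in justifying the $2^n$ bound for \rmfull\ and in handling the canonical representative per ball, but the argument is the same.
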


We use a recently introduced technique from~\cite{FGKLMST} to prove the following:

\begin{theorem}\label{thm:vc-lower-bound}
Unless the \ETH\ fails, even in graphs of diameter~$3$, \srmfull\ does not admit
\begin{itemize}
\item an algorithm running in time $2^{2^{o(\vc)}}\cdot n^{\mathcal{O}(1)}$, nor
\item a kernelization algorithm outputting a kernel with $2^{o(\vc)}$ vertices, nor
\item an algorithm running in time $2^{o(n)}$.
\end{itemize}
\end{theorem}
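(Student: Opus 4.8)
The plan is to use the technique of Foucaud, Galby, Khazaliya, Li, Mc Inerney, Sharma, and Tale~\cite{FGKLMST}, which reduces from \partsat (3-SAT in which both the variables and the clauses are partitioned into three parts), the canonical starting point for these double-exponential lower bounds. Concretely, I would build a polynomial-time reduction from an instance $\phi$ of \partsat with $N$ variables and $M$ clauses to a graph $G$ such that $\vc(G) = \mathcal{O}(\log N)$, $\diam(G) \le 3$, $|V(G)| = 2^{\mathcal{O}(\vc)}$, and $\NCTDp(\B(G)) \le k$ for an appropriate threshold $k$ if and only if $\phi$ is satisfiable. Since \partsat on $N$ variables cannot be solved in time $2^{o(N)}$ unless the \ETH\ fails, and since $N = 2^{\Theta(\vc)}$ in our reduction, an algorithm for \srmfull\ running in time $2^{2^{o(\vc)}}\cdot n^{\mathcal{O}(1)}$ would solve \partsat in time $2^{o(N)}$, a contradiction; similarly an $n^{\mathcal{O}(1)}$-size kernel (in terms of vertex count) of size $2^{o(\vc)}$ composed with the brute-force algorithm of Proposition~\ref{exact-algo} would give a $2^{2^{o(\vc)}}\cdot n^{\mathcal{O}(1)}$-time algorithm, and since $n = 2^{\mathcal{O}(\vc)}$, a $2^{o(n)}$-time algorithm would likewise refute the \ETH.

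The core of the construction is a ``set representation'' gadget that encodes each of the $N$ Boolean variables by a distinct subset of a logarithmic-size ground set: I would introduce $\Theta(\log N)$ ``portal'' vertices and, for every variable $x_i$, a vertex whose neighborhood among the portals is the binary encoding of $i$. The small vertex cover then comes from the fact that all the ``bulky'' parts of $G$ — the sets of clause-vertices, of variable/literal-vertices, and various copies needed to force the structure of an optimal $\NCTMp$ — attach only to the $\mathcal{O}(\log N)$ portal vertices plus a constant number of universal/connector vertices, so that deleting those $\mathcal{O}(\log N)$ vertices leaves an edgeless (or near-edgeless) graph. Bounded diameter is obtained by the standard trick of adding a universal vertex (or a dominating pair of vertices), exactly as in the \NP-completeness construction of Theorem~\ref{thm:split}. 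The threshold $k$ will be chosen so that, in any $\NCTMp$ of size $\le k$, the teaching set of a designated ``selector'' ball is forced to record a satisfying assignment: for each clause-group one must include literals witnessing satisfaction, and the non-clashing condition between the ball of a clause and the ball of a conflicting partial assignment forces consistency of the recorded truth values across the three variable-parts. The careful arithmetic of how large $k$ must be, and the verification that a satisfying assignment yields a valid $\NCTMp$ meeting the bound, is the bulk of the work; following the split-graph construction, almost all relevant balls have radius $0$ or $1$ because the graph has diameter $3$ with a near-universal vertex, which keeps $\B(G)$ and the case analysis manageable.

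I would organize the write-up as: (i) recall \partsat and the reduction template; (ii) describe $G$ and bound $\vc(G)$, $\diam(G)$, and $n$; (iii) prove the forward direction — from a satisfying assignment, exhibit an explicit $\NCTMp$ of size $\le k$, checking the inclusion and non-clashing conditions ball-by-ball, grouped by radius and by gadget type; (iv) prove the backward direction — show that any $\NCTMp$ of size $\le k$ forces, via a chain of ``if this vertex is not in the teaching set, then these two balls clash'' arguments (mirroring the $u_{m+1}$ argument in the proof of Theorem~\ref{thm:split}), the existence of a satisfying assignment; (v) assemble the three lower bounds from the parameter relations $N = 2^{\Theta(\vc)}$ and $n = 2^{\Theta(\vc)}$, invoking Proposition~\ref{exact-algo} for the kernelization consequence.

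The hard part will be the backward direction: arguing that \emph{every} size-$\le k$ positive non-clashing teaching map must behave like the ``intended'' one. Unlike in \setcover-style reductions where one bulky ball directly encodes the solution, here one must rule out that the teacher ``cheats'' by distributing the burden of distinguishing balls in unintended ways — e.g., using clause-vertex teaching sets, or exploiting symmetries among the encoding gadgets — so the gadget must be rigid enough that any economical $\NCTMp$ is essentially canonical. Designing the auxiliary vertices (the padding sets $U$, $W$-type vertices, and their adjacencies to the portals) so as to enforce this rigidity while keeping the vertex cover logarithmic and the diameter at $3$ is the delicate engineering step, and it is where the technique of~\cite{FGKLMST} combined with the ball-geometry tricks from Theorem~\ref{thm:split} must be carefully interleaved.
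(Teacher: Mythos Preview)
Your plan is essentially the paper's approach: reduce from \partsat, use a logarithmic-size set-representation gadget as the vertex cover, force one designated ball's teaching set to encode a satisfying assignment, and read off all three lower bounds from the single reduction together with Proposition~\ref{exact-algo}.

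Two small points where your outline drifts from what actually works. First, in a diameter-$3$ graph the decisive balls have radius~$2$, not radius~$0$ or~$1$: the paper's selector ball is $B_2(u_{3M+1})=V(G)$, and the forcing comes from comparing it against $B_2(u_\ell)=V(G)\setminus\{w_\ell\}$ and against $B_2(c_j)$ and $B_2(c_i^\delta)$, each of which misses exactly the literal vertices that would satisfy the corresponding clause. Your remark that ``almost all relevant balls have radius $0$ or $1$'' is carried over from the diameter-$2$ split-graph reduction and will not survive here. Second, the $2^{o(n)}$ lower bound requires $n=\mathcal{O}(M)$, not merely $n=2^{\mathcal{O}(\vc)}$; the latter is compatible with $n=M^{100}$, which would not transfer the bound. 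Finally, contrary to your expectation, in the paper the backward direction is the short step (a counting argument on $T(u_{3M+1},2)$ pins down the assignment), while the labor is in the forward direction---exhibiting an explicit $\NCTMp$ of size $k$ and checking non-clashing across all ball pairs.
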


Using this technique, we prove Theorem~\ref{thm:vc-lower-bound} via a reduction from \partsat, introduced in~\cite{LaMeVa} and defined as follows.
Given a $3$-CNF formula $\phi$ and a partition of its variables into three disjoint sets $X^{\alpha}$, $X^{\beta}$, $X^{\gamma}$ such that $|X^{\alpha}|=|X^{\beta}|=|X^{\gamma}|=N$ and no clause contains more than one variable from any of $X^{\alpha}$, $X^{\beta}$, and $X^{\gamma}$, is $\phi$ satisfiable?
The crux of the technique is to replace edges between clause and variable vertices by a ``small'' separator, called a set representation gadget, that encodes these relationships.
The proof of {\cite[Theorem 3]{LaMeVa}} along with the Sparsification Lemma~\cite{IPZ} implies the following:

\begin{proposition}\label{prop-eth}
Unless the \ETH\ fails, \partsat\ does not admit an algorithm running in time $2^{o(M)}$, where $M$ is the number of clauses.
\end{proposition}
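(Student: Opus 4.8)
The plan is to obtain Proposition~\ref{prop-eth} as a two-step consequence of \cite[Theorem 3]{LaMeVa} together with the Sparsification Lemma~\cite{IPZ}. First I would invoke \cite[Theorem 3]{LaMeVa}, which (by construction of the reduction there) shows that \partsat\ on an instance with $N$ variables in each part has no $2^{o(N)}$-time algorithm unless the \ETH\ fails; the reduction producing that hardness starts from a $3$-SAT instance and partitions the variables, so the number of variables $n=3N$ and the number of clauses $M$ are linearly related only after controlling the clause count. The subtlety is that a generic $3$-CNF formula on $n$ variables may have up to $\Theta(n^3)$ clauses, so $M$ is not automatically $O(N)$; ruling out $2^{o(M)}$ algorithms is therefore weaker than ruling out $2^{o(N)}$ algorithms unless we first arrange $M=\Theta(N)$.

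Second, I would apply the Sparsification Lemma of Impagliazzo, Paturi, and Zane~\cite{IPZ}: for every $\varepsilon>0$, any $n$-variable $3$-SAT instance can be written, in time $2^{\varepsilon n}\cdot n^{O(1)}$, as a disjunction of at most $2^{\varepsilon n}$ many $3$-SAT instances, each on the same variable set but with only $c(\varepsilon)\cdot n$ clauses for some constant $c(\varepsilon)$. Hence, assuming the \ETH, $3$-SAT with a \emph{linear} number of clauses still has no $2^{o(n)}$-time algorithm (this is the standard ``sparse \ETH''). I would then feed such a sparse $3$-SAT instance, with $n$ variables and $O(n)$ clauses, into the \cite{LaMeVa} reduction to obtain a \partsat\ instance with $N=\Theta(n)$ variables per part and, crucially, $M=\Theta(n)$ clauses (the reduction is a polynomial-time transformation that blows up the number of clauses only polynomially, and in fact keeps it linear when started from a linear-clause formula). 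A $2^{o(M)}$-time algorithm for \partsat\ would, via this chain, yield a $2^{o(n)}$-time algorithm for sparse $3$-SAT, contradicting the \ETH.

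The main step requiring care is verifying that the \cite{LaMeVa} reduction, when started from a bounded-clause-density $3$-SAT instance, outputs a \partsat\ instance in which \emph{both} $N$ and $M$ are $\Theta(n)$: one must check that splitting the variables into the three parts $X^\alpha,X^\beta,X^\gamma$ with $|X^\alpha|=|X^\beta|=|X^\gamma|=N$ and enforcing the ``at most one variable per part per clause'' condition is achieved by introducing only $O(n)$ additional variables and clauses (e.g.\ by duplicating variables and adding equality-enforcing clauses). Since each original clause has only three literals, it touches at most three variables and can be rewritten over fresh per-part copies with $O(1)$ extra clauses; summing over $O(n)$ clauses keeps the total linear. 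Granting this, the composition of the Sparsification Lemma with \cite[Theorem~3]{LaMeVa} gives exactly the claimed lower bound, and the rest is the routine observation that a hypothetical $2^{o(M)}$ algorithm transports back through the linear relations $M=\Theta(N)=\Theta(n)$ to refute the \ETH.
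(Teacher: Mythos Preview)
Your proposal is correct and follows exactly the approach the paper indicates: the paper itself does not give a proof but simply states that the proposition follows from \cite[Theorem~3]{LaMeVa} combined with the Sparsification Lemma~\cite{IPZ}. Your write-up unpacks precisely this combination, including the standard point that sparsification is needed to ensure $M=\Theta(n)$ before invoking the reduction of~\cite{LaMeVa}, so nothing further is required.
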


\paragraph{Set representation gadget.}\label{sec:gadget}
Let $p$ be the smallest integer such that $3M\leq \binom{2p}{p}$, and observe that $p=\mathcal{O}(\log M)$.
Let $\mathcal{F}_p$ be the collection of subsets of $[2p]$ that contain exactly $p$ integers.
Define $\setrep:[3M] \rightarrow [\mathcal{F}_p]$ as a one-to-one
function by arbitrarily assigning a set in $\mathcal{F}_p$ to each
integer in $[3M]$.
Consider the variables from $X^{\alpha}$ in $\phi$.
For each variable $x^{\alpha}_i$ ($i\in [N]$) in $X^{\alpha}$ in $\phi$, there are two vertices $t^{\alpha}_{2i}$ and $f^{\alpha}_{2i-1}$ corresponding to the positive and negative literals of $x^{\alpha}_i$, respectively.
For each clause $C_j$ ($j\in [M]$) in $\phi$, there is a clause vertex $c_j$.
Add a set of vertices $V^{\alpha}=\{v^{\alpha}_1,\ldots,v^{\alpha}_{2p}\}$.
For each $i\in [N]$, add the edge $t^{\alpha}_{2i}v^{\alpha}_{p'}$ for each $p'\in \setrep(2i)$.
Similarly, for each $i\in [N]$, add the edge $f^{\alpha}_{2i-1}v^{\alpha}_{p'}$ for each $p'\in \setrep(2i-1)$.
Now, for all $i\in [N]$ and $j\in [M]$, if the variable $x^{\alpha}_i$ appears as a positive (negative, resp.) literal in the clause $C_j$ in $\phi$, then add the edge $c_jv^{\alpha}_{p'}$ for each $p'\in [2p]\setminus \setrep(2i)$ ($p'\in [2p]\setminus \setrep(2i-1)$, resp.).
For all $j\in [M]$, if no variable from $X^{\alpha}$ appears in $C_j$ in $\phi$, then make $c_j$ adjacent to all the vertices in $V^{\alpha}$.
See Fig.~\ref{fig:vc-figures} (right).
As a clause contains at most one variable from $X^{\alpha}$ in $\phi$, $t^{\alpha}_{2i}$ ($f^{\alpha}_{2i-1}$, resp.) and $c_j$ do not share a common neighbor in $V^{\alpha}$ if and only if the clause $C_j$ contains $x^{\alpha}_i$ as a positive (negative, resp.) literal in $\phi$.
We exploit this for the reduction, and since $p=\mathcal{O}(\log M)$, this ensures that $\vc(G)=\mathcal{O}(\log M)$.

\paragraph{Reduction.}\label{sec:red-vc}
Let $\phi$ be an instance of \partsat\ on $3N$ variables and $M=\mathcal{O}(N)$ clauses such that $M>N$. For all $\delta\in\{\alpha,\beta,\gamma\}$, let the variables in $X^{\delta}$ be $x^{\delta}_1,\ldots,x^{\delta}_N$. From $\phi$, we construct the graph $G$ as follows.
Add the vertex sets $C=\{c_1,\ldots,c_M\}$, $W=\{w_1,\ldots,w_{3M}\}$, and $U=\{u_1,\ldots,u_{3M}\}$, and the vertices $u_{3M+1}$, $u'_{3M+1}$, and $z$.
For all $\delta\in \{\alpha,\beta,\gamma\}$ and $i\in [N]$, add the vertices $t^{\delta}_{2i}$ and $f^{\delta}_{2i-1}$, and let $A^{\delta}=\{t^{\delta}_{2i}\mid i\in [N]\}\cup \{f^{\delta}_{2i-1}\mid i\in [N]\}$.
For all $\delta\in \{\alpha,\beta,\gamma\}$, add two independent sets of $2p$ vertices $V^{\delta}=\{v^{\delta}_1,\ldots,v^{\delta}_{2p}\}$ and $V^{\delta,*}=\{v^{\delta,*}_1,\ldots,v^{\delta,*}_{2p}\}$, and make all of them adjacent to each vertex in $U$. Also, add a clique of $2p$ vertices $V^W=\{v^W_1,\ldots,v^W_{2p}\}$.
For all $i\in [N]$ and $\delta\in \{\alpha,\beta,\gamma\}$, add the edge $t^{\delta}_{2i}v^{\delta}_{p'}$ ($f^{\delta}_{2i-1}v^{\delta}_{p'}$, resp.) for all $p'\in \setrep(2i)$ ($p'\in \setrep(2i-1)$, resp.).
For all $i\in [N]$, $j\in [M]$, and $\delta\in \{\alpha,\beta,\gamma\}$, if the variable $x^{\delta}_i$ appears as a positive (negative, resp.) literal in the clause $C_j$ in $\phi$, then add the edge $c_jv^{\delta}_{p'}$ for all $p'\in [2p]\setminus \setrep(2i)$ ($p'\in [2p]\setminus \setrep(2i-1)$, resp.).
For all $j\in [M]$ and $\delta\in \{\alpha,\beta,\gamma\}$, if no variable from $X^{\delta}$ appears in $C_j$ in $\phi$, then make $c_j$ adjacent to each vertex in $V^{\delta}$.
For all $\ell\in [3M]$, add the edge $w_{\ell}v^W_{p'}$ ($u_{\ell}v^W_{p'}$, resp.) for all $p'\in \setrep(\ell)$ ($p'\in [2p]\setminus \setrep(\ell)$, resp.).

Now, to ensure that, for all $i\in [N]$ and $\delta\in \{\alpha,\beta,\gamma\}$, exactly one of $t^{\delta}_{2i}$ and $f^{\delta}_{2i-1}$ is in $T(V(G))$ (each variable is assigned exactly one truth value), do the following:
(i) Add a clause vertex $c^{\delta}_i$, and let $C^{\delta}=\{c^{\delta}_i \mid i\in [N]\}$.
(ii) Add the edges $t^{\delta}_{2i}v^{\delta,*}_{p'}$ and $f^{\delta}_{2i-1}v^{\delta,*}_{p'}$ for all $p'\in \setrep(2i)$. Then, $t^{\delta}_{2i}$ and $f^{\delta}_{2i-1}$ have the same neighbors in $V^{\delta,*}$.
(iii) Add the edge $c^{\delta}_iv^{\delta,*}_{p'}$ for all $p'\in [2p]\setminus \setrep(2i)$. Note that $c^{\delta}_i$ and $t^{\delta}_{2i}$ ($f^{\delta}_{2i-1}$, resp.) have no common neighbors in $V^{\delta,*}$. One can think of $c^{\delta}_i$ as a clause $(x^{\delta}_i \lor \overline{x}^{\delta}_i)$.
(iv) Make $c^{\delta}_i$ adjacent to each vertex in $V^{\delta',*}$ for all $\delta'\in \{\alpha,\beta,\gamma\}$ such that $\delta'\neq \delta$.

Lastly, make $z$ adjacent to each vertex in $U \cup A^{\alpha} \cup A^{\beta} \cup A^{\gamma} \cup \{u_{3M+1},u'_{3M+1}\}$, $u_{3M+1}$ adjacent to each vertex in $V(G)\setminus (U \cup A^{\alpha} \cup A^{\beta} \cup A^{\gamma})$, and $u'_{3M+1}$ adjacent to each neighbor of $u_{3M+1}$.
See Fig.~\ref{fig:vc-figures} (left). The reduction returns $(G,k)$ as an instance of \srmfull\ where $k=3N+3M$.

\begin{figure}[htb]
\centering
\includegraphics[scale=0.7]{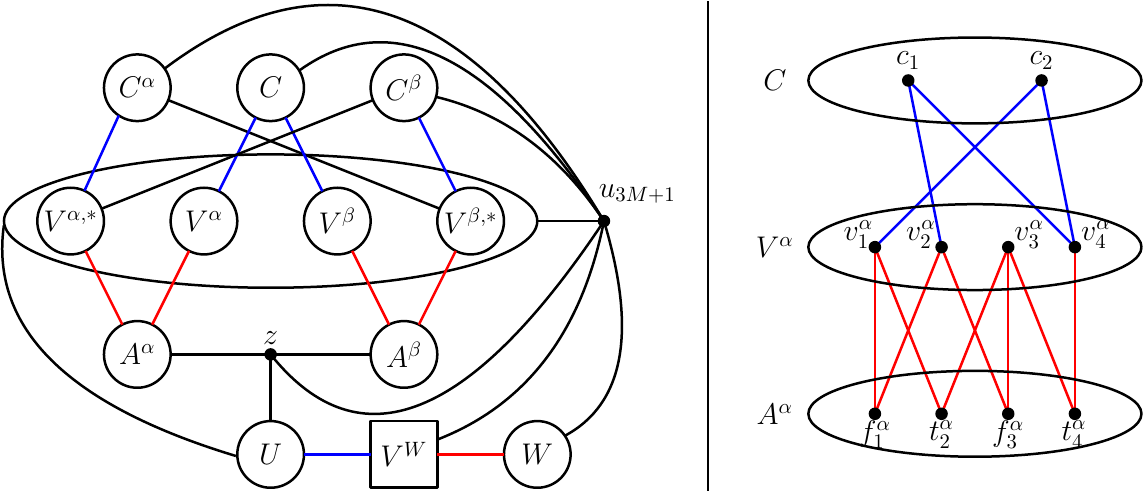}
\caption{
Graph $G$ in the proof of Thm.~\ref{thm:vc-lower-bound} (left) and its sets $A^{\alpha}$, $V^{\alpha}$, $C$ (right). For clarity, we omit $C^{\gamma},V^{\gamma},V^{\gamma,*},A^{\gamma}$, $u'_{3M+1}$. In $\phi$, $x^{\alpha}_1$ appears as a positive literal in $C_1$, and $x^{\alpha}_2$ as a negative literal in $C_2$. Red and blue edges are according to $\setrep$ (in a complementary~way).
}
\label{fig:vc-figures}
\end{figure}

\begin{restatable}{lemma*}{mainforward}\label{lem:vc-red-forward}
If $\phi$ is satisfiable, then $G$ admits an $\NCTMp$ for $\B(G)$ of size $k$.
\end{restatable}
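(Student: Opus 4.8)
The plan is to convert a satisfying assignment $\psi$ of $\phi$ into a positive non-clashing teaching map $T$ for $\B(G)$ of size $k=3N+3M$. Fix $\psi$ and let $L_{\psi}$ be the set of \emph{true-literal vertices}: for each variable $x^{\delta}_i$ it contains $t^{\delta}_{2i}$ if $\psi(x^{\delta}_i)$ is true and $f^{\delta}_{2i-1}$ otherwise. Then $L_{\psi}\subseteq A^{\alpha}\cup A^{\beta}\cup A^{\gamma}$, $|L_{\psi}|=3N$, and $|L_{\psi}\cap A^{\delta}|=N$ for each $\delta$. The heart of the construction is the teaching set of the ball $V(G)$, which I set to $T(V(G)):=W\cup L_{\psi}$, of size exactly $3M+3N=k$: the $W$-part will be forced, while the $L_{\psi}$-part is precisely the freedom that records $\psi$.

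I would first describe $\B(G)$, which is short since $\diam(G)=3$: every ball of radius $\ge 3$ is $V(G)$; every ball of radius exactly $2$ is $V(G)$ or of the form $V(G)\setminus S$ with $S\ne\varnothing$, the instructive cases being $B_2(u_{\ell})=V(G)\setminus\{w_{\ell}\}$, $B_2(c^{\delta}_i)=V(G)\setminus\{t^{\delta}_{2i},f^{\delta}_{2i-1}\}$, $B_2(c_j)=V(G)\setminus\{\text{literal vertices matching the literals of }C_j\}$, and a few further balls such as $B_2(w_{\ell})$, $B_2(v^{\delta}_{p'})$, $B_2(v^W_{p'})$, $B_2(t^{\delta}_{2i})$, each of which omits a vertex of $W$ or some part $A^{\delta}$ entirely; and every ball of radius $\le 1$ is a singleton $\{x\}$ or a closed neighbourhood $N[x]$. (That $B_2(c_j)$ has the stated form uses the defining property of the set-representation gadget: $c_j$ has no common neighbour in $V^{\delta}$ with the literal vertex of $x^{\delta}_i$ exactly when $C_j$ contains that literal.) I would then put $T(B_0(x)):=\{x\}$, $T(V(G)):=W\cup L_{\psi}$, $T(V(G)\setminus S):=(W\setminus S)\cup S'$ for a small witness set $S'\subseteq(V(G)\setminus S)\cap(U\cup A^{\alpha}\cup A^{\beta}\cup A^{\gamma})$ (e.g., $(W\setminus\{w_{\ell}\})\cup L_{\psi}$ for $V(G)\setminus\{w_{\ell}\}$), and, for each radius-$1$ ball $N[x]$, a subset of $N[x]$ of size $\ge 2$ and $\le k$ tailored to $x$ — a proper subset for the few large balls such as $B_1(u_{3M+1})=B_1(u'_{3M+1})=V(G)\setminus(U\cup A^{\alpha}\cup A^{\beta}\cup A^{\gamma})$ and $B_1(v^{\delta}_{p'})$. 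The inclusion condition holds by construction, and one must verify throughout that no teaching set exceeds size $k$; this is where the smallness $2p=\mathcal{O}(\log M)$ of the separators $V^{\delta},V^{\delta,*},V^W$ enters (and is what ties the bound to $\vc(G)=\mathcal{O}(\log M)$).

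It remains to check the non-clashing condition for every pair of distinct concepts. Since every teaching set lies inside its own concept, for distinct $B,B'$ the condition fails only if $T(B)\subseteq B'$ and $T(B')\subseteq B$ both hold, so it suffices to exhibit one vertex of $T(B)\setminus B'$ or of $T(B')\setminus B$. The easy pairs go through structurally: two singletons differ; a singleton $B_0(x)$ versus a larger ball $B$ is fine since, if $x\in B$, $T(B)$ has size $\ge 2$ and so contains a vertex $\ne x$; two radius-$1$ balls with different closed neighbourhoods are separated by a vertex in one but not the other (placed into the teaching sets when setting them up); and the radius-$2$ balls and $B_1(u_{3M+1})$ are separated from one another and from the radius-$1$ balls via the large $W$-parts and the small witnesses $S'$ of their teaching sets. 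The main obstacle is the family of pairs $V(G)$ versus a radius-$2$ ball $V(G)\setminus S$, for which I need $T(V(G))=W\cup L_{\psi}\not\subseteq V(G)\setminus S$, i.e., $S\cap(W\cup L_{\psi})\ne\varnothing$: this holds for $S=\{w_{\ell}\}$ because $w_{\ell}\in W$ (which is what forces $W\subseteq T(V(G))$, pinning $3M$ of the budget), for $S=\{t^{\delta}_{2i},f^{\delta}_{2i-1}\}$ because $L_{\psi}$ contains exactly one of them, for the remaining radius-$2$ balls because $S$ then meets $W$ or contains a vertex of $L_{\psi}\subseteq A^{\alpha}\cup A^{\beta}\cup A^{\gamma}$, and — the crucial case — for $S=\{\text{matched literals of }C_j\}$ because $\psi$ satisfies $C_j$, so some matched literal vertex lies in $L_{\psi}$. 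The pairs $V(G)$ versus a radius-$1$ ball $N[x]$ reduce to $W\not\subseteq N[x]$ for $x\notin\{u_{3M+1},u'_{3M+1}\}$, and to $L_{\psi}$ being nonempty and disjoint from $N[u_{3M+1}]=N[u'_{3M+1}]$. So the satisfiability of $\phi$ is used exactly in the $B_2(c_j)$ case, and the size bookkeeping throughout finishes the argument.
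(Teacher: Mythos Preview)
Your identification of the crucial teaching set $T(V(G))=W\cup L_{\psi}$ and of the role of satisfiability in distinguishing $V(G)$ from each $B_2(c_j)$ is exactly right and matches the paper. The gap is in your uniform recipe for the remaining radius-$2$ balls.

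You propose $T(V(G)\setminus S):=(W\setminus S)\cup S'$ with a small $S'\subseteq (V(G)\setminus S)\cap(U\cup A^{\alpha}\cup A^{\beta}\cup A^{\gamma})$. For a clause ball $B_2(c_j)$ the missing set $S$ is disjoint from $W$, so this puts all of $W$ (size $3M$) into the teaching set and leaves only $3N$ for $S'$. Now take two clause balls $B_2(c_j)\neq B_2(c_{j'})$: both contain $W$, so non-clashing must come from the $S'$-parts, i.e.\ $S'_j$ must contain a literal vertex of $C_{j'}$ or $S'_{j'}$ one of $C_j$. You do not specify $S'$, and the natural guess $S'_j:=L_{\psi}\cap B_2(c_j)$ fails: if $C_j$ and $C_{j'}$ share their unique satisfied literal (e.g.\ $C_j=(a\lor b\lor c)$, $C_{j'}=(a\lor d\lor e)$ with only $a\in L_{\psi}$), then $S'_j=S'_{j'}=L_{\psi}\setminus\{a\}$ and both inclusions hold, so the pair clashes. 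With only $3N$ room it is not clear any choice of $S'_j\subseteq U\cup A$ handles all $\binom{M}{2}$ clause-clause pairs together with all $B_2(c_j)$ versus $B_2(c^{\delta}_i)$ pairs (the latter already force one vertex from each of the $3N$ pairs $\{t^{\delta}_{2i},f^{\delta}_{2i-1}\}$ into $S'_j$, pinning the budget exactly and leaving no slack).

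The paper avoids this by \emph{not} putting $W$ into $T(c_j,2)$: it sets $T(c_j,2):=\{c_j,w_1\}\cup\bigl(B_2(c_j)\cap(A^{\alpha}\cup A^{\beta}\cup A^{\gamma})\bigr)$, of size at most $6N+2<k$ (using $M>N$). Now any two clause balls differ on $A^{\alpha}\cup A^{\beta}\cup A^{\gamma}$ and these full intersections are recorded, so non-clashing is immediate. Similar type-by-type definitions are used for $B_2(w_\ell)$, $B_2(v^{\delta}_{p'})$, $B_2(v^W_{p'})$, $B_2(t^{\delta}_{2i})$, none of which follow your $(W\setminus S)\cup S'$ template (several of these balls miss $W$ entirely). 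So the ``large $W$-parts'' heuristic is not what carries the argument for most radius-$2$ pairs; you need ball-specific teaching sets there as well.
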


\begin{proofsketch}
Let $\pi: X^{\alpha} \cup X^{\beta} \cup X^{\gamma} \rightarrow \{\text{True},\text{False}\}$ be a satisfying assignment for $\phi$.
We define the set $\pi'\subset A^{\alpha} \cup A^{\beta} \cup A^{\gamma}$ corresponding to $\pi$.
For all $\delta\in \{\alpha,\beta,\gamma\}$ and $x^{\delta}_i$ in $\phi$, if $\pi(x^{\delta}_i)=\text{True}$, then $t^{\delta}_{2i}\in \pi'$, and otherwise, $f^{\delta}_{2i-1}\in \pi'$.
So, $|\pi'|=3N$.
We define an $\NCTMp$ $T$ of size $k$ for $\B(G)$ as follows. We need not define $T$ for $B_2(z)$, $B_2(u'_{3M+1})$, and balls of radius at least~$3$ as, for all $x\in V(G)$, $B_3(x)=B_2(z)=B_2(u'_{3M+1})=B_2(u_{3M+1})=V(G)$.

For all $x\in V(G)$, set $T(x,0):=\{x\}$.
For all $\delta\in \{\alpha,\beta,\gamma\}$ and $x\in A^{\delta}$, set $T(x,1):=B_1(x)$ and $T(x,2):=\{u_1,t^{\delta'}_2,t^{\delta''}_2\} \cup B_1(x)$, where $\{\delta,\delta',\delta''\} = \{\alpha,\beta,\gamma\}$.
For all $\delta\in \{\alpha,\beta,\gamma\}$ and $x\in V^{\delta} \cup V^{\delta,*}$, set $T(x,1):=B_1(x)\setminus U$ and $T(x,2):=\{w_1,z\} \cup (B_1(x)\setminus U)$.
For all $x\in V^W$, set $T(x,1):=\{u_{3M+1},u'_{3M+1}\} \cup V^W \cup (B_1(x)\cap U)$ and $T(x,2):=\{u_{3M+1},u'_{3M+1},z\} \cup V^W \cup U$.
For all $\delta\in \{\alpha,\beta,\gamma\}$ and $x\in C \cup C^{\delta}$, set $T(x,1):=B_1(x)$ and $T(x,2):=\{x,w_1\} \cup (B_2(x)\cap (A^{\alpha} \cup A^{\beta} \cup A^{\gamma}))$.
For all $x\in U \cup \{u_{3M+1},u'_{3M+1}\}$, set $T(x,1):=B_1(x)\setminus (C \cup C^{\alpha} \cup C^{\beta} \cup C^{\gamma})$ and $T(x,2):=(B_2(x)\cap W) \cup \pi'$.
For all $x\in W$, set $T(x,1):=B_1(x)$ and $T(x,2):=\{x,z,u_{3M+1},u'_{3M+1}\} \cup (B_2(x)\cap U)$.
Set $T(z,1):=\{z,u_1\}$.
One can verify that $T$ has size at most $k$ and satisfies the inclusion condition for each ball in $\B(G)$ and the non-clashing condition for all pairs of balls in $\B(G)$. Hence, $T$ is an $\NCTMp$ of size at most $k$ for $\B(G)$.
\end{proofsketch}

\begin{restatable}{lemma*}{mainbackward}\label{lem:vc-red-backward}
If $G$ admits an $\NCTMp$ for $\B(G)$ of size $k$, then $\phi$ is satisfiable.
\end{restatable}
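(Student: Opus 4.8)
The plan is to extract a satisfying assignment of $\phi$ directly from the teaching set $T(V(G))$ of the full ball $V(G)$, using the inclusion condition in an essential way. Since $G$ has diameter $3$ and $B_2(u_{3M+1})=B_2(u'_{3M+1})=B_2(z)=V(G)$, the set $V(G)$ is a ball, hence a concept of $\B(G)$, and being an $\NCTMp$ gives $\underline{T}(C)=T^+(C)\subseteq C$ for every ball $C$. First I would isolate the following unifying observation: for \emph{any} ball $B\neq V(G)$, the non-clashing condition for the pair $\{V(G),B\}$ forces $T^+(V(G))\not\subseteq B$. Indeed, restricting both concepts to $S:=\underline{T}(V(G))\cup\underline{T}(B)=T^+(V(G))\cup T^+(B)$, the concept $V(G)$ is all-positive on $S$ while $T^+(B)\subseteq B$, so a sign discrepancy on $S$ can only be provided by a vertex of $T^+(V(G))\setminus B$. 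Thus $T^+(V(G))$ must meet the complement of every proper ball.

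Next I would compute the radius-$2$ balls of a few well-chosen centers; the three key identities are: (i) $B_2(u_j)=V(G)\setminus\{w_j\}$ for every $j\in[3M]$; (ii) $B_2(c^\delta_i)=V(G)\setminus\{t^\delta_{2i},f^\delta_{2i-1}\}$ for every $\delta\in\{\alpha,\beta,\gamma\}$ and $i\in[N]$; and (iii) $B_2(c_j)=V(G)\setminus L_j$ for every $j\in[M]$, where $L_j\subseteq A^\alpha\cup A^\beta\cup A^\gamma$ denotes the set of literal-vertices of $C_j$ (namely $t^\delta_{2i}$ if $x^\delta_i$ occurs positively in $C_j$, and $f^\delta_{2i-1}$ if it occurs negatively). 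Each identity is an \emph{exact} statement. For the ``$\supseteq$'' direction one checks that every vertex not among the listed exceptions lies within distance $2$ of the center: one hop reaches $u_{3M+1}$ (or $z$), which dominates $V(G)\setminus(U\cup A^\alpha\cup A^\beta\cup A^\gamma)$ (resp.\ $U\cup A^\alpha\cup A^\beta\cup A^\gamma$), and one further layer through the sets $V^\delta$, $V^{\delta,*}$, $V^W$ reaches the remaining vertices of $U$, the $A^\delta$'s, and $W$. For the ``$\subseteq$'' direction one checks that each listed exception is at distance exactly $3$ from the center; this is precisely where the design of $\setrep$ enters, via the property (highlighted when the gadget is introduced) that a literal-vertex $t^\delta_{2i}$ or $f^\delta_{2i-1}$ and a clause-vertex $c_j$ have no common neighbour in $V^\delta$ exactly when that literal occurs in $C_j$, together with the analogous property for $c^\delta_i$ and the literal-vertices of $x^\delta_i$ inside $V^{\delta,*}$.

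Combining the observation with the three identities yields $W\subseteq T^+(V(G))$, that $T^+(V(G))$ contains at least one of $t^\delta_{2i},f^\delta_{2i-1}$ for each $\delta$ and $i$, and that $T^+(V(G))$ meets $L_j$ for each $j\in[M]$. Since $|W|=3M$, the $3N$ pairs $\{t^\delta_{2i},f^\delta_{2i-1}\}$ are pairwise disjoint and disjoint from $W$, and $|T^+(V(G))|\le k=3N+3M$, a counting argument forces equality: $T^+(V(G))=W\cup\pi'$, where $\pi'$ contains \emph{exactly} one of $t^\delta_{2i},f^\delta_{2i-1}$ for each $\delta,i$ (and nothing else from outside $W$). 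Define $\pi(x^\delta_i):=\text{True}$ if $t^\delta_{2i}\in\pi'$ and $\pi(x^\delta_i):=\text{False}$ if $f^\delta_{2i-1}\in\pi'$; this is well defined. Finally, for each clause $C_j$, since $L_j\subseteq A^\alpha\cup A^\beta\cup A^\gamma$ is disjoint from $W$, the nonempty set $T^+(V(G))\cap L_j$ equals $\pi'\cap L_j$ and hence contains a literal-vertex of $C_j$, which by construction corresponds to a literal of $C_j$ made true by $\pi$. Therefore $\pi$ satisfies $\phi$.

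I expect the main obstacle to be the third identity (and, to a lesser extent, the first two): proving that $B_2(c_j)$ is \emph{exactly} $V(G)\setminus L_j$ requires carefully ruling out, for every non-literal vertex of every type ($W$, $U$, $C$, the $C^\delta$'s, the various $V$-sets, $z$, $u_{3M+1}$, $u'_{3M+1}$, and the literal-vertices outside $L_j$), that it is at distance at least $3$ from $c_j$, and conversely ruling out any alternative length-$2$ path from $c_j$ to a vertex of $L_j$. The hypotheses that each clause contains at most one variable from each block and that the $c_j$--$V^\delta$ edges follow the complement of $\setrep$ are exactly what make the ``distance $3$'' direction go through, so this bookkeeping must be carried out with care.
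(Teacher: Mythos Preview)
Your proposal is correct and follows essentially the same approach as the paper: both extract the assignment from the teaching set of the full ball $V(G)=B_2(u_{3M+1})$, use the three identities $B_2(u_j)=V(G)\setminus\{w_j\}$, $B_2(c^\delta_i)=V(G)\setminus\{t^\delta_{2i},f^\delta_{2i-1}\}$, and $B_2(c_j)=V(G)\setminus L_j$ to force $W$ and at least one literal per variable into $T^+(V(G))$, apply the counting $k=3N+3M$ to make this exact, and then read off a satisfying assignment. Your explicit ``unifying observation'' (that $T^+(V(G))$ must hit the complement of every proper ball) and your emphasis on verifying identity~(iii) carefully are just spelled-out versions of what the paper asserts more tersely.
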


\begin{proofsketch}
Let $T$ be an $\NCTMp$ for $\B(G)$ of size $k$.
For all $i \in [N]$, $\ell \in [3M]$, and $\delta \in \{\alpha, \beta, \gamma\}$, as $B_2(u_{3M+1})=V(G)$, $B_2(u_{\ell})=V(G)\setminus \{w_{\ell}\}$, and $B_2(c^{\delta}_i)=V(G)\setminus \{t^{\delta}_{2i},f^{\delta}_{2i-1}\}$, we have $|T(u_{3M+1},2)\cap W|=3M$ and $|T(u_{3M+1},2)\cap \{t^{\delta}_{2i},f^{\delta}_{2i-1}\}|\geq 1$.
Since $k=3N+3M$, the latter inequality is an equality.
From $T(u_{3M+1},2)$, we extract an assignment $\pi: X^{\alpha} \cup X^{\beta} \cup X^{\gamma} \rightarrow \{\text{True},\text{False}\}$ for~$\phi$.
For all $i \in [N]$ and $\delta \in \{\alpha, \beta, \gamma\}$, if $T(u_{3M+1},2)\cap \{t^{\delta}_{2i},f^{\delta}_{2i-1}\} = \{t^{\delta}_{2i}\}$, then set $\pi(x^{\delta}_i)=\text{True}$, and otherwise, set $\pi(x^{\delta}_i)=\text{False}$.
Thus, $\pi$ assigns each variable in $\phi$ exactly one truth value.
It is not hard to see that $\pi$ is a satisfying assignment for $\phi$.
\end{proofsketch}

\begin{proofthm}
By Lemmas~\ref{lem:vc-red-forward}~and~\ref{lem:vc-red-backward}, there is a reduction that takes an instance of \partsat\ and returns an equivalent instance $(G,k)$ of \srmfull\ with $\diam(G)=3$. As $|V(G)|=\mathcal{O}(M)$, unless the \ETH\ fails, \srmfull\ does not admit a $2^{o(n)}$-time algorithm by Prop.~\ref{prop-eth}.
$S = \{u_{3M+1},u'_{3M+1},z\} \cup V^W \cup V^{\alpha} \cup V^{\beta} \cup V^{\gamma} \cup V^{\alpha,*} \cup V^{\beta,*} \cup V^{\gamma,*}$ is a vertex cover of $G$ of size $\mathcal{O}(\log M)$.
Hence, a $2^{2^{o(vc)}}\cdot n^{\mathcal{O}(1)}$-time algorithm for \srmfull\ would imply a $2^{o(M)}$-time algorithm for \partsat, contradicting the \ETH\ by Prop.~\ref{prop-eth}.
Toward a contradiction, suppose that \srmfull\ admits a kernelization algorithm outputting a kernel with $2^{o(\vc)}$ vertices. Consider the following algorithm for \srmfull. Given an instance of \partsat with $M$ clauses, it applies the reduction to obtain an equivalent instance $(G,k)$ of \srmfull\ with $\vc(G)=\mathcal{O}(\log M)$. Then, it applies the assumed kernelization algorithm on $(G,k)$, outputting a kernel with $2^{o(\vc)}$ vertices. Finally, it applies the algorithm from Prop.~\ref{exact-algo} on the kernel, which takes $2^{\mathcal{O}(({2^{o(\vc)}})^2 \cdot \diam)} = 2^{o(M)}$ time, contradicting the \ETH\ by Prop.~\ref{prop-eth}.
\end{proofthm}

We now show that our $\vc$ lower bounds are tight:

\begin{restatable}{theorem}{kernel}\label{thm:vc-upper-bound}
\srmfull\ admits
\begin{itemize}
\item an algorithm running in time $2^{2^{\mathcal{O}(\vc)}}\cdot n^{\mathcal{O}(1)}$, and
\item a kernelization algorithm outputting a kernel with $2^{\mathcal{O}(\vc)}$ vertices.
\end{itemize}
\end{restatable}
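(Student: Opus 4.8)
The plan is to obtain the kernelization first; the $2^{2^{\mathcal{O}(\vc)}}\cdot n^{\mathcal{O}(1)}$-time algorithm then follows immediately by kernelizing and running the exact algorithm of Prop.~\ref{exact-algo} on the kernel, using that a connected graph with vertex cover number $\vc$ has diameter at most $2\vc$: on a kernel with $2^{\mathcal{O}(\vc)}$ vertices and diameter $\mathcal{O}(\vc)$ that algorithm runs in time $2^{2^{\mathcal{O}(\vc)}}$. For the kernel, I would compute in polynomial time a vertex cover $S$ with $|S|\le 2\vc$ (a $2$-approximation suffices), so that $I:=V(G)\setminus S$ is independent, and partition $I$ into \emph{twin classes} according to the neighbourhood in $S$; there are at most $2^{|S|}=2^{\mathcal{O}(\vc)}$ of them. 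It then suffices to give a safe reduction rule bounding the size of each twin class by an absolute constant, after which $|V(G)|\le |S|+\mathcal{O}(1)\cdot 2^{\mathcal{O}(\vc)}=2^{\mathcal{O}(\vc)}$; finally one replaces $k$ by $\min\{k,|V(G)|\}$, which is harmless since $\NCTDp(\B(G))\le |V(G)|$ always (teaching each ball by the whole ball is a valid $\NCTMp$).

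The structural backbone is the behaviour of balls under twins. If $u,u'$ lie in the same twin class $Q$, then $d(u,x)=d(u',x)$ for every $x\notin\{u,u'\}$; hence $B_r(u)=B_r(u')$ for all $r\ge 2$, and the only balls containing one vertex of $Q$ but not another are $B_0(u)$ and $B_1(u)=\{u\}\cup N(u)$ for $u\in Q$ --- none of which contains two vertices of $Q$. I would then prove a normalisation lemma: in any $\NCTMp$ $T$ for $\B(G)$ and any ball $B$ with $Q\subseteq B$, if $|T(B)\cap Q|\ge 3$ one may delete a vertex $q$ of $Q$ from $T(B)$ while keeping $T$ non-clashing --- a new clash would require a ball $B''\ne B$ with $T(B'')\subseteq B$ and $T(B)\setminus\{q\}\subseteq B''$ but $q\notin B''$, which forces $B''$ to be centred in $Q$ with radius $\le 1$, so $|B''\cap Q|=1$, contradicting that $T(B)\setminus\{q\}$ still meets $Q$ in two vertices. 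Iterating, one may assume $|T(B)\cap Q|\le 2$ for every ball $B$ and every class $Q\subseteq B$.

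With this in hand, the reduction rule reads: if some twin class $Q$ has more than a fixed (comfortably large) constant $c$ of vertices, delete one vertex $v\in Q$ and keep $k$. For safety, identify $\B(G-v)$ with $\B(G)$ via the map $\rho$ replacing $v$ by a fixed sibling $u\in Q$: every ball of $G-v$ is $\rho(B)$ for some ball $B$ of $G$, the only balls genuinely lost are $B_0(v)$ and $B_1(v)$, all balls $B_r(u')$ with $u'\in Q$, $r\ge 2$ coincide, and $\rho$ is at most $2$-to-$1$. For the backward implication, given an $\NCTMp$ of size $k$ for $\B(G)$, first normalise it and then restrict it along $\rho$: each occurrence of $v$ inside a teaching set $T(B)$ with $Q\subseteq B$ is re-routed to a sibling in $Q\setminus(T(B)\cup\{v\})$, which exists because $|T(B)\cap Q|\le 2<|Q|$, and the few balls involving $B_0(v)$ or $B_1(v)$ are handled by ad hoc (routine) edits. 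For the forward implication, extend an $\NCTMp$ of $G-v$ by copying sibling teaching sets through $\rho^{-1}$, setting $T(B_0(v)):=\{v\}$ and deriving $T(B_1(v))$ from that of $B_1(u)$, and --- for each pair $C$, $C\cup\{v\}$ that are both balls of $G$ --- putting $v$ into the teaching set of $C\cup\{v\}$ (using a spare sibling to keep the size at $k$) so that the two are distinguished. One then verifies that the non-clashing condition is preserved in both directions.

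The step I expect to be the main obstacle is exactly the safety proof of this rule: deleting a twin perturbs the ball hypergraph in several ways simultaneously --- two balls vanish, all radius-$\ge 2$ balls centred in $Q$ get identified, and every ball containing $Q$ loses $v$ --- so one must track this correspondence carefully and check that non-clashing survives both the restriction and the extension of teaching maps, re-routing occurrences of $v$ consistently without inflating teaching-set sizes; this is precisely what fixes the constant $c$. Once the rule is proved safe, the kernel bound $|V(G)|=2^{\mathcal{O}(\vc)}$, and hence the $2^{2^{\mathcal{O}(\vc)}}\cdot n^{\mathcal{O}(1)}$-time algorithm, follow at once.
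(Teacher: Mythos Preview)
Your high-level strategy matches the paper's: compute a vertex cover $X$, group the independent side into twin classes, delete a twin whenever a class is too large, and derive the running time from the kernel plus Prop.~\ref{exact-algo}. Your normalisation lemma ($|T(B)\cap Q|\le 2$ can be enforced for every ball $B\supseteq Q$) is correct.

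The gap is in the safety proof of the deletion rule, specifically the direction from $G$ to $G-v$. Consider a ball $B\supseteq Q$ with $T(B)\cap Q=\{v\}$ and $T(B)\setminus\{v\}\subseteq N$, where $N=N(Q)$. Re-routing $v$ to a sibling $w\notin T(B)$ yields $T'(B)=(T(B)\setminus\{v\})\cup\{w\}\subseteq\{w\}\cup N=B_1(w)$; if in addition $T(B_1(w))\subseteq B$, the new map clashes on $B$ and $B_1(w)$ in $G-v$. Nothing in your outline shows how to choose $w$ (or which $v$ to delete) so as to avoid this for \emph{every} such $B$: the original non-clashing of $T$ only yields $T(B_1(v))\not\subseteq B$, a statement about the deleted vertex, not about any surviving sibling. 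Your normalisation lemma does not resolve this, and the ``ad hoc (routine) edits'' you mention concern $B_0(v),B_1(v)$, not the balls $B_1(w)$ for other $w\in Q$.

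This is precisely the obstacle around which the paper's argument is built, and it is why the paper's threshold is $2^{|X|}+2$ rather than an absolute constant: pigeonhole over the at most $2^{|X|}$ possible values of $T(B_1(q))\setminus\{q\}\subseteq N\subseteq X$ produces two twins $x,y$ with $T(B_1(x))\setminus\{x\}=T(B_1(y))\setminus\{y\}$; one deletes $y$ and re-routes uniformly to $x$, and then $T(B_1(y))\not\subseteq B$ forces $T(B_1(x))\not\subseteq B$, so the clash above cannot arise. With the threshold raised to $2^{|X|}+2$ your plan goes through and still gives a $2^{\mathcal{O}(\vc)}$-vertex kernel, but the constant-threshold claim as stated is not established.
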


Given a graph $G$, two vertices $u,v\in V(G)$ are \emph{false twins} if $N(u)=N(v)$.
The following reduction rule is used to design the kernelization algorithm in Theorem~\ref{thm:vc-upper-bound}.

\begin{reduction rule}[RR1]\label{red-rule}
Given a graph $G$ and a set $X\subseteq V(G)$ such that $I:=V(G)\setminus X$ is an independent set (\textit{i.e.}, $X$ is a vertex cover of $G$), if there exist $2^{|X|}+2$ vertices in $I$ that are pairwise false twins, then delete one of them.
\end{reduction rule}

The idea for proving the forward direction of the next lemma is that, by the pigeonhole principle, for any set $S\subseteq I$ of $2^{|X|}+2$ false twins, there exist $x,y\in S$ such that $x\in T(x,1)$, $y\in T(y,1)$, and $T(x,1)\setminus \{x\} = T(y,1)\setminus \{y\}$. Then, for any teaching set containing $y$, we can replace $y$ by $x$ or another vertex of $S$ if $x$ is already in the teaching set.

\begin{restatable}{lemma*}{rulesafe}\label{red-rule-safe}
Reduction Rule~\ref{red-rule} is safe for \srmfull.
\end{restatable}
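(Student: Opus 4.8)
The plan is to fix a graph $G$, a vertex cover $X$ of $G$, and a set $S\subseteq I=V(G)\setminus X$ of $2^{|X|}+2$ pairwise false twins; let $v\in S$ be the vertex deleted by RR1, and let $G'=G-v$. I need to show $\NCTDp(\B(G))\le k$ iff $\NCTDp(\B(G'))\le k$. The first thing to observe is how balls behave under deletion of a false twin: since $v$ has a false twin still present in $G'$ (indeed many), distances are essentially unchanged, so $\B(G')=\{B\cap V(G'): B\in\B(G)\}$, and distinct balls of $G$ restrict to distinct balls of $G'$ except possibly for balls that differ only in whether they contain $v$. The key structural fact I would isolate first: for any two false twins $x,y$ and any vertex $u$, $d(u,x)=d(u,y)$ (a shortest path to $x$ can be rerouted through $y$'s neighborhood and vice versa), so every ball of $G$ that contains one vertex of $S$ in its "interior" (distance strictly less than the radius) contains all of $S$, and the only balls distinguishing members of $S$ are the singleton balls $B_0(x)$, $x\in S$. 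This makes the ball structures of $G$ and $G'$ almost identical, differing only by the removal of $B_0(v)$ and by restriction.

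The hard direction is the forward one: assuming an $\NCTMp$ $T$ of size $k$ for $\B(G)$, produce one for $\B(G')$. Following the hint, I would apply pigeonhole to $S$: each $x\in S$ has $\underline{T}(x,0)\subseteq\{x\}$, so in fact $T(x,0)=\{x\}$ (it cannot be empty, else it clashes with $B_1$ of a neighbour, as in the footnote argument), and $\underline{T}(x,1)\subseteq B_1(x)^+$; writing $\underline{T}(x,1)=\{x\}\cup R_x$ when $x\in\underline{T}(x,1)$, or $\underline{T}(x,1)=R_x$ otherwise, where $R_x\subseteq N(x)\subseteq X$ since $I$ is independent — there are at most $2^{|X|}$ possibilities for $R_x$, hence fewer than $|S|$. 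So there exist distinct $x,y\in S$ with $R_x=R_y$; moreover I can arrange (by a short case analysis on whether $x\in\underline{T}(x,1)$) that $x,y$ are "$T$-interchangeable" in the sense that swapping the roles of $x$ and $y$ throughout $T$ — both as centres of balls and as elements of teaching sets — yields another valid $\NCTMp$ of the same size. Now obtain $T'$ for $\B(G')$ as follows: for the (at most one) ball $B\in\B(G)$ whose restriction to $V(G')$ differs from that of every other ball only because $v\in B$ — but by the structural fact the only such ball is $B_0(v)$ itself, which is gone — so actually every ball of $G'$ lifts to a unique ball of $G$ other than $B_0(v)$. Define $T'(B'):=T(B)$ for the lift $B$, except that wherever $v$ occurs in a teaching set $\underline{T}(B)$, replace it: since $v\in\underline{T}(B)^+$ forces $v\in B$ hence $S\subseteq B$, pick some $x\in S\setminus\{v\}$ with $x\notin\underline{T}(B)$ (possible as $|S|\ge 2^{|X|}+2$ exceeds the size bound $k$... one needs $k$ could exceed $2^{|X|}+2$, so instead use the interchangeable pair: if $v=x$ replace by $y$ everywhere, using $T$-interchangeability to keep non-clashing). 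I expect the bookkeeping to reduce cleanly to: replacing $v$ by its interchangeable partner everywhere and then deleting the now-redundant entry $T(v,0)$ yields a valid size-$k$ $\NCTMp$ for $\B(G')$.

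The backward direction is easier: given an $\NCTMp$ $T'$ of size $k$ for $\B(G')$, extend it to $\B(G)$ by setting $T(B_0(v)):=\{v\}$ (which non-clashes with every other ball since $v\in\{v\}$ but $v\notin B'$ for any ball $B'\ne B_0(v)$ of $G$ with $v\notin B'$, and for balls containing $v$ we use that they contain all of $S$ and their teaching sets already separate them from $B_0(x)$'s), and for every other ball of $G$ use $T'$ of its restriction, checking that adding the vertex $v$ back (which only ever enlarges a ball together with the rest of $S$) does not create a clash — here I would lean on the fact that $v$'s false twins are still present and already handled by $T'$. The main obstacle is the forward direction's case analysis ensuring the swap of $v$ with its pigeonhole-partner preserves the non-clashing condition globally; I expect this to come down to verifying that two balls clash after the swap only if they (or their swapped images) clashed before, which is immediate once $T$-interchangeability is set up correctly, so the real work is just pinning down that notion precisely.

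\begin{proof}
Deferred to the appendix.
\end{proof}
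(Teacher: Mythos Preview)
Your overall strategy matches the paper's --- pigeonhole on the radius-$1$ teaching sets $T(s,1)$ for $s\in S$ to locate two interchangeable twins, then replace --- but there is a genuine gap rooted in an incorrect structural claim. You assert that for false twins $x,y$ and any vertex $u$ one has $d(u,x)=d(u,y)$, and hence that the only balls distinguishing members of $S$ are the singletons $B_0(s)$. This fails when $u\in S$: for distinct $s,s'\in S$ we have $d(s,s)=0$ but $d(s,s')=2$, so $B_1(s)=\{s\}\cup N(s)$ contains $s$ and no other member of $S$; the balls $B_1(s)$, $s\in S$, are pairwise distinct and also separate members of $S$. This breaks your backward direction: $B_1(v)$ restricts to $N(v)\subseteq X$, which (whenever $|N(v)|\ge 2$) is not a ball of $G'$ at all --- any positive-radius ball of $G'$ centred in $N(v)$ already meets $S\setminus\{v\}$ --- so ``use $T'$ of its restriction'' is undefined for $B_1(v)$, and you give no rule for $T(v,1)$. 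The paper handles this by observing (again) that at most one $s\in S\setminus\{v\}$ can have $s\notin T'(s,1)$, choosing $x$ with $x\in T'(x,1)$, setting $T(v,1):=\{v\}\cup(T'(x,1)\setminus\{x\})$, and then verifying non-clashing against every other ball; this verification is the most delicate part of the paper's argument and is not a routine check.

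A secondary issue in your forward direction: you fix the deleted vertex $v$ in advance, but the pigeonhole pair $(x,y)$ need not contain $v$, so ``replace $v$ by its interchangeable partner'' may have no partner available. The paper avoids this by choosing to delete $y$ itself; since deletions of different members of $S$ yield isomorphic graphs this is harmless, but you should say so explicitly rather than leave it implicit.
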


Theorem~\ref{thm:vc-upper-bound} follows from exhaustively applying RR1 for the kernelization algorithm, and using the algorithm from Prop.~\ref{exact-algo} on the resulting kernel for the other algorithm.

\section{NCTMs for classes of graphs}

In this section, we construct 
NCTMs for balls of several simple classes of graphs: trees, interval graphs, cycles, and trees of cycles.
We also design approximate NCTMs for balls in $\delta$-hyperbolic graphs.
In each of our NCTMs $T$, for any ball $B_r(x)$, $T^+(x,r)$ consists of two vertices of $B_r(x)$ that are ``farthest apart''.
In each case except for (trees of) cycles, we set $T(x,r)=T^+(x,r)$.

\subsection{Trees} 
For any ball $B_r(x)$ of a tree $\T$, define $T(x,r)$ as any diametral pair $\{u,v\}$ of $B_r(x)$. 

\begin{proposition} \label{rep-map-trees}
For a tree $\T$, $T$ is an  $\NCTMp$ for $\B(\T)$, \textit{i.e.},  $\NCTDp(\B(\T))=\VCD(\B(\T))\leq 2$.
\end{proposition}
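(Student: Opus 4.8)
## Proof Proposal for Proposition~\ref{rep-map-trees}

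The plan is to verify the two conditions that make $T$ a valid $\NCTMp$, and then observe that the size bound together with the general inequality $\VCD(\B(\T)) \le \NCTDp(\B(\T))$ (which holds since any $\NCTMp$ must in particular be an NCTM, and NCTD lower-bounds... actually more directly: $\VCD \le \NCTD$ is standard, and here $\NCTDp \le 2$) pins down the stated chain of (in)equalities. First I would check the \emph{inclusion condition}: for a ball $B_r(x)$, the chosen teaching set $T(x,r)$ is a diametral pair $\{u,v\}$ of $B_r(x)$, hence $u,v \in B_r(x)$, so $\underline{T}(x,r) = T^+(x,r) = \{u,v\} \subseteq B_r(x)^+$, as required. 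The size is at most $2$ trivially. The bulk of the work is the \emph{non-clashing condition}: given two distinct balls $B_r(x), B_{r'}(x')$ with $B_r(x) \ne B_{r'}(x')$ as sets, I must show they are not both consistent with the union of their teaching sets — equivalently, that some vertex in $T^+(x,r) \cup T^+(x',r')$ lies in exactly one of the two balls.

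The key idea I would exploit is the tree-metric structure, specifically that in a tree every ball $B_r(x)$ is ``geodesically convex'' and is determined by a diametral pair together with a midpoint: if $\{u,v\}$ is a diametral pair of $B := B_r(x)$ with $d(u,v) = D$, then $B$ is exactly the set of vertices $w$ whose projection onto the path $uv$ lies within a controlled distance, and moreover the center $x$ is (essentially) the midpoint of the $u$–$v$ path, with $r = \lceil D/2 \rceil$. So I would first prove the structural lemma: \emph{a ball in a tree is recoverable from any one of its diametral pairs}. Concretely, if $\{u,v\}$ is a diametral pair of $B_r(x)$, then $B_r(x) = \{ w \in V(\T) : \max(d(w,u), d(w,v)) \le D \}$ where $D = d(u,v) = \diam(B_r(x))$; one direction is immediate, and the reverse uses that if some $w$ with $\max(d(w,u),d(w,v)) \le D$ were outside $B_r(x)$, then $w$ together with whichever of $u,v$ is farther from it would witness a pair inside $B_r(x)$... — here I'd need to be careful and instead argue via the midpoint/convexity characterization, showing $r = \lceil D/2\rceil$ and that $B_r(x)$ equals the "radius-$r$ ball about the midpoint of $[u,v]$". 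Granting this lemma, non-clashing follows: suppose $\{u,v\} = T^+(x,r)$ and $\{u',v'\} = T^+(x',r')$ and all four points lie in both balls $B = B_r(x)$ and $B' = B_{r'}(x')$. Then $u,v \in B'$ forces $d(u,u'), d(u,v'), d(v,u'), d(v,v') \le D' := \diam(B')$, and symmetrically $\le D := \diam(B)$; in particular $D = d(u,v) \le D'$ and $D' = d(u',v') \le D$, so $D = D'$. Then the recovery lemma applied with diametral pair $\{u,v\}$ of both $B$ and $B'$ (it is diametral for $B'$ too, since $d(u,v) = D = D'$ and $u,v \in B'$) gives $B = \{w : \max(d(w,u),d(w,v)) \le D\} = B'$, contradicting $B \ne B'$.

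The step I expect to be the main obstacle is the structural recovery lemma — getting the precise statement right, including the parity/ceiling subtleties ($r$ versus $\lceil D/2 \rceil$, and the fact that a ball in a tree might have several centers or that a diametral pair need not be unique) and handling the degenerate case $D = 0$ (singleton balls $B_0(x) = \{x\}$, where the "pair" degenerates and one must check clashing against other balls containing $x$, matching the flavor of the argument in Example~2). I would also double-check that $\VCD(\B(\T)) \le 2$ is genuinely forced: exhibiting three vertices shattered by balls of a tree is impossible because shattering three points $a,b,c$ would require a ball containing $a,b$ but not $c$, another containing $a,c$ but not $b$, etc., which contradicts the Helly/convexity property of tree intervals (any $w \in I(a,b) \cap$ "separating" structure). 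Since the paper only claims $\le 2$ and the $\NCTMp$ of size $2$ already yields $\NCTDp \le 2$, and $\VCD \le \NCTD \le \NCTDp$ in general while the size-$2$ map shows $\NCTDp \le 2$, I would close by noting equality $\NCTDp(\B(\T)) = \VCD(\B(\T))$ holds once $\VCD(\B(\T)) = 2$ is witnessed (a path on $3$ vertices shatters a $2$-set), or $\le 1$ in degenerate small trees; either way the displayed chain holds.
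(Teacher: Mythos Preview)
Your approach is correct in outline but more roundabout than the paper's, and the step you flag as the ``main obstacle'' is indeed where your sketch is vaguest. You aim to prove a \emph{recovery lemma} --- that a tree ball is determined by any diametral pair via $B_r(x) = \{w : \max(d(w,u),d(w,v)) \le D\}$ --- and then conclude that if all four teaching-set vertices lie in both balls, the diameters agree and the balls coincide. The paper bypasses this entirely by applying the tree four-point condition ($0$-hyperbolicity) directly: if $z \in B_{r_2}(y) \setminus B_{r_1}(x)$ and $T(y,r_2) = \{u,v\}$, then from $d(x,z)+d(u,v) \le \max\{d(x,u)+d(v,z),\, d(x,v)+d(u,z)\}$ and $d(u,z) \le d(u,v)$ one reads off $d(x,v) \ge d(x,z) > r_1$, hence $v \notin B_{r_1}(x)$, in two lines. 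Your recovery lemma is in fact true and follows immediately from the \emph{same} four-point inequality (apply it to the quadruple $x,w,u,v$), so the obstacle you anticipate dissolves once you use that tool instead of the midpoint/convexity reasoning you gesture at; your first attempted argument for the reverse inclusion (``$w$ together with whichever of $u,v$ is farther\ldots'') does not work as stated since $w \notin B_r(x)$. Your route isolates a reusable structural fact about tree balls; the paper's buys a two-line proof that extends verbatim to $\delta$-hyperbolic graphs (Theorem~\ref{prop:LSCS_delta-hyperbolic}).

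One correction: you invoke ``$\VCD \le \NCTD \le \NCTDp$ in general'', but $\VCD \le \NCTD$ is not a known general inequality --- the paper in fact singles out the \emph{reverse} bound $\NCTD \le \VCD$ as the central open question from~\cite{FKS23,KSZ19}. The inequality $\VCD(\B(\T)) \le 2$ is a separate (easy) fact, not a consequence of the $\NCTMp$ construction.
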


\begin{proof}
  For any ball $B_r(x)$, $T(x,r)\subseteq B_r(x)$.
  So, as $|T(x,r)|=2$ for any ball $B_r(x)$ with $r\geq 1$, $T$ is non-clashing for any pair of balls that includes a ball of radius $0$.
  Now, suppose $B_{r_1}(x)\ne B_{r_2}(y)$ and assume that there exists $z \in B_{r_2}(y) \setminus B_{r_1}(x)$. Let $T(y,r_2) = \{u,v\}$. Then, $d(x,z) + d(u,v) \leq \max \{d(x,u) + d(v,z), d(x,v) + d(u,z) \}$, say $d(x,z) + d(u,v) \leq d(x,v) + d(u,z)$.
  Since $u,z \in B_{r_2}(y)$, $d(u,z) \leq d(u,v)$.
  Thus, $v \notin B_{r_1}(x)$ as $d(x,v) \geq d(x,z) > r_1$.
  So, $T$ is non-clashing.
\end{proof}

\subsection{Interval graphs}

We consider a representation of an interval graph  $\III$
by a set of segments $J_v$, $v\in V(\III)$, of $\mathbb{R}$ with pairwise distinct ends.
For any $u\in V(\III)$, its segment is denoted by $J_u=[s_u,e_u]$, where $s_u$ is the start of $J_u$,
and $e_u$ is the end of $J_u$, \textit{i.e.}, $s_u\leq e_u$.
We use the following property: 

\begin{lemma}{(Lemma~24, \cite{ChChMc})}\label{lem:intervalinclusion}
    If $u,v\in B_r(x)$, $s_u,s_z<s_v$, and $e_u<e_v,e_z$, then $z\in B_r(x)$.
\end{lemma}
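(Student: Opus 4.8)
The plan is to deduce the statement from a clean structural description of balls in an interval graph under a fixed segment representation: for every $r\ge 1$, the ball $B_r(x)$ is exactly the set of vertices whose segment meets a suitable closed interval $I_r$ of $\mathbb{R}$. I would prove this by induction on $r$, simultaneously establishing that $I_r:=\bigcup_{w\in B_{r-1}(x)}J_w$ (reading $B_0(x)=\{x\}$, so that $I_1=J_x$) is a single closed interval. The base case $r=1$ is just the definition of an interval graph: $B_1(x)=N[x]=\{w:J_w\cap J_x\neq\varnothing\}$, and $I_1=J_x$ is a closed interval. For the inductive step, assume the claim for $r$. Since $B_{r-1}(x)\subseteq B_r(x)$, the interval $I_r$ is covered by the segments $\{J_w:w\in B_r(x)\}$, so each such $J_w$ meets the connected set $I_r$; hence $\bigcup_{w\in B_r(x)}J_w$ is connected and, being a finite union of closed intervals, is a single closed interval $I_{r+1}$ whose endpoints are $\min\{s_w:w\in B_r(x)\}$ and $\max\{e_w:w\in B_r(x)\}$. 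Finally, $w'\in B_{r+1}(x)$ iff $J_{w'}$ meets $J_w$ for some $w\in B_r(x)$, i.e.\ iff $J_{w'}$ meets $\bigcup_{w\in B_r(x)}J_w=I_{r+1}$, which closes the induction.

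Granting this, the lemma follows in a few lines. First, $r\ge 1$: if $r=0$ then $B_0(x)=\{x\}$ forces $u=v=x$, contradicting $s_u<s_v$. Write $I_r=[a,b]$. From $u\in B_r(x)$ we get $J_u\cap[a,b]\neq\varnothing$, hence $e_u\ge a$; from $v\in B_r(x)$ we get $s_v\le b$. Therefore $e_z>e_u\ge a$ and $s_z<s_v\le b$, so $J_z=[s_z,e_z]$ satisfies $e_z\ge a$ and $s_z\le b$, i.e.\ $J_z\cap[a,b]\neq\varnothing$; by the structural description, $z\in B_r(x)$.

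The only genuinely delicate point is the inductive step above — verifying that $\bigcup_{w\in B_r(x)}J_w$ is a single closed interval. This rests on the monotonicity $B_{r-1}(x)\subseteq B_r(x)$ (so that the previous interval $I_r$ is covered, forcing connectivity) together with finiteness of $V(\III)$ (so the extreme endpoints are attained); everything afterwards is routine bookkeeping with the order on $\mathbb{R}$. Note that the hypothesis of pairwise distinct segment endpoints is used only to guarantee that $s_u<s_v$ genuinely forces $u\neq v$ (hence $r\ge 1$); the substance of the argument is simply that $s_z$ does not exceed the rightward reach $b$ of $B_r(x)$ and $e_z$ does not fall short of its leftward reach $a$.
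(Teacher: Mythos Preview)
The paper does not prove this lemma; it is quoted as Lemma~24 of \cite{ChChMc} and used as a black box in the proof of Proposition~\ref{rep-map-interval-graphs}. So there is no in-paper argument to compare against.

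Your proof is correct. The structural characterisation---$B_r(x)$ is exactly the set of vertices whose segment meets the closed interval $I_r=[a,b]$, where $I_r:=\bigcup_{w\in B_{r-1}(x)}J_w$---goes through by the induction you sketch, and the lemma is then immediate from $e_z>e_u\ge a$ and $s_z<s_v\le b$. One small expository slip: in the inductive step you write ``the interval $I_r$ is covered by the segments $\{J_w:w\in B_r(x)\}$, so each such $J_w$ meets the connected set $I_r$''. The word ``so'' is misplaced---a family covering a set does not force each member of the family to meet it. The correct reason each such $J_w$ meets $I_r$ is the inductive hypothesis itself ($w\in B_r(x)\iff J_w\cap I_r\neq\varnothing$), which you already have in hand; with that one-word fix the connectivity of $I_{r+1}$ follows as you intend. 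The handling of the $r=0$ case via $s_u<s_v$ is fine.
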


For a subgraph $\III'$ of $\III$, $\{u,v\}$ is a \emph{farthest pair}
of $\III'$ if $u$ is the vertex in $\III'$ whose segment $J_u$ ends farthest to the
left, and $v$ is the vertex in $\III'$ whose segment $J_v$ begins farthest to the right,
\textit{i.e.}, for any $w\in V(\III')\setminus \{u,v\}$, we have $e_u<e_w$ and $s_w<s_v$.
Define the map $T$ on $\B(\III)$: for any ball $B_r(x)$ of $\III$, set $T(x,r)$ to be the
farthest pair $\{ u,v\}$ of $B_r(x)$ if $r\geq 1$, and set $T(x,0):=\{x\}$.

\begin{proposition} \label{rep-map-interval-graphs}
    For an interval graph $\III$, $T$ is an $\NCTMp$
    for $\B(\III)$, \textit{i.e.}, $\NCTDp(\B(\III))=\VCD(\B(\III))\leq 2$.
\end{proposition}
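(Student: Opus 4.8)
The plan is to mirror the structure of the proof for trees (Proposition~\ref{rep-map-trees}), replacing the tree-metric four-point argument by the interval-graph ``inclusion'' property (Lemma~\ref{lem:intervalinclusion}). First I would observe that $T$ clearly satisfies the inclusion condition: $T(x,0)=\{x\}\subseteq B_0(x)$, and for $r\geq 1$ the farthest pair $\{u,v\}$ of $B_r(x)$ consists of two vertices of $B_r(x)$, so $\underline T(x,r)=T^+(x,r)\subseteq B_r(x)=B_r(x)^+$. Since every teaching set is nonempty, any pair of balls one of which has radius $0$, say $B_0(x)$ versus $B'$, is automatically non-clashing: if $B'=B_0(y)$ with $y\neq x$ then $x\notin T(y,0)=\{y\}$ restricted shows they differ, and if $B'=B_{r}(y)$ with $r\geq 1$ then $x\notin B_{r}(y)$ forces $x$ into $B_0(x)\setminus B_r(y)$ only in the relevant case, but more simply one checks $T(x,0)=\{x\}\subseteq B_0(x)$ and $\{x\}\not\subseteq B_r(y)$ or $T(y,r)\not\subseteq B_0(x)$, exactly as in the tree proof. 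So it remains to handle two balls $B_{r_1}(x)\neq B_{r_2}(y)$ of radius $\geq 1$.

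For that case I would argue by contradiction: suppose $T(x,r_1)=\{a,b\}\subseteq B_{r_2}(y)$ and $T(y,r_2)=\{u,v\}\subseteq B_{r_1}(x)$, and derive $B_{r_1}(x)=B_{r_2}(y)$. Here $a$ is the vertex of $B_{r_1}(x)$ whose segment ends leftmost and $b$ the one whose segment begins rightmost, so for every $w\in B_{r_1}(x)$ we have $e_a\le e_w$ and $s_w\le s_b$; symmetrically for $u,v$ in $B_{r_2}(y)$. Now take any $z\in B_{r_1}(x)$; I want to show $z\in B_{r_2}(y)$. We have $u,v\in B_{r_1}(x)$, so $e_a\le e_u$ and $s_v\le s_b$, and also $a,b\in B_{r_2}(y)$, so $e_u\le e_a$ (hence $e_u=e_a$) — wait, more carefully: $u$ is leftmost-ending in $B_{r_2}(y)$ and $a\in B_{r_2}(y)$ gives $e_u\le e_a$; $a$ is leftmost-ending in $B_{r_1}(x)$ and $u\in B_{r_1}(x)$ gives $e_a\le e_u$; so $e_a=e_u$, forcing $a=u$ by distinctness of ends. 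Symmetrically $b=v$. So $T(x,r_1)=T(y,r_2)=\{a,b\}$ with $a=u$, $b=v$, and both $a,b$ lie in $B_{r_1}(x)\cap B_{r_2}(y)$. Then for $z\in B_{r_1}(x)$ we have $s_a\le s_z$... and $e_z\le$? I need the hypotheses of Lemma~\ref{lem:intervalinclusion} to land $z\in B_{r_2}(y)$: the lemma says if $a,b\in B_{r_2}(y)$, $s_a,s_z<s_b$, and $e_a<e_b,e_z$, then $z\in B_{r_2}(y)$. Since $a$ is leftmost-ending and $b$ rightmost-starting in $B_{r_1}(x)\ni z$, generically $e_a<e_z$ and $s_z<s_b$, and $s_a<s_b$, $e_a<e_b$ (with equality/degenerate cases where $z\in\{a,b\}\subseteq B_{r_2}(y)$ handled separately); so $z\in B_{r_2}(y)$. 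By symmetry $B_{r_2}(y)\subseteq B_{r_1}(x)$, hence equality — contradiction.

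The cleanest way to present this is: first prove the two farthest pairs coincide ($a=u$, $b=v$) using the extremality definitions and distinctness of endpoints, then apply Lemma~\ref{lem:intervalinclusion} in each direction to get mutual inclusion of the balls. The main obstacle I anticipate is the bookkeeping around degenerate cases in Lemma~\ref{lem:intervalinclusion}: the lemma as stated needs \emph{strict} inequalities $s_a,s_z<s_b$ and $e_a<e_b,e_z$, so I must separately dispose of the possibilities $z=a$ or $z=b$ (trivial, since $a=u\in B_{r_2}(y)$ and $b=v\in B_{r_2}(y)$), and check that for $z\in B_{r_1}(x)\setminus\{a,b\}$ the definition of ``farthest pair'' indeed yields the strict inequalities $e_a<e_z$ and $s_z<s_b$, and that $s_a<s_b$, $e_a<e_b$ follow (if $a=b$ then $B_{r_1}(x)=\{a\}$ has radius $0$, excluded; otherwise $a\neq b$ forces these strict inequalities from extremality plus distinct ends). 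I would also want to double-check that $|B_{r_1}(x)|\ge 2$ so that the farthest pair is genuinely a pair — if $B_{r_1}(x)=\{x\}$ with $r_1\ge 1$ this reduces to the radius-$0$ case already handled. Once these corner cases are pinned down, the argument is a short, direct analog of the tree proof, and I'd keep it to a single paragraph in the final write-up.
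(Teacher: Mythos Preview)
Your approach is correct and rests on the same key tool (Lemma~\ref{lem:intervalinclusion}) as the paper, but you take an unnecessary detour. The paper's argument is one-sided: assuming only $T(x,r_1)=\{u,v\}\subseteq B_{r_2}(y)$, it observes that every $z\in B_{r_1}(x)$ already satisfies the hypotheses of Lemma~\ref{lem:intervalinclusion} relative to the ball $B_{r_2}(y)$ (since $u$ ends leftmost and $v$ starts rightmost in $B_{r_1}(x)$, one has $e_u<e_z$ and $s_z<s_v$, together with $s_u<s_v$ and $e_u<e_v$), whence $B_{r_1}(x)\subseteq B_{r_2}(y)$ directly. Non-clashing then follows by applying the same implication in the other direction. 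Your intermediate step of using \emph{both} inclusions to force $a=u$ and $b=v$ is valid and pleasant, but superfluous: once the farthest pair of $B_{r_1}(x)$ is known to lie in $B_{r_2}(y)$, the lemma already sweeps all of $B_{r_1}(x)$ into $B_{r_2}(y)$ without any reference to the farthest pair of $B_{r_2}(y)$. Dropping that step also eliminates most of the degenerate-case bookkeeping you flagged, leaving only the trivial observation that $z\in\{u,v\}$ is already in $B_{r_2}(y)$.
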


\begin{proof}
  For any ball $B_r(x)$, $T(x,r)\subseteq B_r(x)$.
   So, as $|T(x,r)|=2$ for any ball $B_r(x)$ with $r\geq 1$, $T$ is non-clashing for any pair of balls that includes a ball of radius $0$.
  Now, consider two balls $B_{r_1}(x)$ and
  $B_{r_2}(y)$ such that $T(x,r_1) = \{u,v\} \subseteq
  B_{r_2}(y)$. For any $z \in B_{r_1}(x)$, we have $e_z>e_{u}$ and
  $s_z<s_{v}$, and thus, $z \in B_{r_2}(y)$ by
  Lemma~\ref{lem:intervalinclusion}, establishing that
  $B_{r_1}(x) \subseteq B_{r_2}(y)$. Consequently, $T$ sastisfies the
  non-clashing condition.
  Finally, $\VCD(\B(\III))\leq 2$~\cite{DuHaVi}.
\end{proof}

\subsection{Cycles}

In contrast with trees and interval graphs, we prove that:

\begin{proposition} \label{cycles-unlabeled} In cycles, the family of balls do not admit $\NCTMp$s
of constant size.
\end{proposition}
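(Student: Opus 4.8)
The plan is to show that, in a cycle $C_n$ with $n$ large, any $\NCTMp$ $T$ for $\B(C_n)$ must assign a teaching set of unbounded size to the vertex set $V(C_n)$ itself (realized as the ball $B_r(x)$ for $r = \lceil n/2 \rceil$, say), exactly as happened for the $n$-octahedron in the earlier example. The key structural feature to exploit is that in $C_n$, a ball $B_r(x)$ is a path (arc) of $2r+1$ consecutive vertices, and for $r = \lceil n/2 \rceil - 1$ the ball $B_r(x) = V(C_n) \setminus \{\bar x\}$ misses exactly one vertex $\bar x$ (the antipode, or near-antipode when $n$ is odd — one should handle $n$ even first for cleanliness). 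Since $T$ is positive, $\underline{T}(x,r) = T^+(x,r) \subseteq B_r(x) = V \setminus \{\bar x\}$, so $\bar x \notin \underline{T}(x,r)$.

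The main step is then a counting/covering argument forcing $T(V(C_n))$ to be large. First I would observe that $B_r(x) = V \setminus \{\bar x\}$ and $V(C_n)$ are two distinct concepts; the non-clashing condition between them requires that $V(C_n)$ restricted to $\underline{T}(x,r) \cup \underline{T}(V)$ differs from $B_r(x)$ restricted to that set. Since $T$ is positive and $V(C_n)$ is the all-$+1$ vector, $V(C_n)$ agrees with $B_r(x)$ on every vertex except $\bar x$; hence the only way to distinguish them is to have $\bar x \in \underline{T}(x,r) \cup \underline{T}(V)$, and since $\bar x \notin \underline{T}(x,r)$, we must have $\bar x \in T^+(V(C_n))$. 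As $x$ ranges over all $n$ vertices of $C_n$ (when $n$ is even), $\bar x$ ranges over all $n$ vertices, so $T^+(V(C_n)) = V(C_n)$, giving $|\underline{T}(V(C_n))| = n$. Letting $n \to \infty$ shows no fixed bound on the size of $T$ can work, proving that $\B(C_n)$ admits no $\NCTMp$ of constant size.

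The main obstacle — really the only subtlety — is the parity issue: when $n$ is odd there is no exact antipode, so $B_r(x)$ for $r = (n-1)/2$ already equals all of $V(C_n)$, and one instead uses $r = (n-3)/2$, for which $B_r(x) = V \setminus \{\bar x_1, \bar x_2\}$ misses the two "near-antipodal" vertices; the argument then needs a slightly more careful version (for each $x$, at least one of $\bar x_1, \bar x_2$ must lie in $T^+(V)$, and a simple covering/double-counting over all $x$ still forces $|T^+(V)|$ to grow linearly, e.g. at least $n/2$). Alternatively, one simply restricts to even $n$, which already suffices to refute constant size. I would present the even case in full and remark that the odd case is analogous, or just state "taking $n$ even" at the outset. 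A secondary minor point is to confirm that for the chosen $r$ the ball $B_r(x)$ is genuinely distinct as a set from $B_{r'}(y)$ for the relevant pairs — but this is immediate from the arc description of balls in a cycle.
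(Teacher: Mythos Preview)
Your proof is correct and in fact yields a sharper conclusion than the paper's: for even $n$ you obtain $|T^+(V(C_n))|=n$, hence $\NCTDp(\B(C_n))=n$, whereas the paper only derives $\NCTDp(\B(C_n))>n/2-1$. Your approach, however, differs from the paper's. You compare each near-full ball $B_{n/2-1}(x)=V\setminus\{\bar x\}$ against the full ball $V$, which forces every antipode into $T^+(V)$; this is exactly the mechanism from the $n$-octahedron example earlier in the paper. The paper instead never uses the full ball: it compares the near-full balls $B_k(x)$ and $B_k(z)$ (with $n=2k+2$) against one another. Since these differ only on $\{\bar x,\bar z\}$ and $\bar x\notin T(x,k)$, $\bar z\notin T(z,k)$, each pair forces $\bar x\in T(z,k)$ or $\bar z\in T(x,k)$; a double count then shows $\sum_x|T(x,k)|\ge n(n-k-1)$, contradicting the assumed bound $|T(x,k)|\le k$. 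Your argument is shorter, needs no counting, and gives the exact value; the paper's argument has the mild advantage of not relying on the existence of the full ball as a concept (it would survive, e.g., in a variant where only proper balls are allowed), but for the stated proposition your route is cleaner. Restricting to even $n$ is entirely sufficient for the claim, so your handling of parity is fine.
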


\begin{proof}
  Consider the cycle $\CCC_n$ with $n=2k+2\ge 4$ and suppose that
  $\CCC_n$ admits an $\NCTMp$ $T$ of size at most $k$.  Each ball
  $B_{k}(x)$ contains all the vertices of $\CCC_n$ except the vertex
  $\overline{x}$ opposite to $x$ in $\CCC_n$.  Hence,
  $T(x,k)\subset \CCC_n\setminus \{ \overline{x}\}=B_{k}(x)$.  Since
  $|T(x,k)| \leq k$, $\overline{x} \in T(z,k)$ for at least $n-k-1$
  vertices $z\ne x$.
  Thus, $\sum_{x \in \CCC_n}|T(x,k)| \geq n(n-k-1)$. But, since $|T(x,k)|\le k$, this sum is at most $nk$.  Therefore,
  $nk\ge n(n-k-1)$, and thus,  $n\le 2k+1$, a contradiction.
\end{proof}

However, any cycle $\CCC=\CCC_n$ with $n \geq 6$ admits an NTCM of size $2<\VCD(\B(\CCC))=3$.
Suppose that $\CCC$ is oriented counterclockwise. Each ball $B_r(x)$ of $\CCC$ either coincides with
$\CCC$ or is an arc of $\CCC$. When $B_r(x)$ is an arc, we can speak about
the \emph{first} and \emph{last vertices of}  $B_r(x)$ in the counterclockwise order, and about
the \emph{first vertex outside} of $B_r(x)$ (this vertex is adjacent to the last vertex of $B_r(x)$).
%
The NCTM $T$ for $\B(\CCC)$ is as follows. If a ball $B_r(x)$ covers $\CCC$, then
$T(x,r)=\varnothing$. Otherwise, $T^+(x,r)$ is the first vertex of $B_r(x)$ and
$T^-(x,r)$ is the first vertex outside of $B_r(x)$.
%

\begin{proposition}\label{cycles-labeled}
  For a cycle $\CCC=\CCC_n$, $T$ is an NCTM for $\B(\CCC)$, and if
  $n \geq 6$, then~$\text{NCTD}(\B(\CCC))=2<\VCD(\B(\CCC))=3$.
\end{proposition}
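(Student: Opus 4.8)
The statement bundles three claims: (a) the map $T$ is an NCTM for $\B(\CCC_n)$ for every $n$; (b) $\NCTD(\B(\CCC_n))=2$ whenever $n\ge 6$; and (c) $\VCD(\B(\CCC_n))=3$ whenever $n\ge 6$. Since (a) already exhibits an NCTM of size $2$, claim (b) reduces to ruling out an NCTM of size $\le 1$, while (c) splits into an upper and a lower bound. The plan is to dispatch the three claims in this order, using only the definitions of the non-clashing condition and of $\VCD$ recalled above, and the elementary fact that in $\CCC_n$ every ball is either all of $\CCC_n$ or a proper arc of odd size $2r+1<n$.

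For (a), I would first note that $T$ is a legitimate teaching map of size at most $2$: when $B_r(x)$ is a proper arc, $T^+(x,r)$ is a vertex of $B_r(x)$ and $T^-(x,r)$ is not, so $T(x,r)\preceq B_r(x)$; when $B_r(x)=\CCC_n$, $T(x,r)=\varnothing\preceq\CCC_n$. For the non-clashing condition, split on whether one of the two balls is the whole cycle. If $B'=\CCC_n$, then $\underline{T}(B')=\varnothing$, the other ball $B$ is a proper arc with $\underline{T}(B)=\{a,b\}$, $a\in B$, $b\notin B$, and since $b\in\CCC_n$ the traces of $B$ and $\CCC_n$ on $\{a,b\}$ disagree at $b$. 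If both balls are proper arcs, with teaching sets $\{a,b\}$ and $\{a',b'\}$ ($a,a'$ the first vertices, $b,b'$ the first vertices outside), the key point is that a proper arc is determined by the ordered pair (first vertex, first vertex outside). A clash would force $a,a'\in B\cap B'$ and $b,b'\notin B\cup B'$; rotating coordinates so that $B=\{0,1,\dots,\ell\}$ (hence $b=\ell+1$), the conditions $a'\in B$, $b'\notin B$ yield $0\le a'\le\ell<b'$, so $B'=\{a',\dots,b'-1\}$ has no wrap-around, and then $a=0\in B'$ forces $a'=0$ while $b=\ell+1\notin B'$ forces $b'=\ell+1$; thus $B'=B$, a contradiction. (Equivalently: in a counterclockwise arc the only vertex whose successor leaves the arc is its last vertex.)

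For the lower bound in (b), I would use that every NCTM is injective: if $T(C)=T(C')$ for distinct $C,C'$, then $C$ and $C'$ agree on $\underline{T}(C)\cup\underline{T}(C')$, violating the non-clashing condition. Hence an NCTM of size $\le 1$ for $\B(\CCC_n)$ would force $|\B(\CCC_n)|\le\sum_{i=0}^{1}\binom{n}{i}2^i=2n+1$. But a direct count gives $|\B(\CCC_n)|=n\lfloor n/2\rfloor+1$ (one whole cycle, plus $n$ arcs for each admissible odd size), which exceeds $2n+1$ exactly when $\lfloor n/2\rfloor>2$, i.e. when $n\ge 6$. This rules out size $\le 1$, and combined with (a) gives $\NCTD(\B(\CCC_n))=2$. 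For (c), to see $\VCD\le 3$ take any four vertices $a,b,c,d$ occurring in this cyclic order: a ball is an arc or the whole cycle, and any arc containing both $a$ and $c$ must contain one of the two arcs between $a$ and $c$, hence contains $b$ or $d$, so the trace $\{a,c\}$ is never realizable and no $4$-set is shattered. For $\VCD\ge 3$, I would shatter the triple $\{v_0,v_2,v_4\}$ at cyclic positions $0,2,4$: the empty trace via $B_0$ of any vertex outside the triple, each singleton via $B_0$ of that vertex, the pairs $\{v_0,v_2\}$ and $\{v_2,v_4\}$ via $B_1(v_1)$ and $B_1(v_3)$, the full triple via $\CCC_n$, and the pair $\{v_0,v_4\}$ via the ball of radius $\lceil(n-4)/2\rceil$ centred at the mid-vertex of the $v_4$-to-$v_0$ arc that avoids $v_2$, which one checks is a proper arc containing $v_0$ and $v_4$ but not $v_2$.

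The main obstacle is the two-proper-arcs case of (a): the reasoning is completely elementary but must be made airtight against cyclic wrap-around, which is why I would fix coordinates and reduce to the no-wrap-around configuration rather than argue purely with cyclic distances. The only other spot requiring mild care is the realization of the pair $\{v_0,v_4\}$ in (c), where the centre and radius must be chosen as a (parity-dependent) function of $n$ so that the resulting ball is a genuine proper arc missing $v_2$; once those two points are pinned down, the rest of the verifications are immediate.
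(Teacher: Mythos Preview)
Your proof is correct. The route differs from the paper's mainly in part (a). The paper argues by a geometric case split on the relative position of the two proper arcs (disjoint; one contained in the other; overlapping on one arc; overlapping on two arcs and covering $\CCC$), exhibiting in each case a vertex of $\underline{T}(B)\cup\underline{T}(B')$ on which the two balls disagree. You instead assume a clash, observe that it forces $a,a'\in B\cap B'$ and $b,b'\notin B\cup B'$, rotate so that $B=\{0,\dots,\ell\}$, and read off $a'=0$, $b'=\ell+1$, hence $B'=B$; this avoids the case distinction entirely and is a little cleaner, at the cost of the ``no wrap-around'' check you rightly flag. For (b) you give essentially the paper's counting/injectivity argument. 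For (c) the paper does not spell out a proof within this proposition (the upper bound $\VCD\le 3$ is obtained elsewhere via the $K_4$-minor-free bound of~\cite{BouTh,ChEsVa}, and the lower bound is left implicit); your self-contained shattering argument for $\{v_0,v_2,v_4\}$ and the ``an arc through $a$ and $c$ must contain $b$ or $d$'' observation are the natural direct proofs.
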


\begin{proof} First, we prove that $T$ is an NCTM for $\B(\CCC)$. Let $B=B_r(x)$ and $B'=B_{r'}(y)$ be two different balls of $\CCC$.
The non-clashing condition is immediate if one of the balls coincides with $\CCC$. Therefore, suppose that both balls
are arcs. Let $T(x,r)=\{ u,w\}$, where $u$ is the first vertex of $B_r(x)$ and $w$ is the first vertex outside of $B_r(x)$.
Analogously, $T(y,r')=\{ u',w'\}$, where $u'$ is the first vertex of $B_{r'}(y)$ and $w'$ is the first vertex outside of $B_{r'}(y)$.
Also, let $v$ and $v'$ be the last vertices of $B_r(x)$ and $B_{r'}(y)$, respectively. Since $B_r(x)$ and $B_{r'}(y)$ are arcs of $\CCC$,
they are either (1) disjoint,  or (2) one is a proper subset of another, or (3) they overlap on a proper arc, or (4) they overlap on two arcs and together
cover $\CCC$. If $B_r(x)$ and $B_{r'}(y)$ are disjoint, then $u\in T^+(x,r)\setminus B_{r'}(y)$.
If $B_{r'}(y)\subseteq B_r(x)$, then $B_r(x)$ and $B_{r'}(y)$ differ with respect to at least at one of the vertices $u$ or $v$.
If $u\ne u'$, then $u\in T^+(x,r)\setminus B_{r'}(y)$.
If $u=u'$ and $v'\ne v$, then $w'\in B_r(x)$, while $w'\in T^-(y,r')$.
Now, suppose that $B_r(x)$ and $B_{r'}(y)$ overlap on a proper arc of each of $B_r(x)$ and $B_{r'}(y)$.
With respect to the counterclockwise order, this can be either the arc between $u'$ and $v$ or the arc between $u$ and $v'$, say the first (the other case is symmetric).
In this case, $u\in T^+(x,r)\setminus B_{r'}(y)$.
Finally, suppose that $B_r(x)$ and $B_{r'}(y)$ overlap on two arcs and cover $\CCC$.
These two arcs are defined by $u$ and $v'$ and by $u'$ and $v$ (in the counterclockwise order).
Then, $w\in B_{r'}(y)$, while $w\in T^-(x,r)$.
Thus, in all cases, $T$ satisfies the non-clashing condition.
Hence, $T$ is an NCTM for $\B(\CCC)$ of size 2.

To prove the lower bound, let $k = \lfloor n/2 \rfloor$ (\textit{i.e.},
$n \in \{2k,2k+1\}$) and assume that $\CCC$ admits an NCTM $T$ of
size $1$. There are $kn+1$ distinct balls in
$\CCC$ ($kn$ proper balls and $B_k(x)=\CCC$). Since at most
one ball can have an empty teaching set, there are $kn$ balls that
have teaching sets of size $1$. As there are $2n$ possible
teaching sets of size 1 (each vertex has sign $\pm 1$), by the pigeonhole principle, if $kn>2n$, then there exist
two different balls $B,B'$ with $T^+(B)=T^+(B')$ and $T^-(B)=T^-(B')$. But then $T$ does not satisfy the non-clashing condition for $B$ and $B'$, contrary to the assumption that
$T$ is an NCTM. Thus, for any $n \geq 6$, there is no NCTM of size $1$ for $\B(\CCC)$.
\end{proof}

\subsection{Trees of cycles (cacti)}
A \emph{tree of cycles} (or \emph{cactus}) is a graph $\KKK$ in which each
2-connected component is a cycle or an edge.
For a vertex $v$ of $\KKK$ that is not a cut vertex, let $C(v)$ be the
unique cycle containing $v$. If $v$ is a cut vertex, then
$C(v)=\{ v\}$.
For any vertices $u,v$ of $\KKK$, let $C(u,v)$ be the union of all cycles
and/or edges on the unique path of $B(\KKK)$ between $C(u)$ and $C(v)$. Note
that $C(u,v)$ is a path of cycles.
A set $S\subseteq V(G)$ is \emph{gated} if, for any $u\in V(G)$, there exists $u'\in S$ (the gate of $u$, with $u'=u$ if $u\in S$) such that $u'\in I(u,v)$ for any $v\in S$.
Given a triplet $x,u,v$ of vertices of  $G$, a vertex $y$ is an \emph{apex of $x$ with respect to $u$ and $v$} if $y\in I(x,u)\cap I(x,v)$ and
$I(x,y)$ is maximal with respect to the inclusion.  
One can easily show that for any triplet $x,u,v$ of a tree of cycles $\KKK$, there exists a unique apex of $x$ with respect to $u$ and $v$ 
and that any cycle  and any  path of cycles  of $\KKK$ are gated.
Let $\KKK$ be a tree of cycles and $\B$ its set of balls.
For $B\in \B$, let $[B]=\{ B_r(x): B_r(x)=B\}$.  We call a ball $B_r(x)$
in $[B]$ \emph{minimal} if it has a minimal radius among all balls in $[B]$.

\begin{lemma}\label{min-ball} If $\{ u,v\}$ is a diametral pair of $B_r(x)$, $x'$ is the apex of $x$ with respect to $u,v$ and $r'=r-d(x,x')$, then $B_{r'}(x')=B_r(x)$. In particular, if $B_r(x)$ is a minimal ball, then $x\in C(u,v)$.
\end{lemma}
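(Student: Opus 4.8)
The plan is to show first that $B_{r'}(x')=B_r(x)$, and then deduce the ``in particular'' clause.  Since $x'\in I(x,u)\cap I(x,v)$, every point of $I(x,u)$ lying between $x$ and $x'$ is at distance $d(x,x')$ or less from $x'$; more importantly, by the triangle inequality $d(x',w)\le d(x,w)+d(x,x')$ for all $w$, which already gives $B_r(x)\subseteq B_{r+d(x,x')}(x')$ — but that is the wrong direction, so the work is in the reverse inclusion.  First I would prove $B_{r'}(x')\subseteq B_r(x)$, i.e.\ that for every $w$ with $d(x',w)\le r-d(x,x')$ we have $d(x,w)\le r$.  The key geometric fact is that $x'$ is a gate of $x$ into some gated set containing $\{u,v\}$ and hence, for such $w$, $x'$ lies on a shortest $(x,w)$-path, giving $d(x,w)=d(x,x')+d(x',w)\le d(x,x')+r'=r$.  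To make this precise I would use that the path of cycles $C(u,v)$ is gated (stated just before the lemma) and argue that, since $u,v\in B_r(x)$ form a diametral pair, the relevant ``far'' vertices of $B_r(x)$ are all ``seen through'' $x'$; the apex $x'$ is exactly the gate of $x$ in the gated set $C(u,v)$ (or in a suitable gated superset), so $x'\in I(x,w)$ for every $w$ whose distance from $x$ is close to $r$.

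For the reverse inclusion $B_r(x)\subseteq B_{r'}(x')$, take $w\in B_r(x)$ and suppose $d(x',w)>r'=r-d(x,x')$, so $d(x,x')+d(x',w)>r\ge d(x,w)$, meaning $x'\notin I(x,w)$.  I would then derive a contradiction with the maximality of $I(x,y)$ in the definition of the apex: walking from $x$ along a shortest path towards $w$, the walk first proceeds along $I(x,x')$ — this needs the claim that $x'\in I(x,w)$ whenever $d(x,w)$ is large, which is again the gatedness statement — and then one shows $I(x,x')$ could be enlarged towards both $u$ and $v$ through $w$, contradicting maximality, unless $d(x,w)\le r$ forces $d(x',w)\le r'$.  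Alternatively, and perhaps more cleanly, once $B_{r'}(x')\subseteq B_r(x)$ is established together with $u,v\in B_{r'}(x')$ (which holds because $d(x',u)=d(x,u)-d(x,x')\le r-d(x,x')=r'$, using $x'\in I(x,u)$, and symmetrically for $v$), one uses that $\{u,v\}$ is a \emph{diametral} pair of $B_r(x)$: any $w\in B_r(x)$ satisfies $d(x',w)\le\max\{d(x',u),d(x',w)\}$-type inequalities, and more carefully, $B_r(x)$ is contained in the ball around the midpoint region of $u,v$ of radius $\lceil d(u,v)/2\rceil$-ish, hence in $B_{r'}(x')$ since $x'$ is ``between'' $u$ and $v$ in the path-of-cycles metric.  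This second route reduces everything to inclusions among balls centered on the gated path $C(u,v)$, where distances behave almost tree-like.

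Finally, for the ``in particular'' statement: if $B_r(x)$ is a minimal ball in its class $[B]$, then since $B_{r'}(x')=B_r(x)$ with $r'=r-d(x,x')\le r$, minimality of $r$ forces $d(x,x')=0$, i.e.\ $x=x'$.  But $x'$ is the apex of $x$ with respect to $u,v$, so $x'\in I(x,u)\cap I(x,v)$; with $x'=x$ this is automatic, so instead I would argue the other way: the apex $x'$ always satisfies $x'\in C(u,v)$ because $x'\in I(x,u)\cap I(x,v)$ and $C(u,v)$, being the gated path of cycles joining $C(u)$ and $C(v)$, contains $I(u,v)$ and the gate of $x$; since $x'$ is obtained by pushing $x$ as far as possible into the intersection of the two intervals, $x'$ lands on $C(u,v)$ (indeed $x'$ is the gate of $x$ in $C(u,v)$).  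Hence $x=x'\in C(u,v)$, as claimed.

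The main obstacle I anticipate is the reverse inclusion $B_r(x)\subseteq B_{r'}(x')$: one must genuinely use that $\{u,v\}$ is a diametral pair (not just any pair) of $B_r(x)$, otherwise the statement is false, and turning that diametrality into the distance bound $d(x',w)\le r'$ for all $w\in B_r(x)$ requires carefully exploiting gatedness of $C(u,v)$ together with the near-tree structure of a cactus (cycles contribute the only deviations from tree metrics, and one must check the argument still goes through when $u$, $v$, $w$, or $x'$ lie on a common cycle).

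\hfill\qed
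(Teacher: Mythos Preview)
Your proposal differs from the paper's route and leaves the essential step unproved. The paper obtains the lemma as a one-line corollary of Lemma~11 of \cite{ChChMc} (restated in the appendix as Lemma~\ref{lem-xinC}): if $\{u,v\}$ is a diametral pair of $X^+$ for a realizable sample $X$ and $B_r(x)$ realizes $X$, then $B_{r'}(x')$ also realizes $X$, and $C(u,v)$ contains such a center. Applying this with $X$ the full sign vector of $B_r(x)$ (support $V(\KKK)$, $X^+=B_r(x)$, $X^-=V(\KKK)\setminus B_r(x)$) gives both inclusions at once, hence $B_{r'}(x')=B_r(x)$; all the cactus geometry is packaged inside the cited result.

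In your direct attempt, first a minor point: the inclusion $B_{r'}(x')\subseteq B_r(x)$ needs no gatedness, only the triangle inequality $d(x,w)\le d(x,x')+d(x',w)\le d(x,x')+r'=r$; your stronger claim that $x'$ lies on a shortest $(x,w)$-path for every such $w$ is neither needed nor generally true. The substantive gap is the reverse inclusion $B_r(x)\subseteq B_{r'}(x')$: one must show $d(x',w)\le r-d(x,x')$ for \emph{every} $w\in B_r(x)$, and here the triangle inequality points the wrong way. This is exactly where diametrality of $\{u,v\}$ together with the cactus structure must do real work, and neither of your two sketched routes is carried out---the apex-maximality contradiction is not developed, and the midpoint-ball heuristic is left imprecise (the displayed bound $d(x',w)\le\max\{d(x',u),d(x',w)\}$ is a tautology). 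You correctly flag this inclusion as the main obstacle but do not resolve it. A smaller inaccuracy: the apex is not in general the gate of $x$ in $C(u,v)$ (on a hexagon with a pendant $x$, suitable $u,v$ give apex $=u$ while the gate is the attachment point), though it does lie in $C(u,v)$, which is the fact you actually need for the ``in particular'' clause and which again is supplied by the cited lemma.
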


\paragraph{The NCTM.}
We define a map $T$ for $\B$ as follows.  Let $B \in \B$ and $u,v$ be a diametral pair of
$B$. We set $T^+(B):=\{ u,v\}$ ($T^+(B):=B$ if $|B|=1$). To define $T^-(B)$, let
$B_r(x)$ be a minimal ball from $[B]$.  By Lemma~\ref{min-ball}, $x$
belongs to $C(u,v)$.  If $x$ is a cut vertex of $C(u,v)$, then we set
$T^-(B):=\varnothing$. Otherwise, let $C$ be the unique cycle of
$C(u,v)$ containing $x$, and let $u'$ and $v'$ be the respective gates
of $u$ and $v$ in $C$.  For any vertex $z$ of $\KKK$, we denote by
$z'$ its gate in $C$.
Consider the set $Z(x,u,v)=\{ z\in V(\KKK): z' \notin I(x,u')$
$\cup I(x,v') \mbox{ and } d(x,z)=r+1\}.$ If $Z(x,u,v)=\varnothing$, then we set $T^-(B)=\varnothing$.  If $Z(x,u,v)\ne \varnothing$,
let $Z^u(x,u,v)= \{s \in Z(x,u,v):u' \in I(x,s)\}$ and $Z^v(x,u,v) = \{t \in Z(x,u,v):v' \in I(x,t)\}$. If $Z^u(x,u,v)$
($Z^v(x,u,v)$, resp.) is not empty, pick $s \in Z^u(x,u,v)$ ($t \in Z^v(x,u,v)$, resp.) such that the distance $d(u',s')$ ($d(v',t')$, resp.) is maximized.
We set $T^-(B):=\{s,t\}$ if $s$ and $t$ exist, and $T^-(B):=\{s\}$ or $T^-(B):=\{ t\}$ if only one of $s$ and $t$ exists (see Fig.~\ref{fig:rep-map_cactii}).
Observe that if the cycle $C$ is not completely included in
$B_{r}(x)$, then $s$ and $t$ exist, they belong to
$C$ (possibly $s = t$), $s=s'$, $t=t'$, and
$d(x,s) = d(x,t) = r+1$.
\begin{figure}
    \centering
    \includegraphics[scale=0.46]{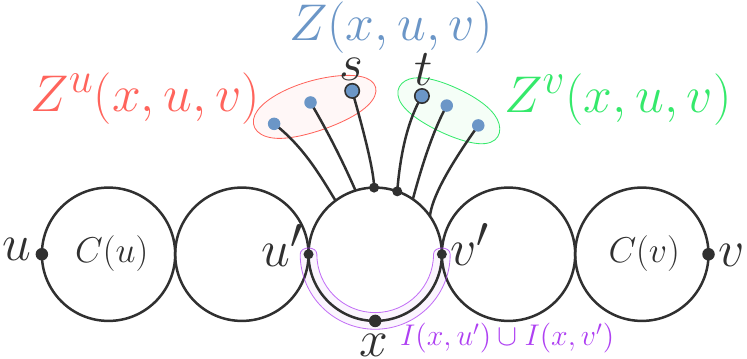}
    \caption{$Z(x,u,v)$, $Z^u(x,u,v)$, $Z^v(x,u,v)$, $s$, and $t$.
    }
    \label{fig:rep-map_cactii}
\end{figure}

\begin{restatable}{theorem}{treesofcycles} \label{treesofcycles} Let $B_1,B_2 \in \B$ be two balls of the same diameter in a tree of cycles $\KKK$ such that, for all $q \in \underline{T}(B_1)\cup \underline{T}(B_2)$, $q \in B_1$ if and only if $q \in B_2$. Then, $B_2 = B_1$.
Consequently, $T$ is an $\NCTM$
of size 4 for $\B(\KKK)$, and thus, $\text{NCTD}(\B(\KKK))\le 4$, while $\VCD(\B(\KKK)) \leq 3$.
\end{restatable}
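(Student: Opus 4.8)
The plan is to prove the main claim: two balls $B_1,B_2$ of equal diameter in a tree of cycles $\KKK$ that agree on $\underline{T}(B_1)\cup\underline{T}(B_2)$ must be equal; and then argue (essentially for free) that pairs of balls of different diameters are separated by the $T^+$-part, so that $T$ is non-clashing overall, of size $4$ (two for $T^+$, two for $T^-$). The size bound and the $\VCD\le 3$ claim I will quote, the latter from the literature (planarity of cacti and the planar VC bound, or a direct argument); the former is immediate since $|T^+(B)|\le 2$ and $|T^-(B)|\le 2$ by construction. So the entire difficulty is concentrated in the equal-diameter case.

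First I would set up notation: let $u,v$ be the diametral pair used for $B_1$ and $u_2,v_2$ that for $B_2$, with $d(u,v)=d(u_2,v_2)$ by hypothesis. Since all of $u,v,u_2,v_2\in B_1\cap B_2$ (the diametral pairs lie in their own balls, and the agreement hypothesis forces them into the other ball), both balls contain all four points, and $B_1,B_2$ are balls of a common radius-type object. I would pick minimal balls $B_{r_1}(x_1)=B_1$ and $B_{r_2}(x_2)=B_2$, so by Lemma~\ref{min-ball} $x_1\in C(u,v)$ and $x_2\in C(u_2,v_2)$. The first key step is to show that the two balls are ``centered'' along the same path of cycles: the path $C(u,v)$ and the gatedness of paths of cycles force $x_1$ (and likewise $x_2$) to lie on a shortest $u$–$v$ path, and since all four diametral endpoints lie in both balls, one shows $C(u_2,v_2)$ and $C(u,v)$ share the relevant segment; here I would use that in a tree of cycles shortest paths between a fixed pair branch only inside cycles, and the apex/gate machinery to pin down that $x_1$ and $x_2$ are ``close'' — ideally in the same cycle $C$ of $C(u,v)$, or one a cut vertex of it.

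The second and main step is the separation argument, which is a case analysis mirroring the cycle proof (Proposition~\ref{cycles-labeled}) but lifted to the cactus. Suppose $B_1\neq B_2$; I want a vertex in $\underline{T}(B_1)\cup\underline{T}(B_2)$ witnessing the difference. If $B_2\not\subseteq B_1$, pick $z\in B_2\setminus B_1$. I would run the contrapositive: the construction of $T^-(B_1)=\{s,t\}$ selects, in the cycle $C$ containing $x_1$, the vertices at distance exactly $r_1+1$ from $x_1$ whose gates leave $I(x_1,u')$ and $I(x_1,v')$, pushed as far as possible around $C$ from $u'$ and $v'$ respectively. The ``farthest'' choice is exactly what guarantees that if any vertex $z$ outside $B_1$ is reached through $C$ on the $u$-side (resp. $v$-side), then $s$ (resp. $t$) lies on the $x_1$–$z$ path or at least separates appropriately, so $s\in B_2$ (because $d(x_1,s)=r_1+1\le$ the relevant distance in $B_2$ forced by $z\in B_2$ and the cactus metric), giving a vertex of $T^-(B_1)$ inside $B_2$ but (by $d(x_1,s)=r_1+1$) outside $B_1$ — a clash witness. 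The symmetric subcase handles $z$ reached via cut vertices of $C(u,v)$ off the cycle $C$: then $x_1$ being a non-cut-vertex of $C(u,v)$ (the case where $T^-(B_1)\neq\varnothing$) means such $z$ must route through $u'$ or $v'$, and the equal-diameter hypothesis together with $u,v\in B_2$ forces $z\in B_2$ to be impossible, or again caught by $s$ or $t$. The case $x_1$ a cut vertex of $C(u,v)$ ($T^-(B_1)=\varnothing$) needs the symmetric role of $B_2$: then either $x_2$ is also a cut vertex, and $B_1=B_r(x_1)$, $B_2=B_r(x_2)$ with $x_1,x_2$ both on the gated path $C(u,v)=C(u_2,v_2)$ and equal radius and equal diametral endpoints forces $x_1=x_2$; or $x_2$ is a non-cut vertex, and we apply the previous paragraph with the roles of $B_1,B_2$ swapped.

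The hard part will be the bookkeeping in that case analysis — precisely controlling, in a path of cycles, which gate ($u'$ or $v'$, and the gates of $z$) a vertex outside $B_1$ is accessed through, and verifying that the ``maximize $d(u',s')$'' choice in the definition of $Z^u$ is strong enough that $s$ always intercepts any escaping vertex on its side. I expect this to require a lemma of the form: if $z\notin B_{r_1}(x_1)$ with $d(x_1,z)=r_1+1$ (a ``boundary'' vertex, which one may reduce to) and $z$'s gate in $C$ lies outside $I(x_1,u')\cup I(x_1,v')$ on the $u'$-side, then $d(u',s')\ge d(u',z')$, hence the shortest $x_1$–$z$ path through $u'$ and around $C$ is ``no shorter'' than the one to $s$, so $z\in B_2 \Rightarrow s\in B_2$. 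Proving this cleanly uses the gatedness of $C$ and $C(u,v)$ and the uniqueness of apices (both quoted above), plus the standard fact that in a cactus, distances decompose additively along the tree of 2-connected components. Once that interception lemma is in hand, the equal-diameter theorem follows, and the final sentence (non-clashing, size $4$, $\VCD\le 3$) is immediate: different-diameter pairs are separated by $T^+$ since a ball strictly contained in another has strictly smaller diameter, so one of the diametral endpoints of the larger lies outside the smaller. \hfill\qed
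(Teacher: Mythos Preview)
Your interception idea---show that $s$ (or $t$) from $T^-(B_1)$ must land in $B_2$, contradicting the agreement hypothesis---is the right core of the argument and matches the paper. But you are missing the structural observations that make the distance computation go through, and your cut-vertex case is not correct.

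The key simplification you miss: since $u,v\in T^+(B_1)\subseteq B_2$ and $\diam(B_1)=\diam(B_2)$, the pair $u,v$ is \emph{also} a diametral pair of $B_2$. Applying Lemma~\ref{min-ball} to $B_2$ with this same pair gives $x_2\in C(u,v)$ directly. There is no need to introduce a second pair $u_2,v_2$ or to argue that $C(u,v)$ and $C(u_2,v_2)$ ``share the relevant segment''; the second pair never enters. This placement of $x_2$ (either in the cycle $C$ or with gate $u'$ or $v'$) is exactly what lets the final distance inequality $d(x_2,s)\le d(x_2,z)\le r_2$ be computed additively through $C$.

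Two further gaps. First, your cut-vertex case (``equal radius and equal diametral endpoints forces $x_1=x_2$'') is unjustified---minimal balls need not have equal radii, and even on a path of cycles a diametral pair does not determine the center. The paper instead shows this case cannot occur: if $x_1$ were a cut vertex of $C(u,v)$ it would separate $z$ from one of $u,v$, yielding $d(u,z)>d(u,v)$ and hence $\diam(B_2)>\diam(B_1)$. Second, you do not rule out that the gate $z'$ of $z$ in $C$ lies on $I(x_1,u')\cup I(x_1,v')$; the paper disposes of this via another short diameter contradiction, after which $z'$ is forced onto the complementary arc $P$ and your interception lemma (which is essentially the paper's claim $z'\in I(u',s')\cup I(v',t')$) applies. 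Without these two diameter arguments your case analysis does not close.
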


\begin{proofsketch} First, analogous to the case of trees, $T$ is non-clashing for any pair of balls including a ball of radius $0$. Let $B_{r_1}(x)\in [B_1]$ be a minimal ball for  $B_1$ and $u,v$ be the diametral pair of $B_1$ defining $T^+(B_1)$.
  Analogously, let $B_{r_2}(y)\in [B_2]$ be a minimal ball for $B_2$. By contradiction, assume that $B_1\neq B_2$ and that
  $T$ does not satisfy the non-clashing condition for $B_1$ and $B_2$. Without loss of generality,
suppose that there exists a vertex $z \in B_2\setminus B_1$. One can show that $x$ cannot disconnect $z$ and $u$ (or $v$).
So, $x$ is not a cut vertex of $C(u,v)$. Hence, $x$ belongs to a
unique (gated) cycle $C$ of $C(u,v)$.
Since $\diam(B_1)=\diam(B_2)$ and $T^+(B_1)\subset B_2$, $u,v$ is also a diametral pair of $B_2$.

If $z' \in I(x,u')$, one can show that
$\diam(B_2) \geq d(v,z) > d(v,u) = \diam(B_1)$, a
contradiction. Thus, $z' \notin I(x,u')$ and, similarly,
$z' \notin I(x,v')$.  Since $x$ belongs to $C$, $I(x,u')$ and $I(x,v')$ intersect only in $x$, and their
union is the $(u',v')$-path passing via $x$.  By the previous
assertion, $z'$ belongs to the complementary $(u',v')$-path $P$ of $C$ and
$I(x,z')\cap \{ u',v'\}\ne \varnothing$. Hence, $Z(x,u,v)$ is
non-empty. Indeed, let $w$ be a vertex of $I(x,z)$ at distance $r_1+1$
from $x$. Then, either $z'$ is the gate of $w$ in $C$ or $w$ is a
vertex of the path $P$, and so, $w\in Z(x,u,v)$. Hence,
$Z(x,u,v)\ne\varnothing$, and thus, one of the vertices $s,t$
exists. If $s$ ($t$, resp.) exists, then its gate $s'$ ($t'$, resp.)
in $C$ belongs to $P$.

If $C \nsubseteq B_1$, then $s=s'$ and $t=t'$ disconnect $x$ and any
vertex of $P$. Since $x,z' \in B_2$ and $s,t \notin B_2$, necessarily
$z' \in I(u',s') \cup I(v',t')$. If $C\subseteq B_1$, then by the
definition of $s$ and $t$, we also have
$z' \in I(u',s') \cup I(v',t')$.
Assume, w.l.o.g., that $z' \in I(u',s')$ and recall that $z' \neq u'$.
Since $u'\in I(z',x)\subseteq I(s',x)$, $z'\in I(x,z)$, and $s'\in I(x,s)$,
we get $d(x,z)=d(x,u')+d(u',z')+d(z',z)$ and
$d(x,s)=d(x,u')+d(u',z')+d(z',s')+d(s',s)$. Since
$d(x,z)\geq r_1+1=d(x,s)$, we conclude that
$d(z',z)\geq d(z',s')+d(s',s)$.

By Lemma~\ref{min-ball} applied to $B_2=B_{r_2}(y)$, we have
$y\in C(u,v)$. Thus, $y=y'\in C$ or $y' \in
\{u',v'\}$. So, $d(y,z)=d(y,y')+ d(y',z')+d(z',z)$. As
$s\notin B_2$,
$r_2<d(y,s)=d(y,y')+d(y',s')+d(s',s) \leq
d(y,y')+d(y',z')+d(z',s')+d(s',s) \leq d(y,y')+d(y',z')+d(z',z)=d(y,z)
\leq r_2$, a contradiction. This gives $B_1=B_2$. The second assertion 
follows from the first one since two balls with
distinct diameters are distinguished by  diametral
pairs. Finally, $\VCD(\B(\KKK)) \leq 3$ since
trees of cycles $\KKK$ cannot be contracted to $K_4$, and if a graph $G$ does not contain $K_{d+1}$ as
a minor, then $\VCD(\B(G))\le d$~\cite{BouTh,ChEsVa}.
\end{proofsketch}

\subsection{Hyperbolic graphs}
Gromov's $\delta$-hyperbolicity is important
in metric geometry and geometric group theory, with
applications in analyzing real networks. It stipulates how
close, locally, the graph (or metric space) is to a tree metric.
For $\delta\ge 0$, a metric space $(X,d)$ is
\emph{$\delta$-hyperbolic}~\cite{Gr} if, for any four points $u,v,x,y$
of $X$,
$d(u,v)+d(x,y) \leq \max\{d(u,x)+d(v,y),d(u,y)+d(v,x)\}+2\delta$.
Two sets $A$ and $B$ of a metric space $(X,d)$ are
$\rho$-\textit{identical} if the \emph{Hausdorff distance} $d_H(A,B)$ between $A$ and $B$ is at most $\rho$. Otherwise,
they are \emph{$\rho$-distinct}.
For balls, $B_{r_1}(x)$ and $B_{r_2}(y)$ are
$\rho$-\textit{identical} if and only if $B_{r_1}(x)\subseteq B_{r_2+\rho}(y)$ and $B_{r_2}(y)\subseteq B_{r_1+\rho}(x)$.
A \emph{$\rho$-approximate $\NCTMp$} ($\text{NCTM}^+_\rho$) $T$
associates, to each ball $B_r(x)\in \B(G)$, a set
$T(x,r)\subseteq B_r(x)$ such that the non-clashing condition holds for
each pair of $\rho$-distinct balls.
For a $\delta$-hyperbolic graph $\HHH$ and any ball $B_r(x)$ of
$\HHH$, let $T(x,r)$ be any diametral pair of
$B_r(x)$ if $r\geq 1$, and set $T(x,0):=\{x\}$. Akin to our method in trees, we get that:

\begin{restatable}{theorem}{hyperbolic}\label{prop:LSCS_delta-hyperbolic}
  For a $\delta$-hyperbolic graph $\HHH$, $T$ is an
   $\text{NCTM}^+_{2\delta}$ of size $2$ for
  $\B(\HHH)$.
\end{restatable}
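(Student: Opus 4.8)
The plan is to mimic the argument for trees (Proposition~\ref{rep-map-trees}) but to absorb the $2\delta$ slack of hyperbolicity into the approximation parameter $\rho = 2\delta$. First I would dispose of the easy cases exactly as before: if one of the two balls has radius $0$, then its teaching set has size $1$ while the other teaching set (if the other ball has positive radius) has size $2$, so the supports differ and the non-clashing condition holds; and if both balls coincide there is nothing to prove. So assume $B_1 = B_{r_1}(x)$ and $B_2 = B_{r_2}(y)$ are $2\delta$-distinct balls of positive radius, and suppose for contradiction that $T$ fails the non-clashing condition for this pair, i.e.\ $T(x,r_1) \subseteq B_{r_2}(y)$ and $T(y,r_2) \subseteq B_{r_1}(x)$. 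By $2\delta$-distinctness, at least one containment among $B_{r_1}(x)\subseteq B_{r_2+2\delta}(y)$ and $B_{r_2}(y)\subseteq B_{r_1+2\delta}(x)$ fails; without loss of generality there is a vertex $z\in B_{r_2}(y)$ with $d(x,z) > r_1 + 2\delta$.

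Next I would run the four-point inequality on the quadruple $\{x,z,u,v\}$, where $T(y,r_2)=\{u,v\}$ is a diametral pair of $B_2 = B_{r_2}(y)$. Gromov's condition gives
\[
d(x,z) + d(u,v) \le \max\{\, d(x,u)+d(v,z),\ d(x,v)+d(u,z)\,\} + 2\delta,
\]
so, say, $d(x,z) + d(u,v) \le d(x,v) + d(u,z) + 2\delta$. Since $u,z\in B_{r_2}(y)$ and $\{u,v\}$ is a diametral pair of $B_{r_2}(y)$, we have $d(u,z)\le d(u,v)$. Substituting and cancelling $d(u,v)$ yields $d(x,z) \le d(x,v) + 2\delta$, hence $d(x,v) \ge d(x,z) - 2\delta > r_1$. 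Therefore $v\notin B_{r_1}(x)$. But $v\in T(y,r_2)$, and we assumed $T(y,r_2)\subseteq B_{r_1}(x)$ — contradiction. (In the symmetric branch of the $\max$, the vertex $u$ plays the role of $v$.) This establishes that $T$ satisfies the non-clashing condition for every pair of $2\delta$-distinct balls; together with $T(x,r)\subseteq B_r(x)$ and $|T(x,r)|\le 2$ for all $x,r$, and the fact that $T^+(x,r)=T(x,r)$ consists only of vertices of $B_r(x)$ (so the inclusion condition holds), $T$ is an $\text{NCTM}^+_{2\delta}$ of size $2$ for $\B(\HHH)$.

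The only subtlety — and the step I expect to need the most care — is making sure the direction of the failed Hausdorff containment is matched to the correct instance of the non-clashing condition: we must extract the witness $z$ from the ball $B_2 = B_{r_2}(y)$ whose teaching set $T(y,r_2)$ we then show escapes $B_{r_1}(x)$, so that the contradiction is with $T(y,r_2)\subseteq B_{r_1}(x)$ rather than the other containment. Once the roles of $(x,r_1)$ and $(y,r_2)$ are fixed consistently by this choice, everything else is a one-line application of the four-point condition exactly as in the tree case, with $0$ replaced by $2\delta$ throughout. There is no need to invoke minimal balls or gatedness here, since diametral pairs of balls are used directly and hyperbolicity is the only structural input.
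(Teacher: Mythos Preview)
Your proposal is correct and follows essentially the same argument as the paper: apply the four-point condition to $x,z,u,v$ where $\{u,v\}=T(y,r_2)$ is a diametral pair of $B_{r_2}(y)$ and $z\in B_{r_2}(y)$ with $d(x,z)>r_1+2\delta$, use $d(u,z)\le d(u,v)$ to cancel, and conclude $v\notin B_{r_1}(x)$. The only cosmetic difference is that you frame it as a proof by contradiction while the paper exhibits the distinguishing vertex directly; your extra remark about matching the failed Hausdorff inclusion to the correct teaching set is a helpful clarification of the ``without loss of generality'' step.
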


\section{Further work}
As is the case for general set-families, it would be interesting to know whether \srmfull\ and \rmfull\ are also para-\NP-hard parameterized by $k$. Further, it would be intriguing to know for which (other) structural parameterizations (\textit{e.g.}, treewidth, feedback vertex number, feedback edge number, and treedepth) they are tractable, and whether $\NCTD(\B(G))>\VCD(\B(G))$ for planar graphs. Lastly, as our $\NCTM$s are simpler than the SCSs in~\cite{ChChMc}, it seems reasonable to study them for notable and rich metric graph classes like median and Helly graphs.

\section*{Acknowledgments}
We thank the anonymous reviewers for their valuable feedback. This work has been supported by the Austrian Science Fund (FWF, project Y1329) and the ANR projects DISTANCIA (ANR-17-CE40-0015) and DUCAT (ANR-20-CE48-0006).

\newpage
\bibliography{rep-maps-bib}
\bibliographystyle{plainurl}

\appendix

\section{Full proofs of omitted or sketched proofs}

\thmsplit*

\begin{proof}
The problem is in \NP\ since any $\NCTMp$ $T$ for $\B(G)$ has a set of at most $n^2$ distinct balls as a domain,
and thus, it can be verified in polynomial time whether $T$ satisfies both the non-clashing condition for all pairs of balls in $\B(G)$, and the inclusion condition for each ball in $\B(G)$.
In all three cases, to prove that it is \NP-hard, we give a reduction from \setcover.

We begin with the case of split graphs. Let $\phi$ be an instance of \setcover\ with $X=\{ 1,\ldots,n\}$ and $\mathcal{S}=\{ S_1,\ldots,S_m\}$. We may also assume that $\phi$ is an instance in which each element of $X$ is contained in at most $m-2$ sets of $\mathcal{S}$. Indeed, any element contained in all of the sets of $\mathcal{S}$ will be covered by any choice of the sets of $\mathcal{S}$, and so, could simply be removed from the instance. In the resulting instance, for each of the elements contained in exactly $m-1$ sets of $\mathcal{S}$, it suffices to duplicate the set not containing that element to obtain the property that any element is contained in at most $m-2$ sets of $\mathcal{S}$. From this instance $\phi$, we construct the graph $G$ as follows. For all $i\in [n]$ and $j\in [m]$, add a vertex $v_i$ and a vertex $s_j$, and if $i\notin S_j$ in $\phi$, then add the edge $v_is_j$. Add the sets of vertices $U=\{u_1,\ldots,u_{m+1}\}$ and $W=\{w_1,\ldots,w_m\}$, and, for all $j,\ell \in [m]$ such that $j\neq \ell$, add the edge $u_jw_{\ell}$. For all $j\in [m]$, add the edge $u_{m+1}w_j$. Add edges so that every vertex in $U$ is adjacent to every vertex in $S=\{s_1,\ldots,s_m\}$. Add edges so that every vertex in $V=\{v_1,\ldots,v_n\}$ is adjacent to every vertex in $W$. Lastly, add edges so that the vertices in $U \cup V$ form a clique. This completes the construction of $G$, which is clearly achieved in polynomial time. See Figure~\ref{fig:split} for an illustration of $G$. Note that $u_{m+1}$ is a universal vertex, and that the vertices in $W \cup S$ form an independent set, and thus, $G$ is a split graph containing a universal vertex. We prove that $\phi$ admits a set cover of size at most $t$ if and only if there is an $\NCTMp$ of size at most $k=m+t$ for $\B(G)$.

\begin{figure}[htb]
\centering
\includegraphics[scale=0.75]{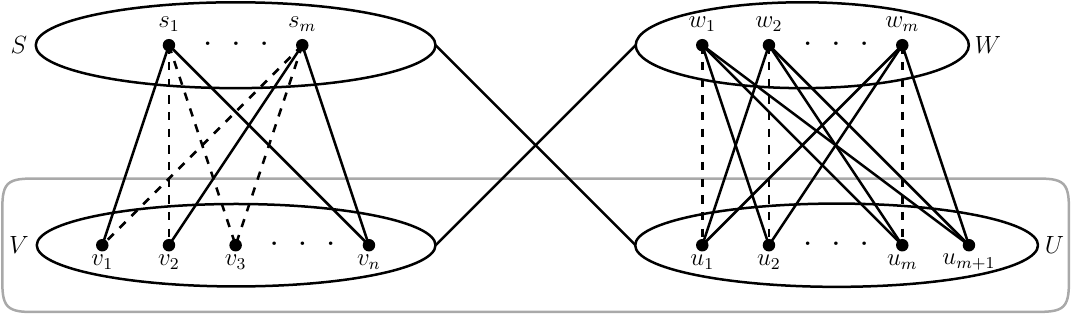}
\caption{\label{fig:split}
The split graph $G$ constructed in the proof of Theorem~\ref{thm:split}. Vertices contained in a rectangle form a clique. An edge between two ellipses indicates that each vertex in one ellipse is adjacent to each vertex in the other. Dashed lines highlight some non-existing edges. In this example, $2,3\in S_1$ (but $1,n\notin S_1$) and $1,3\in S_m$ (but $2,n\notin S_m$) in $\phi$.
}
\end{figure}

First, suppose that $\phi$ admits a set cover of size at most $t$, and let $\mathcal{S}'\subset \mathcal{S}$ be such a set cover. Let $S'\subset S$ be such that, for all $j\in [m]$, $S_j\in \mathcal{S}'$ in $\phi$ if and only if $s_j\in S'$. We define an $\NCTMp$ $T$ of size at most $k$ for $\B(G)$ as follows. Also, note that we only need to define $T$ for balls of $G$ of radius at most~$1$ since, for all $x\in V(G)$, $B_2(x)=B_1(u_{m+1})=V(G)$.

\begin{itemize}
\item For all $x\in V(G)$, set $T(x,0):=\{x\}$.
\item For all $x\in V$, set $T(x,1):=B_1(x)\cap S$ and note that $|T(x,1)|\geq 2$ since every element of $X$ is contained in at most $m-2$ sets of $\mathcal{S}$.
\item For all $x\in W \cup S$, set $T(x,1):=\{x,u_{m+1}\}$.
\item Finally, for all $x\in U$, set $T(x,1):=S'\cup (B_1(x)\cap W)$.
\end{itemize}

It is easy to verify that the map $T$ has size at most $k$ and satisfies the inclusion condition for each ball in $\B(G)$.
We now show that $T$ satisfies the non-clashing condition for all pairs of balls in $\B(G)$. For all $x\in V(G)$, $|B_0(x)|=|T(x,0)|=1$ and $|T(x,1)|\geq 2$, and thus, $T$ satisfies the non-clashing condition for all pairs of balls in $\B(G)$ where at least one of the balls has radius~$0$. For all $x \in W \cup S$, we have that $x\in T(x,1)$, and $W \cup S$ is an independent set. Hence, $T$ satisfies the non-clashing condition for all pairs of balls of radius~$1$ in $\B(G)$ centered at vertices in $W \cup S.$
For all $x,y\in V$, $B_1(x)\setminus S=B_1(y)\setminus S$, $T(x,1)\cap S=B_1(x)\cap S$, and $T(y,1)\cap S=B_1(y)\cap S$. Hence, $T$ satisfies the non-clashing condition for all pairs of balls of radius~$1$ in $\B(G)$ centered at vertices in $V.$
For all $x\in W \cup S$ and $y\in V$, $|B_1(x)\cap S|\leq 1$ and $|T(y,1)\cap S|\geq 2$. Hence, $T$ satisfies the non-clashing condition for all pairs of balls of radius~$1$ in $\B(G)$ centered at vertices in $W \cup S \cup V.$
For all $x,y\in U$, $B_1(x)\setminus W=B_1(y)\setminus W$, $T(x,1)\cap W=B_1(x)\cap W$, $T(y,1)\cap W=B_1(y)\cap W$. Hence, $T$ satisfies the non-clashing condition for all pairs of balls of radius~$1$ in $\B(G)$ centered at vertices in $U.$
For all $x\in U$ and $y\in W \cup S$, $|T(x,1)\cap W|\geq m-1$ and $B_1(y)\cap W\leq 1$. Hence, $T$ satisfies the non-clashing condition for all pairs of balls of radius~$1$ in $\B(G)$ centered at vertices in $W \cup S \cup U.$
For all $x \in V$, by the construction, we have that $B_1(x)\cap S'\neq S'$ since $S'$ corresponds to the set cover $\mathcal{S}'$, and, for all $y\in U$, $S'\subset T(y,1)$. Hence, $T$ satisfies the non-clashing condition for all pairs of balls of radius~$1$ in $\B(G)$ centered at vertices in $V \cup U.$
Combining all this, we get that $T$ is an $\NCTMp$ of size at most $k$ for $\B(G)$.

Now, suppose that $\phi$ does not admit a set cover of size at most $t$. In this case, we prove that there is no $\NCTMp$ of size at most $k$ for $\B(G)$. We first prove that $W\subseteq T(u_{m+1},1)$ for any $\NCTMp$ $T$ for $\B(G)$. Indeed, for all $j\in [m]$, $B_1(u_{m+1})=B_1(u_j)\cup \{w_j\}$, and so, to ensure that $T$ satisfies the non-clashing condition for the pair $B_1(u_{m+1})$ and $B_1(u_j)$, we must have that $w_j\in T(u_{m+1},1)$. Now, we prove that $|T(u_{m+1},1)\cap S|>t$ for any $\NCTMp$ $T$ for $\B(G)$ in this case, which completes the proof. Observe that, for all $i\in [n]$, $B_1(v_i)\subset B_1(u_{m+1})$ and $(B_1(u_{m+1})\setminus B_1(v_i))\subset S$. Hence, for each $i\in [n]$, to ensure that $T$ satisfies the non-clashing condition for the pair $B_1(u_{m+1})$ and $B_1(v_i)$, it is necessary that $s_j\in T(u_{m+1},1)$ for some $j\in [m]$ such that $s_j\notin B_1(v_i)$. However, $s_j\notin B_1(v_i)$ if and only if $i\in S_j$ in $\phi$. In other words, $T(u_{m+1},1)\cap S$ must correspond to a set cover in $\phi$, and thus, $|T(u_{m+1},1)|>k$. This concludes the proof for split graphs.

We now proceed with the proof for co-bipartite graphs. Let $\phi$ be an instance of \setcover\ with $X=\{ 1,\ldots,n\}$ and $\mathcal{S}=\{ S_1,\ldots,S_m\}$. As in the proof for split graphs, we may assume that each element of $X$ is contained in at most $m-2$ sets of $\mathcal{S}$. We may also assume that $m>n$ since we can simply duplicate sets in $\mathcal{S}$ to ensure this. From $\phi$, we construct the graph $G$ as in the proof for split graphs, except that we add the necessary edges so that the vertices in $W \cup S$ form a clique, and we add a vertex $v^*$ and make it adjacent to each vertex in $V \cup W \cup U$. The graph $G$ is clearly a co-bipartite graph with a universal vertex, that is constructed in polynomial time. See Figure~\ref{fig:cobip} for an illustration of $G$. We prove that $\phi$ admits a set cover of size at most $t$ if and only if there is an $\NCTMp$ of size at most $k=2m+t+1$ for $\B(G)$.

\begin{figure}[htb]
\centering
\includegraphics[scale=0.75]{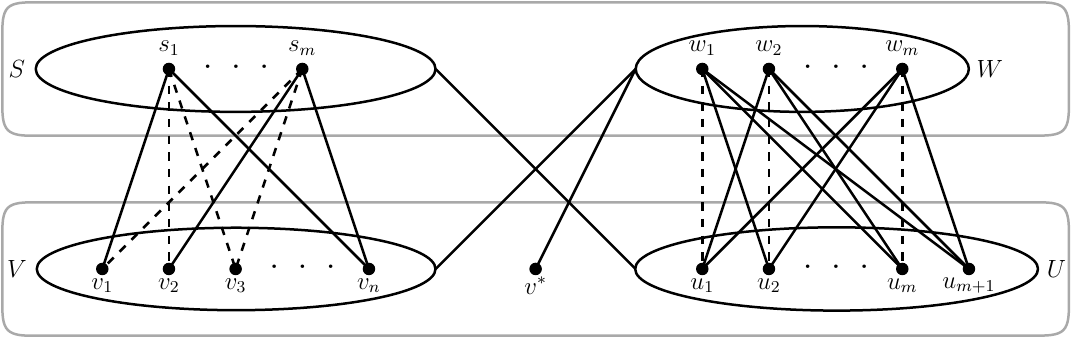}
\caption{\label{fig:cobip}
The co-bipartite graph $G$ constructed in the proof of Theorem~\ref{thm:split}. An edge between a vertex and an ellipse indicates that vertex is adjacent to each vertex in the ellipse.
}
\end{figure}

First, suppose that $\phi$ admits a set cover of size at most $t$, and let $\mathcal{S}'\subset \mathcal{S}$ be such a set cover. Let $S'\subset S$ be such that, for all $j\in [m]$, $S_j\in \mathcal{S}'$ in $\phi$ if and only if $s_j\in S'$. We define an $\NCTMp$ $T$ of size at most $k$ for $\B(G)$ as follows. As in the proof for split graphs, we only need to define $T$ for balls of $G$ of radius at most~$1$ since, for all $x\in V(G)$, $B_2(x)=B_1(u_{m+1})=V(G)$.

\begin{itemize}
\item For all $x\in V(G)$, set $T(x,0):=\{x\}$.
\item Set $T(v^*,1):=\{v^*,u_{m+1}\}$.
\item For all $x\in V$, set $T(x,1):=\{v^*\}\cup U \cup (B_1(x)\cap S)$.
\item For all $x\in S$, set $T(x,1):=\{x,u_{m+1}\} \cup (B_1(x)\cap V)$.
\item For all $x\in W$, set $T(x,1):=\{v^*\}\cup S \cup (B_1(x) \cap U)$.
\item Finally, for all $x\in U$, set $T(x,1):=\{v^*\}\cup S'\cup \{u_1,\ldots,u_m\} \cup (B_1(x)\cap W)$.
\end{itemize}

It is easy to verify that $T$ satisfies the inclusion condition for each ball in $\B(G)$, and that $T$ has size at most $k$ since $m>n$.
We now show that $T$ satisfies the non-clashing condition for all pairs of balls in $\B(G)$.
For all $x\in V(G)$, $|B_0(x)|=|T(x,0)|=1$ and $|T(x,1)|\geq 2$, and thus, $T$ satisfies the non-clashing condition for all pairs of balls in $\B(G)$ where at least one of the balls has radius~$0$. For all $x,y\in V$, $B_1(x)\setminus S=B_1(y)\setminus S$, $T(x,1)\cap S=B_1(x)\cap S$, and $T(y,1)\cap S=B_1(y)\cap S$. Hence, $T$ satisfies the non-clashing condition for all pairs of balls of radius~$1$ in $\B(G)$ centered at vertices in $V.$
For all $x,y\in U$, $B_1(x)\setminus W=B_1(y)\setminus W$, $T(x,1)\cap W=B_1(x)\cap W$, and $T(y,1)\cap W=B_1(y)\cap W$. Hence, $T$ satisfies the non-clashing condition for all pairs of balls of radius~$1$ in $\B(G)$ centered at vertices in $U.$
For all $x\in V$, by the construction, we have that $B_1(x)\cap S'\neq S'$ since $S'$ corresponds to the set cover $\mathcal{S}'$, and, for all $y \in U$, $S'\subset T(y,1)$. Hence, $T$ satisfies the non-clashing condition for all pairs of balls of radius~$1$ in $\B(G)$ centered at vertices in $V \cup U.$
For all $x,y\in W$, $B_1(x)\setminus U=B_1(y)\setminus U$, $T(x,1)\cap U=B_1(x)\cap U$, and $T(y,1)\cap U=B_1(y)\cap U$. Hence, $T$ satisfies the non-clashing condition for all pairs of balls of radius~$1$ in $\B(G)$ centered at vertices in $W.$
For all $x\in V \cup U$ and $y\in W$, we have that $\{u_1,\ldots,u_m\}\subset T(x,1)$ and $B_1(y)\cap \{u_1,\ldots,u_m\} \neq \{u_1,\ldots,u_m\}$. Hence, $T$ satisfies the non-clashing condition for all pairs of balls of radius~$1$ in $\B(G)$ centered at vertices in $V \cup U \cup W.$
For all $x\in V \cup U \cup W$, $|T(x,1)\cap S|\geq 1$ (recall that each element of $X$ is contained in at most $m-2$ sets of $\mathcal{S}$), and $|B_1(v^*)\cap S|=0$. Hence, $T$ satisfies the non-clashing condition for all pairs of balls of radius~$1$ in $\B(G)$ centered at vertices in $V \cup U \cup W \cup \{v^*\}.$
For all $x,y\in S$, $B_1(x)\setminus V=B_1(y)\setminus V$, $T(x,1)\cap V=B_1(x)\cap V$, and $T(y,1)\cap V=B_1(y)\cap V$. Hence, $T$ satisfies the non-clashing condition for all pairs of balls of radius~$1$ in $\B(G)$ centered at vertices in $S.$
For all $x\in V \cup U \cup W \cup \{v^*\}$ and $y\in S$, we have that $v^*\in T(x,1)$ and $v^*\notin B_1(y)$. Hence,
$T$ satisfies the non-clashing condition for all pairs of balls of radius~$1$ in $\B(G)$ centered at vertices in $V(G).$
Thus, $T$ is an $\NCTMp$ of size $\le k$ for $\B(G)$.

Now, suppose that $\phi$ does not admit a set cover of size at most $t$. In this case, we prove that there is no $\NCTMp$ of size at most $k$ for $\B(G)$. For all $x\in V \cup U$, it holds that $B_1(x)\setminus \{v^*\}$ is the same as in the graph $G$ constructed in the proof for split graphs. Thus, since, for all $x\in V \cup U$, it holds that $v^*\in B_1(x)$, we get that $|T(u_{m+1},1)\cap (W \cup S)| > m+t$ for any $\NCTMp$ $T$ for $\B(G)$ in this case. Now, we prove that $|T(u_{m+1},1)\cap \{u_1,\ldots,u_m\}|=m$ for any $\NCTMp$ $T$ for $\B(G)$. Indeed, for all $j\in [m]$, $B_1(u_{m+1})=B_1(w_j)\cup \{u_j\}$, and so, to ensure that $T$ satisfies the non-clashing condition for the pair $B_1(u_{m+1})$ and $B_1(w_j)$, we must have that $u_j\in T(u_{m+1},1)$. Lastly, we prove that $|T(u_{m+1},1)\cap (V \cup \{v^*\})|\geq 1$ for any $\NCTMp$ $T$ for $\B(G)$. Indeed, $B_1(u_{m+1})=V(G)$ and, for all $x\in S$, $B_1(u_{m+1})\setminus (V\cup \{v^*\})=B_1(x)\setminus (V\cup \{v^*\})$, and so, to ensure that $T$ satisfies the non-clashing condition for the pair $B_1(u_{m+1})$ and $B_1(x)$, we must have that $|T(u_{m+1},1)\cap (V\cup \{v^*\})|\geq 1$. Thus, $|T(u_{m+1},1)|>k$. This concludes the proof for co-bipartite graphs.

We now proceed with the proof for bipartite graphs. Let $\phi$ be an instance of \setcover\ with $X=\{ 1,\ldots,n\}$ and $\mathcal{S}=\{ S_1,\ldots,S_m\}$. As in the proof for co-bipartite graphs, we may assume that $m>n$, and that $\phi$ is an instance in which each element of $X$ is contained in at most $m-2$ sets of $\mathcal{S}$. From this instance $\phi$, we construct the graph $G$ as follows. For all $i\in [n]$ and $j\in [m]$, add a vertex $v_i$ and a vertex $s_j$, and if $i\notin S_j$ in $\phi$, then add the edge $v_is_j$. Add the sets of vertices $U=\{u_1,\ldots,u_{m+1}\}$ and $W=\{w_1,\ldots,w_m\}$, and, for all $j,\ell \in [m]$ such that $j\neq \ell$, add the edge $u_jw_{\ell}$. For all $j\in [m]$, add the edge $u_{m+1}w_j$. Add edges so that every vertex in $U$ is adjacent to every vertex in $V=\{v_1,\ldots,v_n\}$. Lastly, add a vertex $z$, and add edges so that $z$ is adjacent to every vertex in $U \cup S$, where $S=\{s_1,\ldots,s_m\}$. This completes the construction of $G$, which is clearly achieved in polynomial time. See Figure~\ref{fig:bipartite} for an illustration of $G$.  Note that $G$ has diameter~$3$ and is bipartite, as witnessed by the bipartition $(W \cup V \cup \{z\})\cup (U \cup S)$ of its vertices. We prove that $\phi$ admits a set cover of size at most $t$ if and only if there is an $\NCTMp$ of size at most $k=m+t$ for $\B(G)$.

\begin{figure}[htb]
\centering
\includegraphics[scale=0.75]{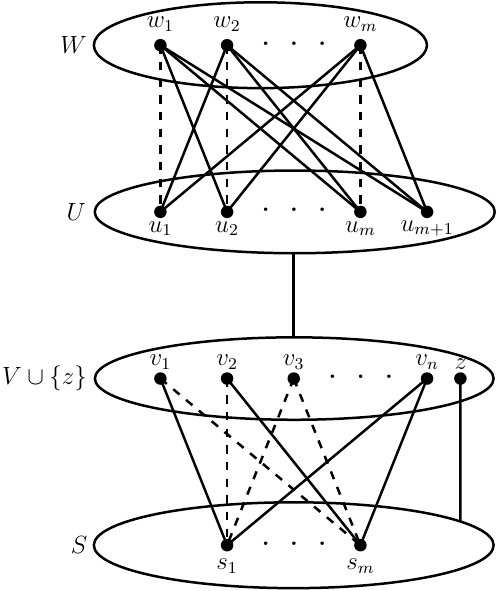}
\caption{\label{fig:bipartite}
The bipartite graph $G$ constructed in the proof of Theorem~\ref{thm:split}. An edge between a vertex and an ellipse indicates that vertex is adjacent to each vertex in the ellipse. See the caption of Figure~\ref{fig:split} for more details.
}
\end{figure}

First, suppose that $\phi$ admits a set cover of size at most $t$, and let $\mathcal{S}'\subset \mathcal{S}$ be such a set cover. Let $S'\subset S$ be such that, for all $j\in [m]$, $S_j\in \mathcal{S}'$ in $\phi$ if and only if $s_j\in S'$. We define an $\NCTMp$ $T$ of size at most $k$ for $\B(G)$ as follows. Also, note that we only need to define $T$ for balls of $G$ of radius at most~$2$ since, for all $x\in V(G)$, $B_3(x)=B_2(u_{m+1})=V(G)$. Furthermore, we only need to define $T$ for balls of radius at most~$1$ centered at $z$ since $B_2(z)=B_2(u_{m+1})$ and we will define $T$ for $B_2(u_{m+1})$.

\begin{itemize}
\item For all $x\in V(G)$, set $T(x,0):=\{x\}$.
\item For all $x\in V$, set $T(x,1):=\{x\}\cup (B_1(x)\cap S)$ and $T(x,2):=\{w_1\}\cup (B_1(x)\cap S)$, and note that $|T(x,2)|=|T(x,1)|\geq 3$ since, in $\phi$, every element of $X$ is contained in at most $m-2$ sets of $\mathcal{S}$.
\item Similarly, set $T(z,1):=\{z\}\cup S$.
\item For all $x\in S$, set $T(x,1):=\{x,z\} \cup (B_1(x)\cap V)$, $T(x,2):=\{u_{m+1},z\} \cup (B_1(x)\cap V)$.
\item For all $x\in W$, set $T(x,1):=\{x\}\cup (B_1(x)\cap \{u_1,\ldots,u_m\})$, and $T(x,2):=\{x,z\} \cup (B_1(x)\cap \{u_1,\ldots,u_m\})$.
\item Finally, for all $x\in U$, set $T(x,1):=\{x\}\cup (B_1(x)\cap W)$ and $T(x,2):=S'\cup (B_1(x)\cap W)$.
\end{itemize}

It is easy to verify that $T$ has size at most $k$ and satisfies the inclusion condition for each ball in $\B(G)$.

We now show that $T$ satisfies the non-clashing condition for all pairs of balls in $\B(G)$. For all $x\in V(G)$, $|B_0(x)|=|T(x,0)|=1$, $|T(x,1)|\geq 2$, and $|T(x,2)|\geq 2$, and thus, $T$ satisfies the non-clashing condition for all pairs of balls in $\B(G)$ where at least one of the balls has radius~$0$. For all $x\in V(G)$, we have that $x\in T(x,1)$. Furthermore, $W\cup S$, $W \cup V \cup \{z\}$, and $U \cup S$ are independent sets. Hence, $T$ satisfies the non-clashing condition for all pairs of balls of radius~$1$ in $\B(G)$ centered at vertices in
\begin{equation}\label{eqn:bip:1}
W \cup S;
\end{equation}
\begin{equation}\label{eqn:bip:2}
W \cup V \cup \{z\};
\end{equation}
\begin{equation}\label{eqn:bip:3}
U \cup S.
\end{equation}
For all $x\in W$ and $y \in U$, $|T(x,1)\cap U|=m-1$ and $|B_1(y)\cap U|=1$. Similarly, for all $x'\in V \cup \{z\}$ and $y'\in U \cup S$, $|T(x',1)\cap S|\geq 2$ and $|B_1(y')\cap S|\leq 1$. Hence, in combination with~(\ref{eqn:bip:1}),~(\ref{eqn:bip:2}),~and~(\ref{eqn:bip:3}), $T$ satisfies the non-clashing condition for all pairs of balls of radius~$1$ in $\B(G)$ centered at vertices in
\begin{equation}\label{eqn:bip:4}
V(G).
\end{equation}
For all $x\in W$ and $y\in U$, $|T(x,2)\cap W|=1$, $|T(y,2)\cap W|\geq m-1$, and $|B_1(z)\cap W|=0$. Also, for all $x'\in V$, $|B_2(x')\cap S|<m$ and $|T(z,1)\cap S|=m$. Lastly, $|B_1(z)\cap V|=0$ and, for all $y'\in S$, either $|T(y',2)\cap V|=0$, in which case $B_1(z)=B_2(y')$, or $|T(y',2)\cap V|\geq 1$. Recall that $B_2(u_{m+1})=B_2(z)$, and hence, $T$ satisfies the non-clashing condition for all pairs of balls in $\B(G)$, where one of the balls is $B_1(z)$ and the other has radius~$2$ and is centered at a vertex in
\begin{equation}\label{eqn:bip:5}
V(G).
\end{equation}
For all $x\in V(G)$, there exists a vertex $y\in T(x,2)$ such that $d(x,y)=2$. Hence, in combination with the construction of $G$, $T$ satisfies the non-clashing condition for all pairs of balls in $\B(G)$, where one of the balls has radius~$1$ and the other has radius~$2$, that are centered at vertices in
\begin{equation}\label{eqn:bip:6}
W;
\end{equation}
\begin{equation}\label{eqn:bip:7}
U;
\end{equation}
\begin{equation}\label{eqn:bip:8}
V;
\end{equation}
\begin{equation}\label{eqn:bip:9}
S.
\end{equation}
For all $x\in S$ and $y\in W$, we have that $z\in T(x,2)$, $z\notin B_1(y)$, $|T(y,2)\cap U|=m-1$, and $|B_1(x)\cap U|=0$. Hence, in combination with~(\ref{eqn:bip:6}) and (\ref{eqn:bip:9}), $T$ satisfies the non-clashing condition for all pairs of balls in $\B(G)$, where one of the balls has radius~$1$ and the other has radius~$2$, that are centered at vertices in
\begin{equation}\label{eqn:bip:10}
W \cup S.
\end{equation}
For all $x\in V$, $y\in W$, and $q\in \{1,2\}$, $|T(x,q)\cap S|\geq 2$ and $|B_q(y)\cap S|=0$. Hence, in combination with~(\ref{eqn:bip:6}) and (\ref{eqn:bip:8}), $T$ satisfies the non-clashing condition for all pairs of balls in $\B(G)$, where one of the balls has radius~$1$ and the other has radius~$2$, that are centered at vertices in
\begin{equation}\label{eqn:bip:11}
W \cup V.
\end{equation}
For all $x\in W$ and $y\in U$, $|T(x,2)\cap U|=m-1$, $|B_1(y)\cap U|=1$, $|T(y,2)\cap S|\geq 1$, and $|B_1(x)\cap S|=0$. Hence, in combination with~(\ref{eqn:bip:6}) and (\ref{eqn:bip:7}), $T$ satisfies the non-clashing condition for all pairs of balls in $\B(G)$, where one of the balls has radius~$1$ and the other has radius~$2$, that are centered at vertices in
\begin{equation}\label{eqn:bip:12}
W \cup U.
\end{equation}
For all $x\in S$, $y \in U$, and $q\in \{1,2\}$, $|T(y,q)\cap W|\geq m-1$ and $|B_q(x)\cap W|=0$. Hence, in combination with~(\ref{eqn:bip:7}) and (\ref{eqn:bip:9}), $T$ satisfies the non-clashing condition for all pairs of balls in $\B(G)$, where one of the balls has radius~$1$ and the other has radius~$2$, that are centered at vertices in
\begin{equation}\label{eqn:bip:13}
U \cup S.
\end{equation}
For all $x\in V$ and $y \in U$, $|T(y,2)\cap W|\geq m-1$, $|B_1(x)\cap W|=0$, $|T(x,2)\cap S|\geq 2$, and $|B_1(y)\cap S|=0$. Hence, in combination with~(\ref{eqn:bip:7}) and (\ref{eqn:bip:8}), $T$ satisfies the non-clashing condition for all pairs of balls in $\B(G)$, where one of the balls has radius~$1$ and the other has radius~$2$, that are centered at vertices in
\begin{equation}\label{eqn:bip:14}
U \cup V.
\end{equation}
For all $x\in V$ and $y\in S$, we have that $w_1\in T(x,2)$, $w_1\notin B_1(y)$, $z\in T(y,2)$, and $z\notin B_1(x)$. Hence, in combination with~(\ref{eqn:bip:8}) and (\ref{eqn:bip:9}), $T$ satisfies the non-clashing condition for all pairs of balls in $\B(G)$, where one of the balls has radius~$1$ and the other has radius~$2$, that are centered at vertices in
\begin{equation}\label{eqn:bip:15}
V \cup S.
\end{equation}
Combining~(\ref{eqn:bip:5}) and (\ref{eqn:bip:10})--(\ref{eqn:bip:15}), $T$ satisfies the non-clashing condition for all pairs of balls in $\B(G)$, where one of the balls has radius~$1$ and the other has radius~$2$, that are centered at vertices in
\begin{equation}\label{eqn:bip:16}
V(G).
\end{equation}
It remains to prove that $T$ satisfies the non-clashing condition for all pairs of balls of radius~$2$ in $\B(G)$. For all $j\in [m]$, $T(w_j,2)\cap \{u_1,\ldots,u_m\}=B_2(w_j)\cap \{u_1,\ldots,u_m\}=\{u_1\ldots,u_m\}\setminus \{u_j\}$. Hence, $T$ satisfies the non-clashing condition for all pairs of balls of radius~$2$ in $\B(G)$, that are centered at vertices in
\begin{equation}\label{eqn:bip:17}
W.
\end{equation}
For all $j\in [m]$, $T(u_j,2)\cap W=B_2(u_j)\cap W = W\setminus \{w_j\}$. Also, $W \subset T(u_{m+1},2)$. Hence, $T$ satisfies the non-clashing condition for all pairs of balls of radius~$2$ in $\B(G)$, that are centered at vertices in
\begin{equation}\label{eqn:bip:18}
U.
\end{equation}
For any $x,y\in V$, $B_2(x)\setminus S=B_2(y)\setminus S$, $T(x,2)\cap S=B_2(x)\cap S$, and $T(y,2)\cap S=B_2(y)\cap S$.
Hence, $T$ satisfies the non-clashing condition for all pairs of balls of radius~$2$ in $\B(G)$, that are centered at vertices in
\begin{equation}\label{eqn:bip:19}
V.
\end{equation}
For any $x,y\in S$, $B_2(x)\setminus V=B_2(y)\setminus V$, $T(x,2)\cap V=B_2(x)\cap V$, and $T(y,2)\cap V=B_2(y)\cap V$.
Hence, $T$ satisfies the non-clashing condition for all pairs of balls of radius~$2$ in $\B(G)$, that are centered at vertices in
\begin{equation}\label{eqn:bip:20}
S.
\end{equation}
For all $x\in W$ and $y\in S$, we have that $x\in T(x,2)$ and $|B_2(y)\cap W|=0$. Hence, in combination with~(\ref{eqn:bip:17}) and (\ref{eqn:bip:20}), $T$ satisfies the non-clashing condition for all pairs of balls of radius~$2$ in $\B(G)$, that are centered at vertices in
\begin{equation}\label{eqn:bip:21}
W \cup S.
\end{equation}
For all $x\in V$ and $y\in W$, $|T(x,2)\cap S|\geq 2$ and $|B_2(y)\cap S|=0$. Hence, in combination with~(\ref{eqn:bip:17}) and (\ref{eqn:bip:19}), $T$ satisfies the non-clashing condition for all pairs of balls of radius~$2$ in $\B(G)$, that are centered at vertices in
\begin{equation}\label{eqn:bip:22}
W \cup V.
\end{equation}
For all $x\in W$ and $y\in U$, $|T(y,2)\cap S|\geq 1$ and $|B_2(x)\cap S|=0$. Hence, in combination with~(\ref{eqn:bip:17}) and (\ref{eqn:bip:18}), $T$ satisfies the non-clashing condition for all pairs of balls of radius~$2$ in $\B(G)$, that are centered at vertices in
\begin{equation}\label{eqn:bip:23}
W \cup U.
\end{equation}
For all $x\in S$ and $y \in U$, $|T(y,2)\cap W|\geq m-1$ and $|B_2(x)\cap W|=0$. Hence, in combination with~(\ref{eqn:bip:18}) and (\ref{eqn:bip:20}), $T$ satisfies the non-clashing condition for all pairs of balls of radius~$2$ in $\B(G)$, that are centered at vertices in
\begin{equation}\label{eqn:bip:24}
U \cup S.
\end{equation}
For all $x \in V$ and $y\in S$, we have that $w_1\in T(x,2)$ and $w_1\notin B_2(y)$. Hence, in combination with~(\ref{eqn:bip:19}) and (\ref{eqn:bip:20}), $T$ satisfies the non-clashing condition for all pairs of balls of radius~$2$ in $\B(G)$, that are centered at vertices in
\begin{equation}\label{eqn:bip:25}
V \cup S.
\end{equation}
For all $x\in V$, by the construction, $B_2(x)\cap S'\neq S'$ since $S'$ corresponds to the set cover $\mathcal{S}'$, and, for all $y \in U$, $S'\subset T(y,2)$. Hence, in combination with~(\ref{eqn:bip:18}) and (\ref{eqn:bip:19}), $T$ satisfies the non-clashing condition for all pairs of balls of radius~$2$ in $\B(G)$, that are centered at vertices in
\begin{equation}\label{eqn:bip:26}
U \cup V.
\end{equation}
Combining~(\ref{eqn:bip:21})--(\ref{eqn:bip:26}), we get that $T$ satisfies the non-clashing condition for all pairs of balls of radius~$2$ in $\B(G)$, that are centered at vertices in
\begin{equation}\label{eqn:bip:27}
V(G).
\end{equation}
Combining~(\ref{eqn:bip:4}), (\ref{eqn:bip:16}), and (\ref{eqn:bip:27}), we get that $T$ is an $\NCTMp$ of size at most $k$ for $\B(G)$.

Now, suppose that $\phi$ does not admit a set cover of size at most $t$. In this case, we prove that there is no $\NCTMp$ of size at most $k$ for $\B(G)$. We first prove that $|T(u_{m+1},2)\cap W|=m$ for any $\NCTMp$ $T$ for $\B(G)$. Indeed, for all $1\leq j\leq m$, $B_2(u_{m+1})=B_2(u_j)\cup \{w_j\}$, and so, to ensure that $T$ satisfies the non-clashing condition for the pair $B_2(u_{m+1})$ and $B_2(u_j)$, we must have that $w_j\in T(u_{m+1},2)$. Now, we prove that $|T(u_{m+1},2)\cap S|>t$ for any $\NCTMp$ $T$ for $\B(G)$ in this case, which completes the proof. Observe that, for all $i\in [n]$, $B_2(v_i)\subset B_2(u_{m+1})$ and $(B_2(u_{m+1})\setminus B_2(v_i))\subset S$. Hence, for each $i\in [n]$, to ensure that $T$ satisfies the non-clashing condition for the pair $B_2(u_{m+1})$ and $B_2(v_i)$, it is necessary that $s_j\in T(u_{m+1},1)$ for some $j\in [m]$ such that $s_j\notin B_1(v_i)$. However, $s_j\notin B_1(v_i)$ if and only if $i\in S_j$ in $\phi$. In other words, $T(u_{m+1},2)\cap S$ must correspond to a set cover in $\phi$, and thus, $|T(u_{m+1},2)|>k$. This concludes the proof for bipartite graphs.
\end{proof}

\setcounter{equation}{0}

\propexact*

\begin{proof}
For any $x\in V(G)$ and $r\in \mathbb{N}$, there are at most $2^n$ possible choices for $T(x,r)$, and there are at most $n \cdot \min\{(\diam + 1),n\}$ unique balls in $G$. Thus, for each possible (positive) $\NCTM$, it can be checked in polynomial time whether it satisfies the non-clashing condition for all pairs of balls in $\B(G)$, and the inclusion condition (for \srmfull) for each ball in $\B(G)$. Hence, there is a brute-force algorithm running in time $2^{\mathcal{O}(n^2 \cdot \min\{\diam+1,n\})}=2^{\mathcal{O}(n^2 \cdot \diam(G))}$ (since $G$ is connected).
\end{proof}

\mainforward*

\begin{proof}
Suppose that $\pi: X^{\alpha} \cup X^{\beta} \cup X^{\gamma} \rightarrow \{\text{True},\text{False}\}$ is a satisfying assignment for $\phi$.
Let us define the set $\pi'$ of vertices in $A^{\alpha} \cup A^{\beta} \cup A^{\gamma}$ corresponding to $\pi$.
Initially, set $\pi':=\varnothing$.
Now, for each $\delta\in \{\alpha,\beta,\gamma\}$ and $x^{\delta}_i$ in $\phi$, if $\pi(x^{\delta}_i)=\text{True}$ ($\pi(x^{\delta}_i)=\text{False}$, respectively), then add $t^{\delta}_{2i}$ ($f^{\delta}_{2i-1}$, respectively) to $\pi'$.
Thus, $|\pi'|=3N$ and $\pi'$ corresponds to a satisfying assignment for $\phi$ in the sense that, from $\pi'$, we can extract the satisfying assignment $\pi$ for $\phi$.
Using $\pi'$, we define an $\NCTMp$ $T$ of size $k$ for $\B(G)$ as follows. Also, note that we only need to define $T$ for balls of $G$ of radius at most~$2$ since, for all $x\in V(G)$, $B_3(x)=B_2(u_{3M+1})=V(G)$. Furthermore, we do not need to define $T$ for $B_2(u'_{3M+1})$ nor $B_2(z)$ since $B_2(z)=B_2(u'_{3M+1})=B_2(u_{3M+1})$, and we will define $T$ for $B_2(u_{3M+1})$.

\begin{itemize}
\item For all $x\in V(G)$, set $T(x,0):=\{x\}$.
\item For each $\delta\in \{\alpha,\beta,\gamma\}$ and $x\in A^{\delta}$, set $T(x,1):=B_1(x)$ and $T(x,2):=\{u_1,t^{\delta'}_2,t^{\delta''}_2\} \cup B_1(x)$, where $\delta',\delta''\in \{\alpha,\beta,\gamma\}$ such that $\delta\notin \{\delta',\delta''\}$ and $\delta'\neq \delta''$. Note that $T(x,1)\subset T(x,2)$ and $|T(x,2)|=\mathcal{O}(\log M)$.
\item For each $\delta\in \{\alpha,\beta,\gamma\}$ and $x\in V^{\delta} \cup V^{\delta,*}$, set $T(x,1):=B_1(x)\setminus U$ and $T(x,2):=\{w_1,z\} \cup (B_1(x)\setminus U)$. Note that $T(x,1)\subset T(x,2)$ and $|T(x,2)|<2N+3M+3$.
\item For each $x\in V^W$, set $T(x,1):=\{u_{3M+1},u'_{3M+1}\} \cup V^W \cup (B_1(x)\cap U)$ and $T(x,2):=\{u_{3M+1},u'_{3M+1},z\} \cup V^W \cup U$. Note that $T(x,1)\subset T(x,2)$ and $|T(x,2)|=3M+\mathcal{O}(\log M)$.
\item For each $\delta\in \{\alpha,\beta,\gamma\}$ and $x\in C \cup C^{\delta}$, set $T(x,1):=B_1(x)$ and $T(x,2):=\{x,w_1\} \cup (B_2(x)\cap (A^{\alpha} \cup A^{\beta} \cup A^{\gamma}))$. Note that $|T(x,1)|=\mathcal{O}(\log M)$ and $|T(x,2)|\leq 6N+2 < 3N+3M=k$.
\item For each $x\in U \cup \{u_{3M+1},u'_{3M+1}\}$, set $T(x,1):=B_1(x)\setminus (C \cup C^{\alpha} \cup C^{\beta} \cup C^{\gamma})$ and $T(x,2):=(B_2(x)\cap W) \cup \pi'$. Note that $|T(x,1)|\leq 3M+\mathcal{O}(\log M)$ and $|T(x,2)|\leq  3N+3M = k$.
\item For each $x\in W$, set $T(x,1):=B_1(x)$ and $T(x,2):=\{x,z,u_{3M+1},u'_{3M+1}\} \cup (B_2(x)\cap U)$. Note that $|T(x,1)|=\mathcal{O}(\log M)$ and $|T(x,2)|\leq 3M+4$.
\item Finally, set $T(z,1):=\{z,u_1\}$.
\end{itemize}

Hence, $T$ has size at most $k$, and it is easy to verify that $T$ satisfies the inclusion condition for each ball in $\B(G)$.

We now show that $T$ satisfies the non-clashing condition for all pairs of balls in $\B(G)$. For all $x\in V(G)$, $|B_0(x)|=|T(x,0)|=1$, $|T(x,1)|\geq 2$, and $|T(x,2)|\geq 2$, and thus, $T$ satisfies the non-clashing condition for all pairs of balls in $\B(G)$ where at least one of the balls has radius~$0$. For all $x\in V(G)$, $x\in T(x,1)$. Furthermore, $W \cup U \cup C \cup C^{\alpha} \cup C^{\beta} \cup C^{\gamma} \cup A^{\alpha} \cup A^{\beta} \cup A^{\gamma}$, $\{z\} \cup W \cup V^{\alpha} \cup V^{\beta} \cup V^{\gamma} \cup V^{\alpha,*} \cup V^{\beta,*} \cup V^{\gamma,*}$, $\{u_{3M+1},u'_{3M+1}\} \cup U \cup A^{\alpha} \cup A^{\beta} \cup A^{\gamma}$, $C \cup V^{\alpha,*} \cup V^{\beta,*} \cup V^{\gamma,*}$, $C^{\alpha} \cup C^{\beta} \cup C^{\gamma} \cup V^{\alpha} \cup V^{\beta} \cup V^{\gamma}$, and $\{z\} \cup C \cup C^{\alpha} \cup C^{\beta} \cup C^{\gamma}$ are independent sets. Hence, $T$ satisfies the non-clashing condition for all pairs of balls of radius~$1$ in $\B(G)$ centered at vertices in
\begin{equation}\label{eqn:vc:1}
W \cup U \cup C \cup C^{\alpha} \cup C^{\beta} \cup C^{\gamma} \cup A^{\alpha} \cup A^{\beta} \cup A^{\gamma};
\end{equation}
\begin{equation}\label{eqn:vc:2}
\{z\} \cup W \cup V^{\alpha} \cup V^{\beta} \cup V^{\gamma} \cup V^{\alpha,*} \cup V^{\beta,*} \cup V^{\gamma,*};
\end{equation}
\begin{equation}\label{eqn:vc:3}
\{u_{3M+1},u'_{3M+1}\} \cup U \cup A^{\alpha} \cup A^{\beta} \cup A^{\gamma};
\end{equation}
\begin{equation}\label{eqn:vc:4}
C \cup V^{\alpha,*} \cup V^{\beta,*} \cup V^{\gamma,*};
\end{equation}
\begin{equation}\label{eqn:vc:5}
C^{\alpha} \cup C^{\beta} \cup C^{\gamma} \cup V^{\alpha} \cup V^{\beta} \cup V^{\gamma};
\end{equation}
\begin{equation}\label{eqn:vc:6}
\{z\} \cup C \cup C^{\alpha} \cup C^{\beta} \cup C^{\gamma}.
\end{equation}
For all $x\in W$ and $y\in V^W$, $|B_1(x)\cap V^W|=p$, $|T(y,1)\cap V^W|=2p$, and $|T(u_{3M+1},1)\cap W|=|T(u'_{3M+1},1)\cap W|=3M$. Hence, in combination with (\ref{eqn:vc:1}) and (\ref{eqn:vc:2}), $T$ satisfies the non-clashing condition for all pairs of balls of radius~$1$ in $\B(G)$, where one of the balls is centered at a vertex in
\begin{equation}\label{eqn:vc:7}
W.
\end{equation}
For all $x\in V^W$ and $x'\in U \cup \{u_{3M+1},u'_{3M+1}\}$, we have that $z\in T(x',1)$ and $z\notin B_1(x)$. Further, for all $x\in V^W$ and $x''\in V(G)\setminus (V^W \cup W \cup U \cup \{u_{3M+1},u'_{3M+1}\})$, we have that $x''\in T(x'',1)$ and $x''\notin B_1(x)$. Lastly, for all $x,y\in V^W$, $B_1(x)\setminus (W \cup U) = B_1(y)\setminus (W \cup U)$, $(B_1(x)\cap U) \subset T(x,1)$, $(B_1(y)\cap U) \subset T(y,1)$, and $B_1(x)\cap U = B_1(y)\cap U$ if and only if $B_1(x)\cap W = B_1(y)\cap W$ by the construction.  Hence, in combination with (\ref{eqn:vc:7}), $T$ satisfies the non-clashing condition for all pairs of balls of radius~$1$ in $\B(G)$, where one of the balls is centered at a vertex in
\begin{equation}\label{eqn:vc:8}
V^W.
\end{equation}
For all $x\in U \cup \{u_{3M+1},u'_{3M+1}\}$, $|T(x,1)\cap V^W|\geq p$. Further, for all $y\in V(G)\setminus (V^W \cup W \cup U \cup \{u_{3M+1},u'_{3M+1}\})$, $|B_1(y)\cap V^W|=0$. Hence, in combination with (\ref{eqn:vc:3}), (\ref{eqn:vc:7}), and (\ref{eqn:vc:8}), this implies that $T$ satisfies the non-clashing condition for all pairs of balls of radius~$1$ in $\B(G)$, where one of the balls is centered at a vertex in
\begin{equation}\label{eqn:vc:9}
U \cup \{u_{3M+1},u'_{3M+1}\}.
\end{equation}
For all $x\in C$, $y\in C^{\alpha} \cup C^{\beta} \cup C^{\gamma}$, and $\delta\in \{\alpha,\beta,\gamma\}$, $|T(x,1)\cap V^{\delta}|\geq p$ and $|T(y,1)\cap V^{\delta,*}|\geq p$. Hence, in combination with (\ref{eqn:vc:1}), (\ref{eqn:vc:4}), (\ref{eqn:vc:5}), (\ref{eqn:vc:6}), (\ref{eqn:vc:8}), and (\ref{eqn:vc:9}), $T$ satisfies the non-clashing condition for all pairs of balls of radius~$1$ in $\B(G)$, where one of the balls is centered at a vertex in
\begin{equation}\label{eqn:vc:10}
C \cup C^{\alpha} \cup C^{\beta} \cup C^{\gamma}.
\end{equation}
For all $x\in A^{\alpha} \cup A^{\beta} \cup A^{\gamma}$ and $\delta\in \{\alpha,\beta,\gamma\}$, $|T(x,1)\cap V^{\delta}|\geq p$ and $z\in T(x,1)$. Hence, in combination with (\ref{eqn:vc:1}), (\ref{eqn:vc:2}), and (\ref{eqn:vc:7})--(\ref{eqn:vc:10}), $T$ satisfies the non-clashing condition for all pairs of balls of radius~$1$ in $\B(G)$ centered at vertices in
\begin{equation}\label{eqn:vc:11}
V(G).
\end{equation}
For all $x\in V(G)\setminus (U \cup A^{\alpha} \cup A^{\beta} \cup A^{\gamma} \cup \{u_{3M+1},u'_{3M+1}\})$, we have that $x\in T(x,2)$ and $x\notin B_1(z)$. For all $\delta\in \{\alpha,\beta,\gamma\}$ and $x'\in A^{\delta}$, $|T(x',2)\cap V^{\delta}|=p$ and $|B_1(z)\cap V^{\delta}|=0$. Lastly, for all $x''\in U \cup \{u_{3M+1}\}$, $|T(x'',2)\cap W|\geq 3M-1$ and $|B_1(z)\cap W|=0$.
Recall that $B_2(u_{3M+1})=B_2(u'_{3M+1})=B_2(z)$, and hence, $T$ satisfies the non-clashing condition for all pairs of balls in $\B(G)$, where one of the balls is $B_1(z)$ and the other has radius~$2$ and is centered at a vertex in
\begin{equation}\label{eqn:vc:12}
V(G).
\end{equation}
For all $x\in V(G)$, there exists a vertex $y\in T(x,2)$ such that $d(x,y)=2$. Hence, in combination with the construction of $G$, $T$ satisfies the non-clashing condition for all pairs of balls in $\B(G)$, where one of the balls has radius~$1$ and the other has radius~$2$, that are centered at vertices in
\begin{equation}\label{eqn:vc:13}
W;
\end{equation}
\begin{equation}\label{eqn:vc:14}
V^W;
\end{equation}
\begin{equation}\label{eqn:vc:15}
U \cup \{u_{3M+1},u'_{3M+1}\};
\end{equation}
\begin{equation}\label{eqn:vc:16}
C \cup C^{\alpha} \cup C^{\beta} \cup C^{\gamma};
\end{equation}
\begin{equation}\label{eqn:vc:17}
V^{\alpha} \cup V^{\beta} \cup V^{\gamma} \cup V^{\alpha,*} \cup V^{\beta,*} \cup V^{\gamma,*};
\end{equation}
\begin{equation}\label{eqn:vc:18}
A^{\alpha} \cup A^{\beta} \cup A^{\gamma}.
\end{equation}
For all $x\in W$ and $y\in V^W$, we have that $z\in T(x,2)$, $z\notin B_1(y)$, $z\in T(y,2)$, and $z\notin B_1(x)$. Hence, in combination with (\ref{eqn:vc:13}) and (\ref{eqn:vc:14}), $T$ satisfies the non-clashing condition for all pairs of balls in $\B(G)$, where one of the balls has radius~$1$ and the other has radius~$2$, that are centered at vertices in
\begin{equation}\label{eqn:vc:19}
W\cup V^W.
\end{equation}
For all $x\in W$ and $y\in U \cup \{u_{3M+1},u'_{3M+1}\}$, $|T(x,2)\cap \{u_{3M+1},u'_{3M+1}\}|=2$, $|B_1(y)\cap \{u_{3M+1},u'_{3M+1}\}|\leq 1$, $|T(y,2)\cap A^{\alpha}|=N$, and $|B_1(x)\cap A^{\alpha}|=0$. Hence, in combination with (\ref{eqn:vc:13}) and (\ref{eqn:vc:15}), $T$ satisfies the non-clashing condition for all pairs of balls in $\B(G)$, where one of the balls has radius~$1$ and the other has radius~$2$, that are centered at vertices in
\begin{equation}\label{eqn:vc:20}
W\cup U \cup \{u_{3M+1},u'_{3M+1}\}.
\end{equation}
For all $x\in W$ and $y\in V(G)\setminus (W \cup V^W\cup U \cup \{u_{3M+1},u'_{3M+1},z\})$, we have that $x\in T(x,2)$, $x\notin B_1(y)$, $y\in T(y,2)$, and $y\notin B_1(x)$. Hence, in combination with (\ref{eqn:vc:13}) and (\ref{eqn:vc:16})--(\ref{eqn:vc:18}), $T$ satisfies the non-clashing condition for all pairs of balls in $\B(G)$, where one of the balls has radius~$1$ and the other has radius~$2$, that are centered at vertices in
\begin{equation}\label{eqn:vc:21}
W\cup C \cup C^{\alpha} \cup C^{\beta} \cup C^{\gamma};
\end{equation}
\begin{equation}\label{eqn:vc:22}
W \cup V^{\alpha} \cup V^{\beta} \cup V^{\gamma} \cup V^{\alpha,*} \cup V^{\beta,*} \cup V^{\gamma,*};
\end{equation}
\begin{equation}\label{eqn:vc:23}
W \cup A^{\alpha} \cup A^{\beta} \cup A^{\gamma}.
\end{equation}
For all $x\in V^W$ and $y\in U \cup \{u_{3M+1},u'_{3M+1}\}$, $|T(x,2)\cap \{u_{3M+1},u'_{3M+1}\}|=2$, $|B_1(y)\cap \{u_{3M+1},u'_{3M+1}\}|\leq 1$, $|T(y,2)\cap A^{\alpha}|=N$, and $|B_1(x)\cap A^{\alpha}|=0$. Hence, in combination with (\ref{eqn:vc:14}) and (\ref{eqn:vc:15}), $T$ satisfies the non-clashing condition for all pairs of balls in $\B(G)$, where one of the balls has radius~$1$ and the other has radius~$2$, that are centered at vertices in
\begin{equation}\label{eqn:vc:24}
V^W\cup U \cup \{u_{3M+1},u'_{3M+1}\}.
\end{equation}
For all $x\in V^W$ and $y\in V(G)\setminus (W \cup V^W\cup U \cup \{u_{3M+1},u'_{3M+1},z\})$, we have that $x\in T(x,2)$, $x\notin B_1(y)$, $y\in T(y,2)$, and $y\notin B_1(x)$. Hence, in combination with (\ref{eqn:vc:14}) and (\ref{eqn:vc:16})--(\ref{eqn:vc:18}), $T$ satisfies the non-clashing condition for all pairs of balls in $\B(G)$, where one of the balls has radius~$1$ and the other has radius~$2$, that are centered at vertices in
\begin{equation}\label{eqn:vc:25}
V^W\cup C \cup C^{\alpha} \cup C^{\beta} \cup C^{\gamma};
\end{equation}
\begin{equation}\label{eqn:vc:26}
V^W \cup V^{\alpha} \cup V^{\beta} \cup V^{\gamma} \cup V^{\alpha,*} \cup V^{\beta,*} \cup V^{\gamma,*};
\end{equation}
\begin{equation}\label{eqn:vc:27}
V^W \cup A^{\alpha} \cup A^{\beta} \cup A^{\gamma}.
\end{equation}
For all $x\in U \cup \{u_{3M+1},u'_{3M+1}\}$ and $y\in V(G)\setminus (W \cup V^W\cup U \cup \{u_{3M+1},u'_{3M+1},z\})$, $|T(x,2)\cap W|\geq 3M-1$, $|B_1(y)\cap W|=0$, $|T(y,2)\cap (A^{\alpha} \cup A^{\beta} \cup A^{\gamma})|\geq 1$, and $|B_1(x)\cap (A^{\alpha} \cup A^{\beta} \cup A^{\gamma})|=0$. Hence, in combination with (\ref{eqn:vc:15})--(\ref{eqn:vc:18}), $T$ satisfies the non-clashing condition for all pairs of balls in $\B(G)$, where one of the balls has radius~$1$ and the other has radius~$2$, that are centered at vertices in
\begin{equation}\label{eqn:vc:28}
U \cup \{u_{3M+1},u'_{3M+1}\} \cup C \cup C^{\alpha} \cup C^{\beta} \cup C^{\gamma};
\end{equation}
\begin{equation}\label{eqn:vc:29}
U \cup \{u_{3M+1},u'_{3M+1}\} \cup V^{\alpha} \cup V^{\beta} \cup V^{\gamma} \cup V^{\alpha,*} \cup V^{\beta,*} \cup V^{\gamma,*};
\end{equation}
\begin{equation}\label{eqn:vc:30}
U \cup \{u_{3M+1},u'_{3M+1}\} \cup A^{\alpha} \cup A^{\beta} \cup A^{\gamma}.
\end{equation}
For all $x\in C \cup C^{\alpha} \cup C^{\beta} \cup C^{\gamma}$ and $y\in V^{\alpha} \cup V^{\beta} \cup V^{\gamma} \cup V^{\alpha,*} \cup V^{\beta,*} \cup V^{\gamma,*}$, we have that $w_1\in T(x,2)$, $w_1\notin B_1(y)$, $z\in T(y,2)$, and $z\notin B_1(x)$. Hence, in combination with (\ref{eqn:vc:16}) and (\ref{eqn:vc:17}), $T$ satisfies the non-clashing condition for all pairs of balls in $\B(G)$, where one of the balls has radius~$1$ and the other has radius~$2$, that are centered at vertices in
\begin{equation}\label{eqn:vc:31}
C \cup C^{\alpha} \cup C^{\beta} \cup C^{\gamma} \cup V^{\alpha} \cup V^{\beta} \cup V^{\gamma} \cup V^{\alpha,*} \cup V^{\beta,*} \cup V^{\gamma,*}.
\end{equation}
For all $x\in C \cup C^{\alpha} \cup C^{\beta} \cup C^{\gamma}$ and $y\in A^{\alpha} \cup A^{\beta} \cup A^{\gamma}$, we have that $x\in T(x,2)$, $x\notin B_1(y)$, $y\in T(y,2)$, and $y\notin B_1(x)$. Hence, in combination with (\ref{eqn:vc:16}) and (\ref{eqn:vc:18}), $T$ satisfies the non-clashing condition for all pairs of balls in $\B(G)$, where one of the balls has radius~$1$ and the other has radius~$2$, that are centered at vertices in
\begin{equation}\label{eqn:vc:32}
C \cup C^{\alpha} \cup C^{\beta} \cup C^{\gamma} \cup A^{\alpha} \cup A^{\beta} \cup A^{\gamma}.
\end{equation}
For all $x\in V^{\alpha} \cup V^{\beta} \cup V^{\gamma} \cup V^{\alpha,*} \cup V^{\beta,*} \cup V^{\gamma,*}$ and $y\in A^{\alpha} \cup A^{\beta} \cup A^{\gamma}$, we have that $w_1\in T(x,2)$, $w_1\notin B_1(y)$, $z\in T(y,2)$, and $z\notin B_1(x)$. Hence, in combination with (\ref{eqn:vc:17}) and (\ref{eqn:vc:18}), $T$ satisfies the non-clashing condition for all pairs of balls in $\B(G)$, where one of the balls has radius~$1$ and the other has radius~$2$, that are centered at vertices in
\begin{equation}\label{eqn:vc:33}
V^{\alpha} \cup V^{\beta} \cup V^{\gamma} \cup V^{\alpha,*} \cup V^{\beta,*} \cup V^{\gamma,*} \cup A^{\alpha} \cup A^{\beta} \cup A^{\gamma}.
\end{equation}
Combining (\ref{eqn:vc:12}) and (\ref{eqn:vc:19})--(\ref{eqn:vc:33}), $T$ satisfies the non-clashing condition for all pairs of balls in $\B(G)$, where one of the balls has radius~$1$ and the other has radius~$2$, that are centered at vertices in
\begin{equation}\label{eqn:vc:34}
V(G).
\end{equation}
It remains to prove that $T$ satisfies the non-clashing condition for all pairs of balls of radius~$2$ in $\B(G)$.
For all $\ell\in [3M]$, $T(w_{\ell},2)\cap U=B_2(w_{\ell})\cap U=U\setminus \{u_{\ell}\}$. Hence, $T$ satisfies the non-clashing condition for all pairs of balls of radius~$2$ in $\B(G)$ that are centered at vertices in
\begin{equation}\label{eqn:vc:35}
W.
\end{equation}
For any $x,y\in V^W$, $B_2(x) = B_2(y)$. Hence, $T$ satisfies the non-clashing condition for all pairs of balls of radius~$2$ in $\B(G)$ that are centered at vertices in
\begin{equation}\label{eqn:vc:36}
V^W.
\end{equation}
For all $\ell\in [3M]$, $T(u_{\ell},2)\cap W=B_2(u_{\ell})\cap W=W\setminus \{w_{\ell}\}$, and $W\subset T(u_{3M+1},2)$. Hence, $T$ satisfies the non-clashing condition for all pairs of balls of radius~$2$ in $\B(G)$ that are centered at vertices in
\begin{equation}\label{eqn:vc:37}
U \cup \{u_{3M+1}\}.
\end{equation}
For any $x,y\in C \cup C^{\alpha} \cup C^{\beta} \cup C^{\gamma}$, $B_2(x)\setminus (A^{\alpha} \cup A^{\beta} \cup A^{\gamma}) = B_2(y)\setminus (A^{\alpha} \cup A^{\beta} \cup A^{\gamma})$, $B_2(x)\cap (A^{\alpha} \cup A^{\beta} \cup A^{\gamma}) \subset T(x,2)$, and $B_2(y)\cap (A^{\alpha} \cup A^{\beta} \cup A^{\gamma}) \subset T(y,2)$. Hence, $T$ satisfies the non-clashing condition for all pairs of balls of radius~$2$ in $\B(G)$ that are centered at vertices in
\begin{equation}\label{eqn:vc:38}
C \cup C^{\alpha} \cup C^{\beta} \cup C^{\gamma}.
\end{equation}
For any $x,y\in V^{\alpha} \cup V^{\beta} \cup V^{\gamma} \cup V^{\alpha,*} \cup V^{\beta,*} \cup V^{\gamma,*}$, $B_2(x)\setminus (A^{\alpha} \cup A^{\beta} \cup A^{\gamma}) = B_2(y)\setminus (A^{\alpha} \cup A^{\beta} \cup A^{\gamma})$, $B_2(x)\cap (A^{\alpha} \cup A^{\beta} \cup A^{\gamma}) = B_1(x)\cap (A^{\alpha} \cup A^{\beta} \cup A^{\gamma}) \subset T(x,2)$, and $B_2(y)\cap (A^{\alpha} \cup A^{\beta} \cup A^{\gamma}) = B_1(y)\cap (A^{\alpha} \cup A^{\beta} \cup A^{\gamma}) \subset T(y,2)$. Hence, $T$ satisfies the non-clashing condition for all pairs of balls of radius~$2$ in $\B(G)$ that are centered at vertices in
\begin{equation}\label{eqn:vc:39}
V^{\alpha} \cup V^{\beta} \cup V^{\gamma} \cup V^{\alpha,*} \cup V^{\beta,*} \cup V^{\gamma,*}.
\end{equation}
For any $x,y\in A^{\alpha} \cup A^{\beta} \cup A^{\gamma}$, we have that $B_1(x)\subset T(x,2)$, $B_1(y)\subset T(y,2)$, and $B_2(x)\neq B_2(y)$ if and only if $B_1(x)\neq B_1(y)$. Hence, $T$ satisfies the non-clashing condition for all pairs of balls of radius~$2$ in $\B(G)$ that are centered at vertices in
\begin{equation}\label{eqn:vc:40}
A^{\alpha} \cup A^{\beta} \cup A^{\gamma}.
\end{equation}
For all $x\in W$ and $y\in V^W$, $B_2(x)\setminus U = B_2(y)\setminus U$ and $U\subset T(y,2)$. Hence, in combination with (\ref{eqn:vc:35}) and (\ref{eqn:vc:36}), $T$ satisfies the non-clashing condition for all pairs of balls of radius~$2$ in $\B(G)$ that are centered at vertices in
\begin{equation}\label{eqn:vc:41}
W \cup V^W.
\end{equation}
For all $x\in W \cup V^W$ and $y\in V(G)\setminus (W \cup V^W \cup \{u'_{3M+1}\} \cup \{z\})$, $|T(y,2)\cap (A^{\alpha} \cup A^{\beta} \cup A^{\gamma})|\geq 1$ and $|B_2(x)\cap (A^{\alpha} \cup A^{\beta} \cup A^{\gamma})|=0$. Hence, in combination with (\ref{eqn:vc:37})--(\ref{eqn:vc:41}), $T$ satisfies the non-clashing condition for all pairs of balls of radius~$2$ in $\B(G)$ that are centered at vertices in
\begin{equation}\label{eqn:vc:42}
W \cup V^W \cup U \cup \{u_{3M+1}\};
\end{equation}
\begin{equation}\label{eqn:vc:43}
W \cup V^W \cup C \cup C^{\alpha} \cup C^{\beta} \cup C^{\gamma};
\end{equation}
\begin{equation}\label{eqn:vc:44}
W \cup V^W \cup V^{\alpha} \cup V^{\beta} \cup V^{\gamma} \cup V^{\alpha,*} \cup V^{\beta,*} \cup V^{\gamma,*};
\end{equation}
\begin{equation}\label{eqn:vc:45}
W \cup V^W \cup A^{\alpha} \cup A^{\beta} \cup A^{\gamma}.
\end{equation}
For all $x\in U \cup \{u_{3M+1}\}$ and $y\in C \cup C^{\alpha} \cup C^{\beta} \cup C^{\gamma}$, we have that $\pi' \subset T(x,2)$ and $B_2(y)\cap \pi' \neq \pi'$. Indeed, for all $y\in C \cup C^{\alpha} \cup C^{\beta} \cup C^{\gamma}$, the only vertices in $A^{\alpha} \cup A^{\beta} \cup A^{\gamma}$ that are not in $B_2(y)$ are those corresponding to the literals contained in the clause corresponding to $y$ in $\phi$ (as mentioned before, for each $i\in [N]$ and $\delta \in \{\alpha, \beta, \gamma\}$, the vertex $c_i^{\delta}$ can be thought of as a clause containing only the positive and negative literals of $x_i^{\delta}$). The property then follows since $\pi'$ corresponds to the satisfying assignment $\pi$ for $\phi$. Hence, in combination with (\ref{eqn:vc:37}) and (\ref{eqn:vc:38}), $T$ satisfies the non-clashing condition for all pairs of balls of radius~$2$ in $\B(G)$ that are centered at vertices in
\begin{equation}\label{eqn:vc:46}
U \cup \{u_{3M+1}\} \cup C \cup C^{\alpha} \cup C^{\beta} \cup C^{\gamma}.
\end{equation}
For all $x\in U \cup \{u_{3M+1}\} \cup C \cup C^{\alpha} \cup C^{\beta} \cup C^{\gamma}$ and $y\in A^{\alpha} \cup A^{\beta} \cup A^{\gamma}$, $|T(x,2)\cap W|\geq 1$ and $|B_2(y)\cap W|=0$. Hence, in combination with (\ref{eqn:vc:40}) and (\ref{eqn:vc:46}), $T$ satisfies the non-clashing condition for all pairs of balls of radius~$2$ in $\B(G)$ that are centered at vertices in
\begin{equation}\label{eqn:vc:47}
U \cup \{u_{3M+1}\} \cup C \cup C^{\alpha} \cup C^{\beta} \cup C^{\gamma} \cup A^{\alpha} \cup A^{\beta} \cup A^{\gamma}.
\end{equation}
For all $x\in U \cup \{u_{3M+1}\} \cup C \cup C^{\alpha} \cup C^{\beta} \cup C^{\gamma} \cup A^{\alpha} \cup A^{\beta} \cup A^{\gamma}$, $y\in V^{\alpha} \cup V^{\beta} \cup V^{\gamma} \cup V^{\alpha,*} \cup V^{\beta,*} \cup V^{\gamma,*}$, and $\delta \in \{\alpha, \beta, \gamma\}$, $|T(x,2)\cap A^{\delta}|\geq 1$ and, for some $\delta' \in \{\alpha, \beta, \gamma\}$, $|B_2(y)\cap A^{\delta'}|=0$. Hence, in combination with (\ref{eqn:vc:39}) and (\ref{eqn:vc:47}), $T$ satisfies the non-clashing condition for all pairs of balls of radius~$2$ in $\B(G)$ that are centered at vertices in
\begin{equation}\label{eqn:vc:48}
U \cup \{u_{3M+1}\} \cup C \cup C^{\alpha} \cup C^{\beta} \cup C^{\gamma} \cup A^{\alpha} \cup A^{\beta} \cup A^{\gamma} \cup V^{\alpha} \cup V^{\beta} \cup V^{\gamma} \cup V^{\alpha,*} \cup V^{\beta,*} \cup V^{\gamma,*}.
\end{equation}
Combining (\ref{eqn:vc:42})--(\ref{eqn:vc:45}) and (\ref{eqn:vc:48}), $T$ satisfies the non-clashing condition for all pairs of balls of radius~$2$ in $\B(G)$ that are centered at vertices in
\begin{equation}\label{eqn:vc:49}
V(G).
\end{equation}
Combining (\ref{eqn:vc:11}), (\ref{eqn:vc:34}), and (\ref{eqn:vc:49}), we get that $T$ is an $\NCTMp$ of size at most $k$ for $\B(G)$.
\end{proof}

\mainbackward*

\begin{proof}
Suppose that $T$ is an $\NCTMp$ for $\B(G)$ of size $k$. We first prove some properties of $T$. For each $\ell \in [3M]$, to ensure that $T$ satisfies the non-clashing condition for the pair $B_2(u_{3M+1})=V(G)$ and $B_2(u_{\ell})=V(G)\setminus \{w_{\ell}\}$, we have that $|T(u_{3M+1},2)\cap W|=3M$. For each $i \in [N]$ and $\delta \in \{\alpha, \beta, \gamma\}$, to ensure that $T$ satisfies the non-clashing condition for the pair $B_2(u_{3M+1})=V(G)$ and $B_2(c^{\delta}_i)=V(G)\setminus \{t^{\delta}_{2i},f^{\delta}_{2i-1}\}$, we have that $|T(u_{3M+1},2)\cap \{t^{\delta}_{2i},f^{\delta}_{2i-1}\}|\geq 1$. Since $k=3N+3M$, by the two previous arguments, it must be that $|T(u_{3M+1},2)\cap \{t^{\delta}_{2i},f^{\delta}_{2i-1}\}|=1$ for each $i\in [N]$ and $\delta \in \{\alpha, \beta, \gamma\}$.

From $T(u_{3M+1},2)$, we extract an assignment $\pi: X^{\alpha} \cup X^{\beta} \cup X^{\gamma} \rightarrow \{\text{True},\text{False}\}$ for~$\phi$.
For each $i \in [N]$ and $\delta \in \{\alpha, \beta, \gamma\}$, if $T(u_{3M+1},2)\cap \{t^{\delta}_{2i},f^{\delta}_{2i-1}\} = \{t^{\delta}_{2i}\}$, then set $\pi(x^{\delta}_i)=\text{True}$, and otherwise, set $\pi(x^{\delta}_i)=\text{False}$.
Thus, each variable in $\phi$ is assigned exactly one truth value by $\pi$.
It remains to show that $\pi$ is a satisfying assignment for $\phi$.

Recall that, for each $\ell \in [M]$, we have that $c_{\ell}$ is the vertex in $G$ corresponding to the clause $C_{\ell}$ in $\phi$. By the construction, for each $i\in [N]$ and $\delta \in \{\alpha, \beta, \gamma\}$, if $x^{\delta}_i$ appears as a positive (negative, respectively) literal in $C_{\ell}$, then $t^{\delta}_{2i} \notin B_2(c_{\ell})$ ($f^{\delta}_{2i-1} \notin B_2(c_{\ell})$, respectively). Moreover, these are the only vertices of $G$ that are not in $B_2(c_{\ell})$.
Since, for all $\ell \in [M]$, $T$ satisfies the non-clashing condition for the pair $B_2(u_{3M+1})=V(G)$ and $B_2(c_{\ell})$, we have that $T(u_{3M+1},2)$ contains at least one of the vertices missing from $B_2(c_{\ell})$. Further, for each of these vertices in $T(u_{3M+1},2)$ that are missing from $B_2(c_{\ell})$, $\pi$ assigns the corresponding truth value of that vertex to the corresponding variable.
Since this is true for the clause vertices in $C$ corresponding to all the clauses in $\phi$, we have that $\pi$ is a satisfying assignment for $\phi$.
\end{proof}

\kernel*

\begin{proof}
We begin by proving that \srmfull\ admits a kernelization algorithm outputting a kernel with $2^{\mathcal{O}(\vc)}$ vertices. Given a graph $G$, let $X\subseteq V(G)$ be a minimum vertex cover of $G$, that is, $I=:V(G)\setminus X$ is an independent set. If a minimum vertex cover is not given, then we can compute a $2$-approximate vertex cover in polynomial time. The kernelization algorithm exhaustively applies Reduction Rule~\ref{red-rule} to $G$, which is safe by Lemma~\ref{red-rule-safe}. Now, for any instance on which Reduction Rule~\ref{red-rule} cannot be applied, it holds that, for any $Y\subseteq X$, there are at most $2^{\vc}+1$ vertices in $I$ whose open neighborhoods are exactly $Y$. Since there are $2^{\vc}$ distinct subsets of vertices of $X$, there are at most $2^{\vc}\cdot (2^{\vc}+1) + \vc = 2^{\mathcal{O}(\vc)}$ vertices in the reduced instance.

To obtain an algorithm running in time $2^{2^{\mathcal{O}(\vc)}}\cdot n^{\mathcal{O}(1)}$, one can apply the above (polynomial-time) kernelization algorithm to the graph $G$, and then apply the algorithm from Proposition~\ref{exact-algo} to the resulting
kernel.
\end{proof}

\rulesafe*

\begin{proof}
Let $S\subseteq I$ be a set of $2^{|X|}+2$ vertices that are pairwise false twins. Let $T$ be an $\NCTMp$ of size at most $k$ for $\B(G)$.
Since $T$ satisfies the non-clashing condition, for any $u,v\in S$, at least one of the inclusions  $u\in T(u,1), v\in T(v,1)$ holds. Thus, there is
at most one vertex $w\in S$ such that $w\notin T(w,1)$. Note that, for any $u \in S$, $T(u,1) \subseteq B_1(u) \subseteq X\cup \{u\}$.
As there are at most $2^{|X|}$ distinct subsets of the vertices of $X$, and there is at most one vertex $w\in S$ such that $w\notin T(w,1)$,
since $|S|=2^{|X|}+2$, there exist two vertices $x,y\in S$ such that $x\in T(x,1)$, $y\in T(y,1)$, and $T(x,1)\setminus \{x\} = T(y,1)\setminus \{y\}$.
Pick any vertex  $z\in V(G)\setminus \{y\}$ and any $r\in \mathbb{N}$ such that $y\in T(z,r)$.
We assert that removing $y$ from $T(z,r)$ and adding another carefully chosen vertex $v$ to $T(z,r)$ maintains that $T$ is an $\NCTMp$ of size at most $k$ for $\B(G)$.
If it was not the case that $x$ was in $T(z,r)$, then $v=x$, and otherwise, $v$ is any other vertex in $S\setminus \{y\}$ (if $S\setminus \{y\}\subseteq T(z,r)$,
then $y$ is simply removed from $T(z,r)$ and no vertex is added to it).

Namely, let $T'$ be the map obtained from $T$ by applying the above procedure for
all $z\in V(G)\setminus \{y\}$ and any $r\in \mathbb{N}$ such that $y\in T(z,r)$. Note that $x\in T'(z,r)$ and $T(z,r)\setminus \{y\}\subseteq T'(z,r)$.
Clearly, $T'$ has size at most $k$, so it remains to show that $T'$ is an $\NCTMp$ for $\B(G)$.
The presence of $y$ in $T(z,r)$ could only be used to satisfy the non-clashing condition between $B_r(z)$ (which contains $S$) and a ball $B'$ that contains at most $1$ vertex from $S\setminus \{y\}$ since any ball in $G$ contains $0$, $1$ or $|S|$ vertices from $S$ as the vertices of $S$ are pairwise false twins. If $|B'\cap S|=0$, then $T'(z,r)$ satisfies the non-clashing condition for the pair $B_r(z)$ and $B'$ since $x\in T'(z,r)\cap S$.
Otherwise, $|B'\cap S|=1$, and so, $B'$ is a ball of radius~$0$~or $1$ centered at a vertex in $S$. In this case, $T'(z,r)$ clearly satisfies the non-clashing condition for the pair $B_r(z)$ and $B'$, as long as $B'$ is not the ball of radius~$0$~or $1$ centered at $x$. Thus, let $B'\in \{B_0(x),B_1(x)\}$. For $T$ to be an $\NCTMp$ for $\B(G)$, it must be that $T(z,r)\neq \{x\}$, and thus, $T'(z,r)\neq \{x\}$, since otherwise $T$ would not satisfy the non-clashing condition for the pair $B_r(z)$ and $B_0(x)$. Hence, let $B'=B_1(x)$. If there exists $w_1\in T(z,r)$ such that $w_1\notin B'\cup \{y\}$, then $T'$ satisfies the non-clashing condition for the pair $B_r(z)$ and $B'$ since $T(z,r)\setminus \{y\}\subseteq T'(z,r)$.
So, assume no such vertex $w_1$ exists.
Similarly, if there exists $w_2\in T(x,1)$ such that $w_2\notin B_r(z)$, then $T'$ satisfies the non-clashing condition for the pair $B_r(z)$ and $B'$.
So, assume no such vertex $w_2$ exists.
Then, it must be that $T(z,r)\setminus \{y\} \subseteq B_1(x)$ and $T(x,1) \subseteq B_r(z)$.
Thus, $T(y,1) \subseteq B_r(z)$ since $T(x,1)\setminus \{x\}=T(y,1)\setminus \{y\}$ and $x,y\in B_r(z)$.
Since $T$ is an $\NCTMp$ for $\B(G)$, $T(z,r) \setminus B_1(y)\neq \varnothing$.
Let the vertex $s$ be in $T(z,r) \setminus B_1(y)$, and note that $s\in T'(z,r) \setminus B_1(y)$ since $T(z,r)\setminus \{y\}\subseteq T'(z,r)$.
If $s\neq x$, then $T'(z,r)$ satisfies the non-clashing condition for the pair $B_r(z)$ and $B_1(x)$.
If $s=x$, then there exists $t\in T'(z,r)\cap (S\setminus \{x,y\})$ since either $v=t$ or $t$ was already in $T(z,r)$.
In this case, $T'(z,r)$ satisfies the non-clashing condition for the pair $B_r(z)$ and $B_1(x)$.
Consequently, $T'$ is an $\NCTMp$ for $\B(G)$.
Since $y$ is not contained in $T'(z,r)$, then $T'$ restricted to the vertices of $G\setminus \{y\}$ is an $\NCTMp$ of size at most $k$ for $\B(G\setminus\{y\})$.

For the reverse direction, let $T'$ be an $\NCTMp$ of size at
most $k$ for $\B(G\setminus \{y\})$. Without loss of generality, assume that
$T'(z,r) \neq \varnothing$ for any $z \in V(G)$ and any integer $r \geq 0$. First,
note that the addition of  $y$ does not make any two balls
that were the same in $G\setminus \{y\}$ become distinct in $G$.  Indeed, if both balls contained every vertex in
$S\setminus \{y\}$, then they will both contain $y$; if both balls did
not contain any vertex in $S\setminus \{y\}$, then neither of them
will contain $y$; if both balls contained exactly one vertex in
$S\setminus \{y\}$, then neither of them will contain $y$.
Hence, it suffices to extend $T'$ to an $\NCTMp$ $T$ of size at
most $k$ for $\B(G)$, by defining $T(y,r)$ for all $r\in \mathbb{N}$
so that $T$ satisfies the non-clashing condition for any
pair of balls, where one ball  is centered in $y$.  Thus, let
$T(z,r):=T'(z,r)$ for all $z\in V(G)\setminus \{y\}$ and
$r\in \mathbb{N}$. As before, let $x\in S$ be such that $x\in T(x,1)$
(there must be such an $x$ since there is at most one vertex $w\in S$
such that $w\notin T(w,1)$). Set $T(y,0):=\{y\}$,
$T(y,1) := \{y\} \cup (T'(x,1)\setminus \{x\})$, and
$T(y,r) = T'(x,r)$ for all integers $r \geq 2$.  Note that
$B_1(x)\setminus \{x\}=B_1(y)\setminus \{y\}$ and $B_r(x)=B_r(y)$ for
all integers $r\geq 2$.

To show that $T$ is an $\NCTMp$ for $\B(G)$, we have to show that it
satisfies the non-clashing condition for a ball $B\in \{B_0(y),B_1(y)\}$ and any other ball $B' = B_r(z)$.
First, let $z = y$. Since $T'$ satisfies the non-clashing
condition for $B_{0}(x)$ and $B_{r}(x)$ for any $r > 0$, $T'(x,r) \setminus \{x\} \neq \varnothing$, and thus,
$T(y,r) \setminus \{y\} \neq \varnothing$ since
$T'(x,r) \setminus \{x\} \subseteq T(y,r) \setminus \{y\}$ for all
$r > 0$. Moreover, since $T'$ satisfies the non-clashing condition
for $B_1(x)$ and $B_r(x)$ for any $r > 1$,
there exists $u \in T'(x,r) \setminus B_1(x)$. Note that $u \neq y$,
and thus, $u \in T(y,r) \setminus B_1(y)$. Consequently, $T$ satisfies the
non-clashing condition for a ball $B$ in $\{B_0(y),B_1(y)\}$ and any
other ball $B_r(y)$.
Now, let $z = x$.  If $r \leq 1$, then
$y \in T(y,r) \setminus B_r(x)$. If $r \geq 2$, then $B_r(x) = B_r(y)$
and, by the previous case, $T$ satisfies the non-clashing condition for
$B \in \{B_0(y),B_1(y)\}$ and $B_r(x)$.
Finally, let  $z \notin \{x,y\}$.  Since $T(z,r) = T'(z,r)$ is
non-empty and does not contain $y$, we have that
$T(z,r) \setminus B_0(y) \neq \varnothing$. If
$T(z,r) \setminus B_1(y) \neq \varnothing$, then $T$ satisfies the
non-clashing condition for $B_1(y)$ and $B_r(z)$.  Suppose now that
$T(z,r) \subseteq B_1(y)$. Since $y \notin T(z,r) = T'(z,r)$,
$T'(z,r) \subseteq B_1(y) \setminus \{y\} \subseteq B_1(x)$. Since
$T'$ satisfies the non-clashing condition for $B_r(z)$ and $B_1(x)$,
there exists $u \in T'(x,1)\setminus B_r(z)$. If
$u \neq x$, then $u \in T(y,1) \setminus B_r(z)$, and if $u = x$, then
$B_r(z)$ contains neither $x$ nor $y$, and thus,
$y \in T(y,1)\setminus B_r(z)$. In any case, $T$ satisfies the
non-clashing condition for $B_1(y)$ and $B_r(z)$.
Hence, $T$ is an $\NCTMp$ of size at most $k$
for $\B(G)$.
\end{proof}


We continue with Lemma~\ref{min-ball}, which we reformulate and prove in a more general form through the following lemma and its corollary: 

\begin{lemma}{(Lemma~11, \cite{ChChMc})}\label{lem-xinC} Let $B\in \B$, $X$ be a realizable sample for $B$, and $\{ u,v\}$ be a diametral pair of $X^+$ in $\KKK$. If $B_r(x)\in [B]$, $x'$ is the apex of $x$ with respect to $u$ and $v$, and $r'=r-d(x,x')$,
then $X$  is a realizable sample for $B_{r'}(x')$. Consequently, the path of cycles $C(u,v)$ contains a center of a ball realizing $X$. 
\end{lemma}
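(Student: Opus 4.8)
The plan is to verify directly that the sample $X$ is realizable for the ball $B_{r'}(x')$, treating its negative and positive examples separately. Since $x'\in I(x,u)$ and $u\in X^+\subseteq B=B_r(x)$, we have $d(x,x')\le d(x,u)\le r$, so $r'\ge 0$; and from $d(x,z)\le d(x,x')+d(x',z)$ for every vertex $z$ we get $B_{r'}(x')\subseteq B_r(x)=B$. As $X$ is realizable for $B$, no negative example of $X$ lies in $B$, hence none lies in $B_{r'}(x')$, so the negative examples are already fine. For the positive examples it remains to show that $d(x,x')+d(x',w)\le r$ for every $w\in X^+$.

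This reduces to the following structural claim about cacti (which uses neither $B$ nor the diametrality of $\{u,v\}$): for every vertex $w$ of $\KKK$, at least one of $I(x,w)$, $I(u,w)$, $I(v,w)$ contains $x'$. I would prove it from the block structure of $\KKK$: by maximality of $I(x,x')$, the geodesics issued from $x'$ towards $x$, towards $u$ and towards $v$ leave $x'$ along three pairwise different ``directions'', the $u$- and $v$-directions splitting exactly at $x'$; and if $x'$ lies on a cycle $C$, then the two arcs of $C$ leaving $x'$ are precisely the $u$-direction and the $v$-direction. A case distinction on which direction from $x'$ the vertex $w$ lies in then forces $x'$ onto one of the three geodesics. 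Granting the claim, we finish as follows. If $x'\in I(x,w)$, then $d(x,x')+d(x',w)=d(x,w)\le r$ because $w\in B$. If $x'\in I(u,w)$, then, using $d(u,w)\le d(u,v)$ (as $\{u,v\}$ is a diametral pair of $X^+\ni w$) together with the triangle inequality $d(u,v)\le d(u,x')+d(x',v)$, we obtain $d(x',w)=d(u,w)-d(u,x')\le d(u,v)-d(u,x')\le d(x',v)$, whence $d(x,x')+d(x',w)\le d(x,x')+d(x',v)=d(x,v)\le r$, the middle equality holding because $x'\in I(x,v)$ and the last because $v\in B$; the case $x'\in I(v,w)$ is symmetric. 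In all cases $d(x,x')+d(x',w)\le r$, so $w\in B_{r'}(x')$, and therefore $X$ is realizable for $B_{r'}(x')$.

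For the concluding assertion it suffices to show $x'\in C(u,v)$. The set $C(u,v)$ is gated, hence convex, and contains $u$ and $v$, so it contains the gate $g$ of $x$ in $C(u,v)$; since $u,v\in C(u,v)$ give $g\in I(x,u)\cap I(x,v)=I(x,x')$, the decompositions $I(x,u)=I(x,g)\cup I(g,u)$ and $I(x,v)=I(x,g)\cup I(g,v)$ yield $I(x,x')=I(x,g)\cup\bigl(I(g,u)\cap I(g,v)\bigr)$, and $I(g,u)\cap I(g,v)\subseteq C(u,v)$ by convexity, so $x'\in C(u,v)$. Thus $B_{r'}(x')$ is a ball realizing $X$ whose center $x'$ lies on the path of cycles $C(u,v)$. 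I expect the structural claim of the second paragraph to be the main obstacle: although the picture is clear from the tree-like decomposition of a cactus into its blocks, a careful write-up has to handle separately the cycle (if any) through $x'$ and the possibility of several geodesics between a pair of vertices.
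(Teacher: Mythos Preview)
The paper does not give its own proof of this lemma; it is quoted as Lemma~11 of~\cite{ChChMc} and used as a black box, so there is nothing in the present paper to compare your attempt against directly.

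That said, your plan is correct. The inclusion $B_{r'}(x')\subseteq B_r(x)$ handles $X^-$ immediately, and the three-case analysis for $w\in X^+$ based on the structural claim is clean and valid: in each case the chain of inequalities you write is right, with the diametral-pair hypothesis entering exactly where it should. For the concluding assertion, your line works, though note that the equality $I(x,u)\cap I(x,v)=I(x,x')$ you invoke is itself a cactus-specific fact (in general graphs the intersection of two intervals from a common endpoint need not be an interval) and sits at the same level of difficulty as your structural claim. A slightly leaner variant: once you know $g\in I(x,x')$, you get $x'\in I(g,u)$ from $g\in I(x,x')\subseteq I(x,u)$ together with $x'\in I(x,u)$, and then $x'\in I(g,u)\subseteq C(u,v)$ by convexity; this avoids the set-algebra with the decompositions.

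As you anticipate, the structural claim is the real content. It is true in cacti, and your sketch (cut-vertex case via components; cycle case via the two arcs at $x'$ and the gates of $u,v,w$ in that cycle) is the right shape; the apex maximality is what forces the $u$- and $v$-directions at $x'$ to be the two arcs of the cycle. A full proof needs exactly the gate-and-block bookkeeping the paper uses in the proof of Theorem~\ref{treesofcycles}, so your assessment of where the work lies is accurate.
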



\begin{corollary} \label{diam2} If $\{ u,v\}$ is a diametral pair of a ball $B_r(x)$, $x'$ is the apex of $x$ with respect to $u$ and $v$, and $r'=r-d(x,x')$, then $B_{r'}(x')=B_r(x)$. In particular, if $B\in \B$, $\{ u,v\}$ is a diametral pair of $B$, and $B_r(x)$ is a minimal ball of $[B]$, then $x$ belongs to $C(u,v)$.
\end{corollary}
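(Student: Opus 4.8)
The plan is to deduce both assertions from Lemma~\ref{lem-xinC}, instantiated at the ``full'' sign vector of the ball $B_r(x)$. For the first assertion, let $X$ be the sign vector on $V(\KKK)$ with $X^+:=B_r(x)$ and $X^-:=V(\KKK)\setminus B_r(x)$, so $\underline{X}=V(\KKK)$. Then $X$ is a realizable sample for $B:=B_r(x)$, $B_r(x)\in[B]$, and $\{u,v\}$ is a diametral pair of $X^+=B_r(x)$, so Lemma~\ref{lem-xinC} applies with the triple $x,u,v$ and gives that $X$ is a realizable sample for $B_{r'}(x')$, where $x'$ is the apex of $x$ with respect to $u$ and $v$ and $r'=r-d(x,x')$. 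Since $X$ signs every vertex of $\KKK$, the only set, hence the only ball, consistent with $X$ is $X^+$; therefore $B_{r'}(x')=X^+=B_r(x)$, which is the first claim.

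For the ``in particular'' part, let $B\in\B$, let $\{u,v\}$ be a diametral pair of $B$, and let $B_r(x)$ be a minimal ball of $[B]$. Applying the first claim to $B_r(x)$ gives $B_{r'}(x')=B_r(x)=B$ with $x'$ the apex of $x$ with respect to $u,v$ and $r'=r-d(x,x')\le r$. As $B_{r'}(x')\in[B]$ and $r$ is the minimum radius occurring in $[B]$, we get $r'\ge r$, hence $r'=r$, $d(x,x')=0$, and $x=x'$. It then suffices to recall that the apex of $x$ with respect to $u$ and $v$ always belongs to $C(u,v)$: the path of cycles $C(u,v)$ is gated, so the gate $g$ of $x$ in $C(u,v)$ satisfies $g\in I(x,w)$ for all $w\in C(u,v)$; in particular $g\in I(x,u)\cap I(x,v)$ and $I(x,g)\subseteq I(x,u)\cap I(x,v)$, so the unique apex $x'$, whose interval $I(x,x')$ is inclusion-maximal in $I(x,u)\cap I(x,v)$, has $g\in I(x,x')$; consequently there is an $(x,u)$-geodesic passing through $g$ and then $x'$, and its portion between $g$ and $u$ is a geodesic with both ends in the gated --- hence convex --- set $C(u,v)$, so $x'\in C(u,v)$. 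Together with $x=x'$ this yields $x\in C(u,v)$, establishing Corollary~\ref{diam2} and thereby Lemma~\ref{min-ball}.

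I do not expect a real obstacle here, as the argument is essentially bookkeeping on top of Lemma~\ref{lem-xinC}, whose ``consequently'' clause already encapsulates the geometric fact that the apex lies on $C(u,v)$. The one point to handle with care is that this clause only asserts the \emph{existence} of \emph{some} center of a realizing ball on $C(u,v)$, which is strictly weaker than the ``in particular'' statement; one must therefore keep track of the \emph{specific} apex $x'$ furnished by Lemma~\ref{lem-xinC} and use minimality of $r$ to force $x=x'$, rather than invoking the existence statement alone.
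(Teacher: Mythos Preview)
Your argument is correct and, for the first assertion, identical to the paper's: both apply Lemma~\ref{lem-xinC} to the full sign vector $X$ with $X^+=B_r(x)$ and $\underline{X}=V(\KKK)$, so that realizability of $X$ by $B_{r'}(x')$ forces $B_{r'}(x')=B_r(x)$. The paper does not spell out the ``in particular'' clause, and your added argument (minimality of $r$ forces $x=x'$, then the apex lies in $C(u,v)$ via gatedness and convexity) is sound; the one step you leave implicit is that \emph{uniqueness} of the apex makes the inclusion-maximal interval $I(x,x')$ the \emph{maximum} of the finite family $\{I(x,y):y\in I(x,u)\cap I(x,v)\}$, which is what yields $g\in I(x,g)\subseteq I(x,x')$.
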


\begin{proof} Set $X:=V(\KKK)$. By Lemma~\ref{lem-xinC}, $B_r(x)=X^+\subset B_{r'}(x')\subseteq B_r(x)$, yielding $B_{r'}(x')=B_r(x)$.
\end{proof}

\treesofcycles*

\begin{proof} First, analogous to the case of trees, $T$ is non-clashing for any pair of balls including a ball of radius $0$. Now, suppose that $B_1$ is defined by the minimal ball $B_{r_1}(x)\in [B_1]$ and let $u,v$ be the diametral pair of $B_1$ defining $T^+(B_1)$.
  Analogously, suppose that $B_2$ is defined by the minimal ball $B_{r_2}(y)\in [B_2]$. By contradiction, assume that $B_1\neq B_2$ and that $T$ does not satisfy the non-clashing condition for $B_1$ and $B_2$. Without loss of generality,
suppose that there exists a vertex $z \in B_2\setminus B_1$.

\begin{claim} \label{notartic}
The vertex $x$ is not a cut vertex of $C(u,v)$.
\end{claim}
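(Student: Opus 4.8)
The plan is to prove Claim~\ref{notartic} by contradiction: assume $x$ \emph{is} a cut vertex of the path of cycles $C(u,v)$ (recall $x\in C(u,v)$ by Corollary~\ref{diam2}) and derive a contradiction with the existence of $z\in B_2\setminus B_1$. The strategy is to play two facts against each other: a cut vertex of $C(u,v)$ must separate $u$ from $v$ in $\KKK$, yet the vertex $z$ forces both $u$ and $v$ to stay on the same side of $x$.

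First I would collect the elementary distance facts. Since $B_{r_1}(x)=B_1$ contains $u$ and $v$, we have $d(x,u)\le r_1$ and $d(x,v)\le r_1$, while $z\notin B_1=B_{r_1}(x)$ gives $d(x,z)\ge r_1+1$. Since $\{u,v\}$ is a diametral pair of $B_1$ and $\diam(B_1)=\diam(B_2)$, we have $d(u,v)=\diam(B_1)=\diam(B_2)$; and since $u,v\in\underline{T}(B_1)\subseteq B_1$, the standing hypothesis ``$q\in B_1\iff q\in B_2$ for all $q\in\underline{T}(B_1)\cup\underline{T}(B_2)$'' forces $u,v\in B_2$. As $z\in B_2$ too, $d(u,z)\le\diam(B_2)=d(u,v)$ and $d(v,z)\le\diam(B_2)=d(u,v)$. (If $x\in\{u,v\}$, say $x=u$, this is already absurd: $r_1+1\le d(x,z)=d(u,z)\le d(u,v)\le r_1$. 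So assume $x\notin\{u,v\}$.)

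Next I would show $x$ separates in $\KKK$ \emph{neither} $z$ from $u$ \emph{nor} $z$ from $v$. Indeed, if $x$ separated $z$ from $u$, every $u$--$z$ path would pass through $x$, so $d(u,z)=d(u,x)+d(x,z)\ge d(u,x)+r_1+1\ge d(u,x)+d(v,x)+1\ge d(u,v)+1>d(u,v)=\diam(B_2)$, contradicting $u,z\in B_2$; exchanging $u$ and $v$ rules out $x$ separating $z$ from $v$. Hence $z$ lies in the same component of $\KKK-x$ as $u$ and also as $v$, so $u$ and $v$ lie in one common component of $\KKK-x$.

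Finally I would invoke the structural fact that an articulation point of the path of cycles $C(u,v)$ lies on the $C(u)$--$C(v)$ path of $B(\KKK)$ and hence separates $u$ from $v$ in $\KKK$: this uses only that any path of cycles is gated in $\KKK$ (hence isometric, with its cut vertices being cut vertices of $\KKK$ appearing on every shortest $u$--$v$ path), a fact recalled just before Lemma~\ref{min-ball}. This contradicts the conclusion of the previous paragraph, so $x$ is not a cut vertex of $C(u,v)$. The only step that needs genuine care is this last structural claim about $C(u,v)$; the rest is the short triangle-inequality chain above, and the usual ``$z$ on the $u$-side versus the $v$-side of $x$'' case split disappears precisely because we prove $z$ is on neither.
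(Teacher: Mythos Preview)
Your proof is correct and follows essentially the same approach as the paper. The paper argues the contrapositive more directly: assuming $x$ disconnects $u$ and $v$, it observes that $x$ must then disconnect $z$ from one of them (say $u$, by symmetry), and derives the same chain $d(u,z)=d(u,x)+d(x,z)>r_1+d(x,u)\ge d(v,x)+d(x,u)\ge d(u,v)=\diam(B_2)$, contradicting $u,z\in B_2$. You instead prove both non-separations and conclude $u,v$ share a component of $\KKK-x$; this is the same triangle-inequality core, just packaged without the WLOG. Your version is slightly more explicit about the structural step (a cut vertex of $C(u,v)$ separates $u$ from $v$ in $\KKK$), which the paper uses implicitly.
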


\begin{proof} By Corollary~\ref{diam2}, $x$ belongs to $C(u,v)$. If $x$ disconnects $u$ and $v$, we can suppose, without loss of generality,
that $x$ also disconnects $z$ and $u$. Consequently, $d(z,u)=d(z,x)+d(x,u)>r_1+d(x,u) \geq d(v,x)+d(x,u)=d(v,u)$.
Thus, $\diam(B_2)\geq d(u,z)>d(u,v)=\diam(B_1)$, contrary to the assumption that $B_1$ and $B_2$ have the same diameter.
\end{proof}

Consequently, $x$ belongs to a unique (gated) cycle $C$ of $C(u,v)$. Let $y'$ and $z'$ be
the gates of $y$ and $z$ in $C$.  Recall also that $u'$ and $v'$ are the gates of $u$ and $v$ in $C$.
Since $\diam(B_1)=\diam(B_2)$ and $T^+(B_1)\subset B_2$, $u,v$ is also a diametral pair of $B_2$.  By Corollary~\ref{diam2}
applied to $B_2=B_{r_2}(y)$ and the diametral pair $u,v$ of $B_2$, we conclude  that $y\in C(u,v)$.
Therefore, either $y$ belongs to the cycle $C$ or the gate $y'$ of $y$ in  $C$ coincides with $u'$ or $v'$.

\begin{claim} \label{notinterval}
$z' \notin I(x,u') \cup I(x,v')$.
\end{claim}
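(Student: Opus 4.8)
The plan is to argue by contradiction. Suppose $z'\in I(x,u')$; the case $z'\in I(x,v')$ will follow by symmetry, exchanging the roles of $u$ and $v$ (and of $u'$ and $v'$). I will derive $d(v,z)>d(u,v)$, which is impossible: as already observed, $u,v$ is a diametral pair of $B_2$, so $\diam(B_2)=d(u,v)$, while $v,z\in B_2$ forces $d(v,z)\le\diam(B_2)$. The whole argument is a distance computation inside the gated cycle $C$ (on which $x$ lies, $x$ not being a cut vertex of $C(u,v)$ by Claim~\ref{notartic}) and the parts of $\KKK$ hanging off it, powered by the two inequalities $d(x,z)\ge r_1+1$ (because $z\notin B_1=B_{r_1}(x)$) and $d(x,u)\le r_1$ (because $u\in B_1$).

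First I would record the distance decompositions furnished by the gates $u',v'$ of $u,v$ in $C$ and $z'$ of $z$ in $C$ (recalling $x\in C$): namely $d(x,u)=d(x,u')+d(u',u)$, $d(x,z)=d(x,z')+d(z',z)$, $d(v,z')=d(v,v')+d(v',z')$, and $d(u,v)=d(u,u')+d(u',v')+d(v',v)$ — this last because a $(u,v)$-geodesic crosses $C$ along a $(u',v')$-geodesic, a consequence of the path-of-cycles structure of $C(u,v)$ together with $u',v'$ being gates. The hypothesis $z'\in I(x,u')$ supplies the extra identity $d(x,u')=d(x,z')+d(z',u')$. The only decomposition requiring care is $d(v,z)=d(v,z')+d(z',z)$: letting $\tilde z$ be the gate of $z$ in the gated set $C(u,v)$, one has $\tilde z\in I(v,z)$ and $\tilde z\in I(z',z)$, and since the gate of $\tilde z$ in $C$ is again $z'$, the vertex $z'$ separates $\tilde z$ from $v$ in $C(u,v)$ when $z'\ne v'$ (the point being trivial when $z'=v'$), so $z'\in I(v,\tilde z)$; combining the three identities gives $d(v,z)=d(v,z')+d(z',z)$.

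Now I would combine everything. From $d(x,z)\ge r_1+1$ we get $d(z',z)\ge r_1+1-d(x,z')$, and since $d(x,z')=d(x,u')-d(z',u')=\bigl(d(x,u)-d(u',u)\bigr)-d(z',u')\le r_1-d(u',u)-d(z',u')$, this yields $d(z',z)\ge d(u',u)+d(z',u')+1$. Substituting into $d(v,z)=d(v,v')+d(v',z')+d(z',z)$ and then subtracting $d(u,v)=d(u',u)+d(u',v')+d(v',v)$, the terms $d(u',u)$ and $d(v,v')=d(v',v)$ cancel and we are left with $d(v,z)-d(u,v)\ge d(v',z')+d(z',u')-d(u',v')+1\ge 1$, the last inequality being the triangle inequality $d(v',z')+d(z',u')\ge d(v',u')$. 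Thus $d(v,z)\ge\diam(B_2)+1$, contradicting $d(v,z)\le\diam(B_2)$. Hence $z'\notin I(x,u')$, and symmetrically $z'\notin I(x,v')$.

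I expect the technical heart to be the gated-structure bookkeeping behind the two less transparent decompositions — $d(v,z)=d(v,z')+d(z',z)$ (via the gate $\tilde z$ of $z$ in $C(u,v)$ and a short case analysis on where $\tilde z$ sits relative to $C$) and $d(u,v)=d(u,u')+d(u',v')+d(v',v)$; once these are in hand, the final estimate is a one-line use of the triangle inequality in $C$ together with $d(x,z)\ge r_1+1>d(x,u)\ge d(x,u')$. It is worth noting that none of these decompositions needs $x$ to sit in any particular position on $C$ — only that $x$ lies on a single cycle $C$ of $C(u,v)$, which is exactly what Claim~\ref{notartic} provides (itself relying, via Corollary~\ref{diam2}, on the minimality of $B_{r_1}(x)$ to place $x$ in $C(u,v)$ in the first place).
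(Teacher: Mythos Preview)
Your argument is correct and follows the same overall strategy as the paper: argue by contradiction from $z'\in I(x,u')$, decompose the relevant distances through the gates $u',v',z'$ in the cycle $C$, and contradict $d(v,z)\le d(u,v)=\diam(B_2)$. The paper, however, splits into two cases according to whether $x\in I(v',z')$: in the first case $x\in I(v,z)$ and a direct comparison yields $d(v,z)>d(v,u)$; in the second case $u'\in I(z',v')$ and two derived inequalities combine to give $d(u',z')<0$. Your single computation subsumes both cases via the triangle inequality $d(v',z')+d(z',u')\ge d(u',v')$ at the final step, which is a mild but genuine streamlining. Your justification of $d(v,z)=d(v,z')+d(z',z)$ through the gate $\tilde z$ of $z$ in $C(u,v)$ is also more explicit than the paper's one-line assertion that ``$z'$ and $v'$ separate $v$ and $z$''; note, incidentally, that under the hypothesis $z'\in I(x,u')$ together with $I(x,u')\cap I(x,v')=\{x\}$ and Claim~\ref{notartic}, the case $z'=v'$ you single out as ``trivial'' in fact cannot occur (it would force $v'=x$).
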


\begin{proof} Suppose by way of contradiction that $z' \in I(x,u')$. Then, $z'$ and $v'$ separate $v$ and $z$,
and thus, $d(v,z)=d(v,v')+d(v',z')+d(z',z)$. First, suppose that $x \in I(v',z')$. Then, $x\in I(v,z)$.
Since $u\in B_{r_1}(x)$ and $z\notin B_{r_1}(x)$, we obtain that $d(v,u)\leq d(v,x)+d(x,u)<d(v,x)+d(x,z)=d(v,z)$.
Consequently, $\diam (B_2)>\diam (B_1)$, a contradiction. Now, suppose that
$x \notin I(v',z')$. This implies that $u'\in I(z',v')$. Since $u,v$ is a diametral pair of $B_2$, we obtain
that $d(v,v')+d(v',u')+d(u',z')+d(z',z)=d(v,z)\leq d(v,u)=d(v,v')+d(v',u')+d(u',u)$, yielding
$d(u',z')+d(z',z) \leq d(u',u)$. Since $z'\in I(x,u')$, $u\in B_{r_1}(x)$, and $z\notin B_{r_1}(x)$, we obtain that
$d(x,z')+d(z',u')+d(u',u)=d(x,u) \leq r_1<d(x,z)=d(x,z')+d(z',z)$, yielding  $d(z',u')+d(u',u)<d(z',z)$.
From the inequalities $d(u',z')+d(z',z) \leq d(u',u)$ and $d(z',u')+d(u',u)<d(z',z)$, we obtain that $d(u',z')<0$, a contradiction.
\end{proof}

Since $x$ is the apex of $x$ with respect to $u$ and $v$, the shortest
paths $I(x,u')$ and $I(x,v')$ intersect only in $x$, and their union
is the $(u',v')$-path passing via $x$.  By Claim~\ref{notinterval},
$z'$ belongs to the complementary $(u',v')$-path $P$ of $C$ and
$I(x,z')\cap \{ u',v'\}\ne \varnothing$. Hence, the set
$Z(x,u,v)$ is non-empty. Indeed, let $w$ be a vertex of $I(x,z)$ at
distance $r_1+1$ from $x$. Then, either $z'$ is the gate of $w$ in $C$
or $w$ is a vertex of the path $P$, showing that $w\in
Z(x,u,v)$. Consequently, $Z(x,u,v)\ne\varnothing$, and thus, at least
one of the vertices $s,t$ exists. If $s$ ($t$, respectively) exists,
then its gate $s'$ ($t'$, respectively) in $C$ belongs to the path $P$. 
Since $T$ does not satisfy the non-clashing condition for $B_1$ and $B_2$, if $s$ ($t$, respectively) exists, 
then $s \notin B_2$ ($t \notin B_2$, respectively), and thus, $z \neq s$ ($z \neq t$, respectively).

\begin{claim}
  $z' \in I(u',s')\cup I(v',t')$
\end{claim}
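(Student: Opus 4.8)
By Claim~\ref{notinterval}, $z'$ lies on the complementary $(u',v')$-path $P$ of $C$ and $I(x,z')\cap\{u',v'\}\neq\varnothing$. I treat the case $u'\in I(x,z')$ and prove $z'\in I(u',s')$; the case $v'\in I(x,z')$ is symmetric (and, as the argument shows, then $t$ exists), giving $z'\in I(v',t')$. Note $z'\neq u'$ since $z'\notin I(x,u')$. Because $z\notin B_1=B_{r_1}(x)$, we have $d(x,z)\geq r_1+1$, so choose $w\in I(x,z)$ with $d(x,w)=r_1+1$. The geodesic from $x$ to $z$ reaches the gate $z'$ of $z$ in the gated cycle $C$ along $I(x,z')$, which passes through $u'$; hence $w$ either lies on the $(u',z')$-subpath of $P$ or hangs off $C$ at $z'$, so its gate $w'$ in $C$ lies on the $(u',z')$-subpath of $P$, $u'\in I(x,w)$, $w'\notin I(x,u')\cup I(x,v')$, and $d(x,w)=r_1+1$. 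Thus $w\in Z^u(x,u,v)$, so $Z^u(x,u,v)\neq\varnothing$, the vertex $s$ exists, and by the maximality in its choice $d(u',s')\geq d(u',w')$.

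\textbf{Case $C\subseteq B_1$.} Then $z'\in C\subseteq B_1$, so $d(x,z')\leq r_1<r_1+1=d(x,w)$, which forces $w$ to lie strictly beyond $z'$ on $I(x,z)$; hence $w'=z'$ and $d(u',s')\geq d(u',z')$. Both $z'$ and $s'$ lie on $P$ with $u'\in I(x,z')$ and $u'\in I(x,s')$; a routine argument inside the cycle shows that any vertex $q$ of $P$ with $u'\in I(x,q)$ is within cycle-distance $\tfrac12|C|$ of $u'$, so $d(u',q)$ equals the length of the $(u',q)$-subpath of $P$. Consequently $z'$ and $s'$ are ordered along $P$ by their distances to $u'$, and $d(u',s')\geq d(u',z')$ places $z'$ on the $(u',s')$-subpath of $P$, i.e., $z'\in I(u',s')$.

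\textbf{Case $C\nsubseteq B_1$.} By the observation following the definition of $T^-$, here $s=s'$ and $t=t'$ belong to $C$, lie on $P$ (on the $u'$- and $v'$-sides respectively), and $d(x,s)=d(x,t)=r_1+1$; deleting $\{s',t'\}$ from $\KKK$ separates $x$ from the arc $Q$ of $P$ strictly between $s'$ and $t'$ together with everything attached to it. Since $\diam(B_1)=\diam(B_2)$ and $T^+(B_1)=\{u,v\}\subseteq B_2$, the pair $\{u,v\}$ is diametral for $B_2$, so by Corollary~\ref{diam2} applied to $B_2$ the center $y$ of its minimal ball lies on $C(u,v)$ and on a $(u',v')$-geodesic of $C$; one checks (using again $\diam(B_1)=\diam(B_2)$ and $x\in I(u,v)$) that $y$ is not separated from $x$ by $\{s',t'\}$. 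Now assume $z'\in Q$. Since $z\in B_2=B_{r_2}(y)$ and $z'$ is the gate of $z$ in $C$ with $y\in C(u,v)$, one gets $z'\in B_2$; moreover, reaching $Q$ from $y$ one must cross $s'$ (or $t'$), so $s'\in I(y,z')$ (say), whence $d(y,s')\leq d(y,z')\leq r_2$, i.e., $s=s'\in B_2$, contradicting $s\notin B_2$. Hence $z'\notin Q$, so $z'$ lies on the $(u',s')$-subpath or the $(v',t')$-subpath of $P$; as $u'\in I(x,z')$ and $z'\neq u'$, it is the former, and $z'\in I(u',s')$ (this subpath is a geodesic because $u'\in I(x,s')$).

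\textbf{Main obstacle.} The delicate points are purely about the geometry of the cycle $C$: verifying that "lying on the same side of $P$" plus a comparison of distances from $u'$ forces betweenness, and, in the case $C\nsubseteq B_1$, pinning down the location of the minimal-ball center $y$ relative to $C$ so that $y$ cannot sit on the separated arc $Q$ (the awkward sub-case being a balanced cycle). This is exactly where the equality $\diam(B_1)=\diam(B_2)$, together with Corollary~\ref{diam2} forcing $\{u,v\}$ to be a diametral pair of $B_2$ and $y\in C(u,v)$, is essential.
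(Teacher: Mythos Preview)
Your case $C\subseteq B_1$ is correct and coincides with the paper's argument (including the ``routine'' cycle-geometry step that the paper glosses over).

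In the case $C\nsubseteq B_1$, however, you leave a genuine gap. Your argument hinges on the assertion that $y$ is not separated from $x$ by $\{s',t'\}$, which you dismiss with ``one checks'' and even flag yourself as the ``awkward sub-case''. This is not automatic: a priori the minimal-ball center $y$ could lie on the arc $Q$ of $P$ strictly between $s'$ and $t'$, and then your inference $s'\in I(y,z')$ fails. The gap can be filled (if $y\in Q$ then any shortest $y$--$u$ path in $C$ passes through $s'$ or $t'$, forcing $s\in B_2$ or $t\in B_2$), but you do not do it, and once you do you have essentially reproduced the paper's idea in a roundabout way.

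The paper avoids this entirely with a one-line argument that you missed: if $z'$ lay strictly between $s'$ and $t'$ on $P$, then removing $s=s'$ and $t=t'$ would disconnect $z$ from $u$ in $\KKK$; but $u,z\in B_2$, $s,t\notin B_2$, and balls induce connected subgraphs, a contradiction. Using $u$ rather than $y$ sidesteps any analysis of where $y$ sits on $C$.
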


\begin{proof}
  Suppose first that $C \nsubseteq B_1$. If the claim does not hold,
  then $z'$ belongs to the $(s',t')$-path of $C$ that does not contain
  $x$.  This implies that removing $s=s'$ and $t=t'$ disconnects $z$
  and $u$. Since $u,z \in B_2$ and $s,t \notin B_2$, we obtain a
  contradiction.

  Suppose now that $C \subseteq B_1$.  Since
  $z' \notin I(x,u') \cup I(x,v')$, $u'$ and $v'$ disconnect $x$ from
  $z'$ and~$z$. Consequently, $I(x,z) \cap \{u',v'\} \neq \varnothing$.
  Without loss of generality, assume that $u' \in I(x,z')$ and
  consider a vertex $w \in I(x,z)$ such that $d(x,w) = r_1+1$. Since
  $C \subseteq B_1$, $w \notin C$ and $z'$ disconnects $x$ and $w$,
  \textit{i.e.}, $w' = z'$. Therefore, $w \in Z^u(x,u,v)$, and thus, by the
  definition of $s$, we have $d(u',s') \geq d(u',w')$. Consequently,
  $w'=z' \in I(u',s')$, and we are done.
\end{proof}

Without loss of generality, assume that $z' \in I(u',s')$.  By Claim 
\ref{notinterval}, $z'\ne u'$. Since
$u'\in I(z',x)\subseteq I(s',x)$, $z'\in I(x,z)$, and $s'\in I(x,s)$,
we obtain that $d(x,z)=d(x,u')+d(u',z')+d(z',z)$ and
$d(x,s)=d(x,u')+d(u',z')+d(z',s')+d(s',s)$. Since
$d(x,z)\geq r_1+1=d(x,s)$, from the two previous equalities we
conclude that $d(z',z)\geq d(z',s')+d(s',s)$. Recall that the vertex
$y$ belongs to $C(u,v)$, and thus, either $y=y'\in C$ or
$y' \in \{u',v'\}$. Consequently, $d(y,z)=d(y,y')+ d(y',z')+d(z',z)$.
Since $s\notin B_2$,
$r_2<d(y,s)=d(y,y')+d(y',s')+d(s',s) \leq
d(y,y')+d(y',z')+d(z',s')+d(s',s) \leq d(y,y')+d(y',z')+d(z',z)=d(y,z)
\leq r_2$, a contradiction (by the triangle inequality and the
inequality $d(z',s')+d(s',s)\le d(z',z)$ established above). This
contradiction establishes that under the conditions of the
theorem, we must have $B_1=B_2$.

The second assertion is a consequence of the first one since two balls with distinct diameters are
distinguished by the diametral pair of the ball with the larger diameter. Finally, $\VCD(\B(\KKK)) \leq 3$ follows from the fact that
trees of cycles $\KKK$ cannot be contracted to $K_4$ and the result of \cite{BouTh,ChEsVa} that if a graph $G$ does not contain $K_{d+1}$ as a
minor, then $\VCD(\B(G))\le d$.
\end{proof}

We continue with the proof of the theorem for $\delta$-hyperbolic graphs.

\hyperbolic*

\begin{proof}
  The proof is similar to the proof for trees.  Clearly,
  $T(x,r)\subseteq B_r(x)$ for any vertex $x$ and any radius $r$, and
  thus, the map $T$ satisfies the inclusion condition.
  Hence, since $|T(x,r)|=2$ for any ball $B_r(x)$ with $r\geq 1$, $T$ is non-clashing for any pair of balls that includes a ball of radius $0$. Now, assume that
  $B_{r_1}(x)$ and $B_{r_2}(y)$ are not $\delta$-identical, and
  suppose, without loss of generality, that there exists
  $z \in B_{r_2}(y) \setminus B_{r_1+2\delta}(x)$. Let
  $T(y,r_2) = \{u,v\}$ and note that, since $\HHH$ is $\delta$-hyperbolic, we have
  $d(x,z) + d(u,v) \leq \max \{d(x,u) + d(v,z), d(x,v) + d(u,z) \} +
  2\delta$. Without loss of generality, assume that
  $d(x,z) + d(u,v) \leq d(x,v) + d(u,z) + 2\delta$. Since
  $u,z \in B_{r_2}(y)$ and since $\{u,v\}$ is a diametral pair of
  $B_{r_2}(y)$, $d(u,z) \leq d(u,v)$. Consequently,
  $r_1 + 2\delta < d(x,z) \leq d(x,v) + 2 \delta$, and thus,
  $v \notin B_{r_1}(x)$. Therefore,
  $v \in (B_{r_2}(y) \setminus B_{r_1}(y)) \cap (T(x,r_1) \cup T(y,r_2))$,
  establishing that $T$ satisfies the non-clashing condition
  for $B_{r_1}(x)$ and $B_{r_2}(y)$.
\end{proof}
\end{document}